\documentclass[12pt]{article}

\usepackage{amsmath,amssymb,amsthm,mathtools}
\usepackage{bm}
\usepackage{geometry}
\usepackage{hyperref}
\usepackage[dvipsnames]{xcolor}
\definecolor{darkblue}{rgb}{0, 0, 0.5}
\hypersetup{
     colorlinks = true,
     citecolor = Maroon,
     urlcolor = darkblue,
     linkcolor = darkblue
}
\usepackage{setspace}
\usepackage{enumerate}
\usepackage{graphicx}
\usepackage{booktabs}
\usepackage{floatrow}
\usepackage[round]{natbib} 
\usepackage{comment}
\usepackage{algorithm}
\usepackage{algpseudocode}

\newcommand{\E}{\mathbb{E}}
\newcommand{\Cov}{\mathrm{Cov}}

\newcommand{\T}{\mathsf{T}} %

\DeclareMathOperator*{\argmin}{arg\,min}
\usepackage{xspace}
\newcommand{\est}{\ensuremath{\textnormal{IVFR}}\xspace}
\newcommand{\CLP}{CLP\xspace}
\newcommand{\MP}{MP\xspace}

\newcommand{\Prob}{\mathbb{P}} 
\newcommand{\R}{\mathbb{R}} 
\newcommand{\calP}{\mathcal{Y}} 
\newcommand{\calQ}{Q(\mathcal{Y})} 
\newcommand{\dd}{\mathrm{d}} 
\newcommand{\toP}{ \overset{p}{\to}}

\newtheorem{theorem}{Theorem}
	\newtheorem{lemma}{Lemma} 
	\newtheorem{proposition}{Proposition}

    \newtheorem{corollary}{Corollary}
	\newtheorem{assumption}{Assumption}
	\newtheorem{example}{Example}
	\newtheorem{remark}{Remark}
	\newtheorem*{example*}{Example}

\title{IV regression with distribution-valued outcomes\thanks{We are grateful to Tim Christensen and Xiaohong Chen for feedback. We also thank seminar participants at the University of Michigan Econometrics Workshop for helpful comments. W\"uthrich is also affiliated with CESifo. Van Dijcke gratefully acknowledges support from the Lawrence Klein Econometrics Fellowship and the Rackham Predoctoral Fellowship at the University of Michigan. This research was supported in part through computational resources and services provided by Advanced Research Computing at the University of Michigan, Ann Arbor.  We thank Grammarly, GPT 5.2--5.5, Claude Opus 4.6--4.7, and \url{coarse.ink} for language editing, proofreading, and coding assistance. The authors are solely responsible for all content.}}
\author{David Van Dijcke\thanks{Department of Economics, University of Michigan. Email: \url{dvdijcke@umich.edu}}\quad \quad Kaspar W\"uthrich\thanks{Department of Economics, University of Michigan. Email: \url{kasparwu@umich.edu}}} 
\date{\today}

\begin{document}

\maketitle

\begin{abstract}
We develop IV Fr\'echet regression (\est), an instrumental-variable (IV) method for settings where the outcome is an entire distribution. Framing the problem as an IV regression in 2-Wasserstein space, \est extends global Fr\'echet regression to the case with endogenous covariates. \est projects IV-weighted quantile curves onto the space of valid distributions and then recovers the corresponding regression coefficient functions. The projection provably reduces the estimation error in finite samples and guarantees valid fitted distributions. We show that the \est estimator converges weakly to a mean-zero Gaussian process and establish the validity of a multiplier bootstrap procedure for uniform inference. In simulations, the projection reduces the integrated mean squared error (IMSE) by up to 63\% relative to existing methods. Revisiting the effects of Chinese import competition on the wage distribution within commuting zones, the proposed method produces 9--10\% narrower confidence bands than existing methods. Using our novel uniform confidence bands, we find no evidence that import competition reduced wages at the very bottom of the distribution, but only between the 10th and 35th quantile. We also revisit the effect of county food stamp programs on the county's birth weight distribution and find no significant effects.

\medskip
\noindent \textbf{Keywords:} instrumental variables, distributional outcomes, Fr\'echet regression, Wasserstein distance, quantile functions, monotone projection

\medskip
\noindent \textbf{JEL Codes:} C14, C21, C26, C36

\end{abstract}

\newpage
\onehalfspacing

\section{Introduction} 

This paper introduces an instrumental variables (IV) framework for settings where the outcome of interest is a distribution. A leading example is when researchers are interested in the effect of a treatment that varies at the group level on the distribution of an outcome within groups. Analyzing effects on the entire distribution of an outcome within groups, rather than on simple group-level averages, provides a better understanding of the impact of a treatment and allows for studying important distributional issues, such as inequality and tail risk. For instance, building on \citet{autor2013china}, \citet{chetverikov2016iv} (\CLP henceforth) estimated the effect of import competition on the distribution of local wages and found that low-wage earners were most affected by increased import competition. In this application, the treatment (import competition) varies at the group (commuting zone (CZ)) level, and the outcome is the distribution within groups (the wage distribution within a CZ). Other examples include policies implemented at the firm, hospital, school, county, state, or country level that affect the entire distribution of employee, patient, student, or population outcomes (see also \citet{van2025regression}). In many applications, these group-level treatments are endogenous, which motivates the use of IVs for estimating causal effects. 

We propose a flexible IV regression framework allowing researchers to estimate the effects of real-valued treatments on distribution-valued outcomes, using real-valued instruments. We refer to the proposed procedure as IV Fr\'echet regression (\est), because it can be viewed as an extension of the Fr\'echet framework for regression on metric spaces of \citet{petersen2019frechet} to the IV setting. We will use the grouped data terminology throughout the paper, but emphasize \est is very general and can be applied to panel data as well as any other setting where the outcomes of interest are distribution-valued. 

We consider a linear structural model for the (known or estimated) group-level quantile function corresponding to the distribution-valued outcome of interest. We show that this structural quantile function is identified as the solution to a Fr\'echet IV regression problem in 2-Wasserstein space. This identification result suggests a computationally straightforward plug-in estimation strategy: (i) construct plug-in estimates of the IV weights, (ii) compute IV-weighted average quantile curves at each covariate value, (iii) project each curve onto the space of valid quantile functions---solving the sample Fr\'echet IV problem---and (iv) recover coefficient functions by OLS. We show that the resulting estimator converges weakly to a zero-mean Gaussian process and propose uniform inference procedures based on the multiplier bootstrap.

A key feature of \est is the projection step (iii), which has several desirable properties. First, it ensures that each estimated conditional distribution is a valid probability measure and that the estimator has the interpretation of an IV-weighted Wasserstein barycenter. Second, it provably decreases the error in estimating the structural quantile function in finite samples, which in turn leads to a finite-sample improvement in coefficient estimation. Finally, under a weak monotonicity condition on the population IV-weighted quantile functions, the projection step does not affect the asymptotic distribution, so that bootstrap inference remains valid.

We evaluate the finite-sample performance of \est in Monte Carlo simulations. The projection reduces IMSE by up to 63\% relative to existing methods under weak to moderate instruments, with the gain diminishing to below 1\% under stronger instruments. Our pointwise and uniform inference confidence bands exhibit good finite sample coverage.

We demonstrate the usefulness of \est in two empirical applications. First, we revisit the  distributional wage effects of Chinese import competition in \CLP. We find that \est produces 9--10\% narrower pointwise confidence bands than CLP. Using our new uniform confidence bands, however, we find no evidence that wages are reduced at the very bottom of the distribution, but only between the 10th and 35th quantile. Second, we revisit the estimation of the causal effect of food stamps on the county birth weight distribution in \citet[][\MP henceforth]{melly2025minimum} and find no significant effects across the distribution. 

\paragraph{Literature.} We contribute to several strands of the literature. Our first contribution is to the literature on distributional effects in panel
and grouped data settings
\citep[e.g.,][]{koenker2004quantile,canay2011simple,galvao2015efficient, galvao2016smoothed,chetverikov2016iv,galvao2020unbiased, chen2023group, gunsilius2023distributional,  pons2024quantile, torous2024optimal, chen2025quantile,melly2025minimum}. We show that when the object of interest is the effect of a group-level variable on a group-level distribution, the problem can be formulated naturally as a regression with distribution-valued outcomes. This perspective places existing grouped quantile IV estimators in a Fr\'echet regression framework: in the absence of individual-level covariates, the estimators of \CLP and \MP  \citep[building on][]{hausman1981panel} coincide with the unprojected version of our estimator. The Fr\'echet formulation shows that the estimand is an IV-weighted Wasserstein barycenter, yields fitted distributions that are valid by construction, and leads to a natural projection step that improves the finite-sample performance while leaving the first-order asymptotic distribution
unchanged. It also allows us to derive novel theoretical results: we establish the properties of \est under misspecification of the linear quantile model and propose a multiplier bootstrap for constructing uniform confidence bands. See Section  \ref{sec:relationship_existing_literature} for a formal discussion of the relationship of our model and \CLP and \MP and Section \ref{sec:simulation} for a simulation comparison.

Our second contribution is to the literature on Fr\'echet regression \citep{petersen2019frechet} and to the recent literature extending  Fr\'echet regression and related methods to causal inference settings \citep[e.g.,][]{ lin2023causal, katta2024interpretable, kurisu2024geodesic, hoshino2024functional, bhattacharjee2025doubly, van2025regression,zhou2025geodesic, kurisu2025regression, kurisu2026lee}. We contribute to this literature by providing a linear IV framework for estimating treatment effects of endogenous group-level treatments; as well as by establishing asymptotic normality and deriving uniform confidence bands for this novel estimator, complementing related results for global and local Wasserstein-Fr\'echet regression under exogeneity \citep{ petersen2021wasserstein,van2025regression, xu2025wasserstein, song2026inference}.

Third, we contribute to the literature on quantile models with endogeneity more broadly \citep[e.g.,][among many others]{abadie2002instrumental,chernozhukov2005iv,chernozhukov2006instrumental,chernozhukov2008instrumental,lee2007endogeneity,lee2007nonparametric,frolich2013unconditional,kaplan2017smoothed,vuong2017counterfactual,decastro2019smoothed,wuthrich2019closed,wuthrich2020comparison,kaido2021decentralization,beyhum2023instrumental, holovchak2025distributional}. The goal of this literature is to estimate the effects on the quantiles of real-valued scalar outcomes, whereas we consider the functional IV regression case in which the outcomes themselves are distribution-valued.

A last paper that does not fit neatly in these three groups is \citet{qu2024distributionally}, who use the Wasserstein space to study IV methods, but focus on distributional robustness for classical IV assumptions rather than distributional treatment effects.

\paragraph{Notation.}
We use the following notation throughout. For a vector $v \in \R^p$, $\|v\| = (v^{\!\T} v)^{1/2}$ denotes the Euclidean norm. For a positive semi-definite matrix $A$, $\|v\|_A = (v^{\!\T} A v)^{1/2}$ is the $A$-weighted norm. For functions $f \in L^2([0,1])$, $\|f\|_{L^2} = (\int_0^1 f(u)^2\, du)^{1/2}$ is the standard $L^2$ norm. The 2-Wasserstein distance between two distributions with quantile functions $Q_1, Q_2$ is $W_2(\mu,\nu) = \|Q_1 - Q_2\|_{L^2}$. For a measurable function \(f\), let \(P f:=\E[f(W)]\) and let
\(\mathbb P_n f:=n^{-1}\sum_{j=1}^n f(W_j)\) denote the empirical measure
applied to \(f\). We also write $\leadsto$ for weak convergence in the sense of \citet{vaart1996weak}, $\overset{p}{\to}$ for convergence in probability, $\hat{\mathbb{G}}_\beta^*\rightsquigarrow_{\Prob}\mathbb{G}_\beta$ for conditional weak convergence in probability \citep[Section~2.9]{vaart1996weak}: $\sup_{h\in BL_1}|E^x[h(\hat{\mathbb{G}}_\beta^*)]-E[h(\mathbb{G}_\beta)]|\toP 0$, where $BL_1$ is the set of functions $\ell^\infty([a,b])^{p+1}\to\R$ with Lipschitz constant and supremum norm both bounded by one. Finally, we write $\ell^\infty(T)$ for the space of bounded functions on $T$ equipped with the supremum norm. For a closed convex set $K$ in a Hilbert space, $\Pi_K$ denotes the metric projection onto $K$.

\section{Setup and model}

Consider a setting with $n$ groups indexed by $j=1, \dots, n$. We are interested in estimating the effect of group-level variables $X_j \in \R^p$ with support $\mathcal{X} \subseteq \R^p$ on a distribution-valued outcome $Y_j\in \mathcal{Y}$, where $\calP$ is the space of one-dimensional cumulative distribution functions (CDFs) with finite variance. Let $Q_{Y_j} \in \calQ$ denote the quantile function corresponding to the CDF $Y_j$, where $\calQ$ is the space of quantile functions corresponding to $\calP$. Note that for a fixed quantile level $u\in (0,1)$, $Q_{Y_j}(u)$ is a real-valued random variable. We also have access to a vector of group-level IVs, which includes an intercept, $Z_j \in \R^{l+1}$, with $l\ge p$. We assume throughout that $\{(X_j,Y_j,Z_j)\}_{j=1}^n$ are sampled i.i.d. We fix $0 < a < b < 1$ and state all asymptotic results for quantile levels $u \in [a,b]$, avoiding tail quantiles.

We consider a linear model for $Q_{Y_j}$, noting that our asymptotic results will allow for misspecification,
\begin{equation}\label{eq:linear_model}
  Q_{Y_j}(u) = \beta_0(u) + \beta_1(u)^{\!\T}(X_j-\mu_X) + \eta_j(u), \quad E[\eta_j(u)]=0, \quad u\in (0,1),
\end{equation}
where $\mu_X:=E[X_j]$, $\beta_0(u)$ is a scalar intercept function, $\beta_1(u)$ is a $p$-dimensional vector of slope functions, and $\eta_j(u)$ is an unobserved error term. Given that we include an intercept, the assumption that $E[\eta_j(u)]=0$ is a normalization. In what follows, we will often refer to $q(x,u):=\beta_0(u) + \beta_1(u)^{\!\T}(x-\mu_X)$ as the structural quantile function.

Writing the model in terms of demeaned variables $\tilde{X}_j=X_j-\mu_X$ will be convenient for our theoretical analysis.\footnote{When we use the demeaned coefficient and instrument vectors, we drop the intercept from those vectors to guarantee non-singularity of the covariance matrices.} Note that due to the linearity, we can always reparametrize the model and write it in terms of $X_j$ as
\begin{equation*}
  Q_{Y_j}(u) = \tilde\beta_0(u) + \beta_1(u)^{\!\T}X_j + \eta_j(u),
\end{equation*}
where $\tilde\beta_0(u)=\beta_0(u)-\beta_1(u)^{\!\T}\mu_X$. Finally, define $\mathbf{X}_j \coloneqq \left( 1, (X_j - \mu_X)^T\right)^T$.

\section{IV Fr\'echet regression}

In this section, we discuss identification and estimation in the \est model.

\subsection{Population problem and identification}
\label{sec:population_fivr_problem_identification}
We obtain identification of the structural quantile function $q(x,u)$ using the vector of IVs, $Z_j$. To do so, we impose the following standard assumptions. Let $\tilde{Z}_j:=Z_j-E[Z_j]$.

\begin{assumption}[Instrument exogeneity]\label{ass:instrument_exogeneity}
Under correct specification of the linear quantile model \eqref{eq:linear_model},
\[
E\!\left[\tilde Z_j \eta_j(u)\right]=0
\qquad\text{for all } u\in(0,1).
\]
\end{assumption}

This is the standard IV orthogonality condition for the structural error in
\eqref{eq:linear_model}. Under misspecification, we define the pseudo-true
coefficient functions by the population 2SLS projection given below; in general,
the corresponding pseudo-residual need not satisfy $E[\tilde Z_j \xi_j(u)]=0$
in the overidentified case.

Define the population matrices $\Sigma_{ZZ} := E[\tilde Z_j\tilde Z_j^{\T}]$, $\Sigma_{ZX} := E[\tilde Z_j \tilde{X}_j^{\T}]$, and $\Sigma_{XX} := E[\tilde{X}_j \tilde{X}_j^{\!\T}]$.
\begin{assumption}[Full rank] \label{ass:full_rank}
$\Sigma_{ZZ}$ is positive definite and $\Sigma_{ZX}$ has full column rank $p$.
\end{assumption}

To motivate \est, consider first the case where $X_j$ is exogenous, $E[\tilde{X}_j\eta_j(u)]=0$. In this case, we can use conventional global Fr\'echet regression \citep{petersen2019frechet} to obtain the structural quantile function $q(x,u)$. \citeauthor{petersen2019frechet} motivate global Fr\'echet regression as a generalization of standard linear regression. Specifically, suppose that $Y_j\in \mathbb{R}$ and the conditional expectation function is linear, $m(x):=E\left[Y_j\mid X_j=x\right]=\alpha_0+\alpha_1^{\!\T}(x-\mu_X)$. Then, the formal characterization of the conditional expectation is $m(x)=\argmin_{m\in \mathbb{R}}E\left[s(X_j,x)d_E^2(Y_j,m)\right]$, where $d_E(\cdot)$ is the standard Euclidean norm and $s(z,x)=1+(z-\mu_X)^{\!\T} \Sigma_{XX}^{-1}(x-\mu_X)$ are the linear regression weights. The idea of global Fr\'echet regression is to replace the Euclidean norm $d_E$ with a more general norm $d$ suitable for outcomes taking values in general metric spaces.

Here, we extend Fr\'echet regression to settings where $X_j$ is endogenous and the outcomes are distribution-valued. To motivate our approach, suppose first that $Y_j\in \mathbb{R}$ satisfies the linear IV model $Y_j=\alpha_0 + \alpha_1^{\!\T}(X_j-\mu_X) + \eta_j$ with $E[\eta_j]=0$. A key observation underlying \est is that the linear model $m(X_j):=\alpha_0 + \alpha_1^{\!\T}(X_j-\mu_X)$ can be obtained by rewriting the canonical two-stage least-squares (2SLS) estimator as
$$
m(x)=\argmin_{m \in \mathbb{R}} E\left[s(Z_j,x)d_E^2(Y_j,m) \right],
$$
where
\begin{equation}\label{eq:s-weight-fivr_intercept_main}
   s(Z_j, x) := 1 + (x-\mu_X)^{\!\T} (\Sigma_{ZX}^{\!\T}\Sigma_{ZZ}^{-1}\Sigma_{ZX})^{-1} \Sigma_{ZX}^{\!\T}\Sigma_{ZZ}^{-1} \tilde Z_j.
\end{equation}
See the proof of Lemma~\ref{lem:identification_FIVR} in Appendix \ref{sec:proofs_appendix} for a derivation.

Now, replace the Euclidean norm with the 2-Wasserstein distance $W_2$ suitable for distribution-valued outcomes, which for two one-dimensional distributions $Y_1, Y_2 \in \mathcal{Y}$ is defined as,
\[
W_2(Y_1, Y_2) \coloneqq \left( \int_{0}^1 \left( Q_{Y_1}(u)  - Q_{Y_2}(u) \right)^2 \dd u \right)^{\frac12}.
\]

Then, the above characterization of IV-Fr\'echet regression leads to the following instrumental-variables version of Fr\'echet regression,
\begin{equation}\label{eq:frechet-objective-fivr_intercept_pop}
  m^{\est}(x) = \argmin_{m\in\calP} E\left[s(Z_j,x)\,W_2^{2}(Y_j,m)\right],
\end{equation}
where, by construction, for each given $x$, $m^{\est}(x)$ is a distribution function, i.e., it lies in $\calP$.

It follows from the following Proposition that the quantile function associated with $m^{\est}(x)$, denoted $Q_{m^{\est}(x)}$, is the $L^2$-projection of the IV-weighted quantile function, $\psi_x(u) \coloneqq E[s(Z_j,x)Q_{Y_j}(u)]$, onto the space of quantile functions $\calQ$:
\begin{equation}\label{eq:projection-fivr_intercept_main}
   Q_{m^{\est}(x)}(u) := \Pi_{\mathcal{Q}}\left(E[s(Z_j,x)Q_{Y_j}(u)]\right).
\end{equation}
\begin{proposition}[Fr\'echet characterization of the IV-weighted quantile function]\label{prop:frechet_equiv}
Suppose Assumption~\ref{ass:full_rank} holds and the IV-weighted Fr\'echet
functional is finite, i.e.,
\[
E\!\left[|s(Z,x)|\,\|Q_Y\|_{L^2(0,1)}^2\right]<\infty.
\]
Then the minimizer of the IV-weighted Fr\'echet functional over distributions is
the $L^2$-projection of the IV-weighted mean quantile curve
$\psi_x(u)=E[s(Z,x)Q_Y(u)]$ onto the quantile cone,
\[
Q\left(\argmin_{w\in\calP}E\!\left[s(Z,x)W_2^2(Y,w)\right]\right)
=
\Pi_{\mathcal Q}(\psi_x),
\]
where $Q(\cdot)$ maps CDFs to their corresponding quantile function.
In particular, if $\psi_x(\cdot)$ is itself a valid quantile function, then
$\psi_x$ is the a.e.-unique minimizer.
\end{proposition}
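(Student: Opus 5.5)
The plan is to rewrite the Fréchet objective as a squared $L^2$ distance plus a constant, and then use the projection theorem on a closed convex cone. Fix $x$ and write $s=s(Z,x)$. By construction of \eqref{eq:s-weight-fivr_intercept_main}, $E[s]=1$, because $E[\tilde Z]=0$ and the remaining term is linear in $\tilde Z$. For any $w\in\calP$ with quantile function $Q_w\in\calQ\subset L^2(0,1)$, expand pointwise in $u$ and integrate:
\[
E\!\left[s\,W_2^2(Y,w)\right]=E\!\left[s\|Q_Y\|_{L^2}^2\right]-2\int_0^1 E[sQ_Y(u)]Q_w(u)\,\dd u+E[s]\,\|Q_w\|_{L^2}^2 .
\]
Since $E[s]=1$, this equals
\[
\|Q_w-\psi_x\|_{L^2}^2+C_x,\qquad C_x:=E\!\left[s\|Q_Y\|_{L^2}^2\right]-\|\psi_x\|_{L^2}^2 .
\]
The constant $C_x$ does not depend on $w$. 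Note that $s$ may be negative, so the objective is not a genuine barycenter objective with nonnegative weights. The identity still holds because only $E[s]=1$ is used.

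The integrability hypothesis is needed in three places.
\begin{itemize}
\item It makes $E[s\|Q_Y\|^2]$ finite.
\item It gives $\psi_x\in L^2$, since by Jensen/Cauchy--Schwarz $\|\psi_x\|_{L^2}^2\le E[|s|]\,E[|s|\|Q_Y\|^2]$. This requires $E|s|<\infty$, which holds because $\tilde Z$ has second moments under Assumption~\ref{ass:full_rank}.
\item It justifies Fubini for exchanging $E$ and $\int\dd u$ in the cross term, via $E\int|s||Q_Y||Q_w|\le (E|s|\|Q_Y\|^2)^{1/2}(E|s|)^{1/2}\|Q_w\|$.
\end{itemize}
This bookkeeping is the main technical point I expect to need care.

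Minimizing over $w\in\calP$ is therefore equivalent to minimizing $\|q-\psi_x\|_{L^2}$ over $q\in\calQ$. Here $\calQ$ is the set of nondecreasing (left-continuous) square-integrable functions, identified a.e. I would verify three properties of $\calQ$.
\begin{itemize}
\item \emph{Convexity.} A convex combination of nondecreasing functions is nondecreasing.
\item \emph{Cone property.} $\calQ$ is closed under nonnegative scaling.
\item \emph{Closedness in $L^2$.} An $L^2$ limit has a subsequence converging a.e. The a.e. limit of nondecreasing functions is nondecreasing off a null set, and so admits a nondecreasing left-continuous version.
\end{itemize}

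The Hilbert projection theorem then gives a unique minimizer $\Pi_{\calQ}(\psi_x)\in\calQ$, unique up to a.e. equality. It corresponds to a unique distribution in $\calP$: its quantile function is square integrable, so the distribution has finite variance. Applying $Q(\cdot)$ to the argmin yields the displayed identity.

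Finally, if $\psi_x$ is itself in $\calQ$, then $\Pi_{\calQ}(\psi_x)=\psi_x$, and the objective value there is $C_x$. Any other minimizer would have $\|Q_w-\psi_x\|=0$, which gives the a.e.-uniqueness claim.
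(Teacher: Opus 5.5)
Your proposal is correct and follows essentially the same route as the paper. You use Fubini (justified by the integrability hypothesis) to expand the weighted objective, use $E[s(Z,x)]=1$ to complete the square into $\|Q_w-\psi_x\|_{L^2}^2$ plus a $w$-free constant, and identify the minimizer as the $L^2$ projection onto the closed convex cone of quantile functions. Your extra bookkeeping (showing $\psi_x\in L^2$, checking that the cone is closed, and confirming that the projection exists, is unique and corresponds to a finite-variance distribution) fills in details the paper leaves implicit.
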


By the linearity of $s(Z_j,x)$ in $x$, the function $\psi_x(u)$ is affine in $x$ 
and can be written as $\psi_x(u) = \beta_0^{\text{unc}}(u) + 
\beta_1^{\text{unc}}(u)^{\!\T}(x - \mu_X)$, where $\beta^{\text{unc}}(u)$ are the standard 2SLS coefficients applied quantile by quantile. Also, define the pseudo-true residual as,
\[
\xi_j(u):=Q_{Y_j}(u)-\mathbf X_j^{\!\T}\beta^{\mathrm{unc}}(u).
\]
By construction, $\beta^{\mathrm{unc}}(u)$ satisfies the population 2SLS normal
equations. In the overidentified case, however, this does not generally imply
$E[\tilde Z_j\xi_j(u)]=0$. A different GMM weighting matrix would in general
define a different pseudo-true coefficient function. Throughout the paper, we focus on
the 2SLS choice, but other linear GMM estimators could be used as well.

In population and under correct specification, the projection $\Pi_{\mathcal{Q}}$ is inactive and hence $Q_{m^{\est}(x)}(u)$ coincides with the solution to the linear 2SLS regression quantile by quantile, i.e., $E[s(Z_j,x)Q_{Y_j}(u)]$. Moreover, the functional object $m^{\est}(x)$ has the interpretation of being the (signed) IV-weighted Wasserstein barycenter of the group-level distributions $Y_j$. In other words, for a given $x$, $m^{\est}(x)$ is the IV-weighted average of each group's distribution in probability space---which implies it can rightfully be called the ``average'' instrumented distribution. For more discussion on (conditional) Wasserstein barycenters, we refer to \citet{agueh2011barycenters}, \citet{fan2024conditional}, and  \citet{panaretos2020invitation}.

The \est coefficients are then equal to,
\begin{align}
& \beta^\est_0(u) = E[Q_{m^\est(X)}(u)] \\
& \beta^\est_1(u) = \Sigma_{XX}^{-1} E[ \tilde{X}_j Q_{m^\est(X)}(u)],
\end{align}
the OLS coefficients corresponding to the projected quantile function $Q_{m^\est}(x)(u)$. 

The following lemma shows that, under correct specification, $Q_{m^\est(x)}(u)$ and $(\beta^\est_0(u),$ $ \beta^\est_1(u))$ recover the structural quantile function $q(x,u)$ and coefficients $(\beta_0(u),\beta_1(u))$, respectively. 

\begin{lemma}[Identification \est]\label{lem:identification_FIVR}
Suppose that Assumptions \ref{ass:instrument_exogeneity} and \ref{ass:full_rank} 
as well as the linear quantile model in Eq. \eqref{eq:linear_model} hold. Then 
for all $x \in \mathcal{X}$ and $u \in (0,1)$,
\[
Q_{m^{\est}(x)}(u) = \psi_x(u) = q(x,u),
\]
and $\beta^{\est}(u) = \beta^{\text{unc}}(u) = \beta(u)$.
\end{lemma}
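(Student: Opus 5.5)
The plan is to compute $\psi_x(u)=E[s(Z_j,x)Q_{Y_j}(u)]$ directly under model \eqref{eq:linear_model}, and show it equals $q(x,u)$. Then Proposition~\ref{prop:frechet_equiv} takes care of the projection, and an OLS computation handles the coefficients.

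\textbf{Step 1: computing $\psi_x(u)$.} Fix $u$ and write $A:=(\Sigma_{ZX}^{\T}\Sigma_{ZZ}^{-1}\Sigma_{ZX})^{-1}\Sigma_{ZX}^{\T}\Sigma_{ZZ}^{-1}$, which is well defined by Assumption~\ref{ass:full_rank}. Then $s(Z_j,x)=1+(x-\mu_X)^{\T}A\tilde Z_j$. Substituting $Q_{Y_j}(u)=\beta_0(u)+\beta_1(u)^{\T}\tilde X_j+\eta_j(u)$ and expanding gives two pieces.
\begin{itemize}
\item The constant part contributes $E[Q_{Y_j}(u)]=\beta_0(u)$, because $E[\tilde X_j]=0$ and $E[\eta_j(u)]=0$.
\item The weighted part contributes $(x-\mu_X)^{\T}A\,E[\tilde Z_j Q_{Y_j}(u)]$.
\end{itemize}
Since $E[\tilde Z_j]=0$, we have $E[\tilde Z_j Q_{Y_j}(u)]=\Sigma_{ZX}\beta_1(u)+E[\tilde Z_j\eta_j(u)]$. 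The last term vanishes by Assumption~\ref{ass:instrument_exogeneity}. Because $A\Sigma_{ZX}=I_p$, the weighted part equals $(x-\mu_X)^{\T}\beta_1(u)$.

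Hence $\psi_x(u)=\beta_0(u)+\beta_1(u)^{\T}(x-\mu_X)=q(x,u)$. Comparing with the affine representation $\psi_x=\beta_0^{\mathrm{unc}}+\beta_1^{\mathrm{unc}\T}(x-\mu_X)$, whose coefficients are read off from the same expansion, gives $\beta^{\mathrm{unc}}(u)=\beta(u)$. The same calculation with a generic scalar $Y_j$ verifies the 2SLS-as-Fr\'echet characterization claimed in the main text.

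\textbf{Step 2: the projection is inactive.} For each fixed $x\in\mathcal X$, the function $u\mapsto q(x,u)=E[Q_{Y_j}(u)\mid X_j=x]-E[\eta_j(u)\mid X_j=x]$ can be read off the model. The cleanest route is to observe that $q(x,\cdot)$ is a valid quantile function: model \eqref{eq:linear_model} is posited on $\calQ$, so the structural quantile function is assumed to lie in $\mathcal Q$. It is then left-continuous, nondecreasing, and square integrable by the finite-variance assumption.

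This is the main obstacle. Monotonicity of $q(x,\cdot)$ does not follow automatically from the moment conditions, so I would state explicitly that the interpretation of $q(x,\cdot)$ as a structural quantile function means $q(x,\cdot)\in\mathcal Q$ for $x\in\mathcal X$. I would also verify the integrability condition of Proposition~\ref{prop:frechet_equiv}. It follows from $|s(Z,x)|\le 1+C\|\tilde Z\|$, finite second moments, and Cauchy--Schwarz, which I would assume as part of the setup.

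With that in hand, the "in particular" clause of Proposition~\ref{prop:frechet_equiv} gives $Q_{m^{\est}(x)}=\Pi_{\mathcal Q}(\psi_x)=\psi_x=q(x,\cdot)$.

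\textbf{Step 3: the coefficients.} Substitute $Q_{m^{\est}(X_j)}(u)=\beta_0(u)+\beta_1(u)^{\T}\tilde X_j$ into the definitions. This gives $\beta_0^{\est}(u)=\beta_0(u)$ because $E[\tilde X_j]=0$. It also gives $\beta_1^{\est}(u)=\Sigma_{XX}^{-1}\Sigma_{XX}\beta_1(u)=\beta_1(u)$, which requires $\Sigma_{XX}$ to be invertible. That invertibility follows from the full column rank of $\Sigma_{ZX}$ in Assumption~\ref{ass:full_rank}: if $\Sigma_{XX}v=0$, then $\tilde X_j^{\T}v=0$ a.s., so $\Sigma_{ZX}v=0$ and hence $v=0$.

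Together with Step 1, this yields $\beta^{\est}(u)=\beta^{\mathrm{unc}}(u)=\beta(u)$.
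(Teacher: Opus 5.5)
Your proposal is correct and follows the paper's proof. You expand $\psi_x(u)$ under the linear model, use $E[\tilde Z_j]=0$, $A\Sigma_{ZX}=I_p$ (in your notation) and $E[\tilde Z_j\eta_j(u)]=0$ to obtain $\psi_x=q(x,\cdot)$, and then use monotonicity of $q(x,\cdot)$, which the paper likewise takes as holding ``by construction,'' to make the projection inactive. Your Step~3 (the OLS algebra, with invertibility of $\Sigma_{XX}$ deduced from the rank of $\Sigma_{ZX}$) fills in the coefficient claim that the paper leaves implicit. The only slip is that your Cauchy--Schwarz check of the integrability condition in Proposition~\ref{prop:frechet_equiv} needs $E\|Q_Y\|_{L^2}^4<\infty$ together with $E\|Z\|^2<\infty$, not just second moments; the paper does not verify this condition either.
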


Under correct specification, the projection is inactive and
$\psi_x(\cdot)=q(x,\cdot)$ is a valid quantile function for all $x\in\mathcal X$.
Hence $\Pi_{\mathcal Q}(\psi_x)=\psi_x$ and the Fr\'echet IV problem
\eqref{eq:frechet-objective-fivr_intercept_pop} recovers the structural quantile
function exactly.

Under misspecification, $\psi_x(\cdot)$ may violate monotonicity for some 
$x \in \mathcal{X}$, so $\Pi_{\mathcal{Q}}(\psi_x)$ may differ from $\psi_x$ 
and consequently $\beta^{\est}(u)$ may differ from $\beta^{\text{unc}}(u)$. The 
\est coefficients remain well-defined and interpretable: for each $x$, the 
Fr\'echet IV solution $\Pi_{\mathcal{Q}}(\psi_x)$ is the closest valid probability 
distribution to $\psi_x$ in $W_2$-distance, and $\beta^{\est}(u)$ parametrizes the best linear approximation to these projected distributions.

\subsection{What does \est identify?}
\label{sec:identification_interpretation}

Here, we discuss the interpretation of the coefficient $\beta_1(u)$. The formal results underlying this discussion are presented in Appendix \ref{app:individual_decomposition}. The coefficient $\beta_1(u)$ in the group-level quantile regression identifies the \emph{causal effect} of group-level treatments on the group outcome distribution. When treatment is assigned at the group level, this group-level effect is the natural causal parameter. It captures the total response of the group’s distribution to the treatment, incorporating all channels through which the treatment operates. Note that this causal interpretation does not require any rank invariance assumption, as it operates at the group and not at the individual level. 

The group-level model \eqref{eq:linear_model} is agnostic about the within-group structure.  The group quantile function $Q_{Y_j}(u)$ is the primitive; no assumptions are made about what generates it within the group. In particular, the within-group distribution can arise from an arbitrary mixture of individual-level outcomes, and the composition of the group may itself respond to treatment. The parameter $\beta_1(u)$ captures all these channels.

To illustrate, suppose that the treatment of interest is binary. Using potential outcomes notation, denote the counterfactual group-level CDFs without and with treatment as $Y(0)$ and $ Y(1)$, respectively. Then, under exogeneity and assuming the linear model holds, the causal parameter \est targets is the \textit{total causal effect},
\[
\beta_1(u) = E[Q_{Y(1)}(u) - Q_{Y(0)}(u)],
\]
the average of the group-specific quantile treatment effects $Q_{Y(1)}(u) - Q_{Y(0)}(u)$. This object has a direct counterfactual interpretation: it is the difference between the group quantile functions with and without the treatment, averaging over group-specific responses. Thus, the parameter $\beta_1(u)$ is a natural analog of the average treatment effect in the standard setting where the outcome is scalar-valued.

This is different from the \emph{direct effect} denoted by $\delta(u)$, which we show in Appendix \ref{app:individual_decomposition} is the effect identified by approaches that control for individual-level covariates within groups (such as \CLP and \MP with individual controls). The direct effect 
$\delta(u)$ holds the group composition (e.g., the share of low-skilled and high-skilled workers) constant and measures only the within-type response (e.g., the wage response of low-skilled vs.\ high-skilled workers). By contrast, the total effect $\beta(u)$ captures both the within-type response and the compositional response, capturing changes in who is in the group. For the direct effect to identify a \textit{causal} effect on individuals, a rank invariance assumption is needed. 

The total causal effect is a policy-relevant parameter in many applications. If a policymaker evaluates the impact of import competition on the local wage distribution, they typically care about what actually happens to wages in the affected regions, not a hypothetical scenario where wages change but workers cannot move. Indeed, the latter is often not a feasible policy. The total effect captures the equilibrium displacement of the distribution, and correctly characterizes the counterfactual effect on the treated unit, i.e., the entire group. 

That said, the direct effect $\delta(u)$ can be informative when the goal is to isolate the within-type response, for instance in decomposition exercises that aim to separate wage structure changes from compositional shifts \citep[e.g.,][]{chernozhukov2013inference}. The distinction between $\beta_1(u)$ and $\delta(u)$ is related to the distinction between unconditional and conditional quantile effects in the distributional treatment effects literature \citep{firpo2009unconditional, frolich2013unconditional}; see also Remark 1 in \MP for a related discussion.

\begin{example}[Total vs.\ direct effect: import competition]
\label{ex:total_vs_direct}
Consider a commuting zone exposed to an import competition shock, as in \CLP. The shock is assigned at the commuting-zone level, but workers within the zone  differ in their exposure. Non-college workers are more likely to be employed in  manufacturing and therefore experience larger wage losses, while college workers  are relatively insulated.

An estimand of the direct effect $\delta(u)$, which controls for individual education, targets a direct,  composition-fixed effect: how wages change within education groups, holding the  mix of workers fixed. This is useful for isolating an average over within-type wage responses, but it is not the same as the effect of the shock on the commuting zone's actual wage distribution. The latter also reflects how the shock changes who remains in the zone, which sectors shrink, and where different worker types fall in the post-shock wage distribution.

This distinction arises because quantiles of a mixture are not mixtures of quantiles. Even if the wage loss at every given quantile were identical across different education groups, the direct effect $\delta(u)$ would vary with $u$ because the composition of workers at each group quantile $u$ varies. If the shock also induces  selective out-migration of non-college workers, the composition-fixed effect and the effect on the observed CZ wage distribution can diverge further.

The \est estimand is the latter object: the effect of import competition on the commuting zone's distribution as a group-level outcome. This is the relevant 
counterfactual for a policymaker asking what happened to the local wage distribution in the affected region. A composition-fixed estimand answers a different question: what  would have happened to within-type wages if the worker composition had been held 
fixed. Both objects are useful, but they correspond to different counterfactuals.
\end{example}
\subsubsection{Relationship to existing estimators}
\label{sec:relationship_existing_literature}

We now compare $\est$ to the estimators of \CLP and \MP. There are two cases.

\paragraph{Without individual-level covariates.} When \CLP and \MP do not control for individual characteristics, their estimand coincides with ours: both identify the total effect $\beta(u)$. In this case, working directly with random distributions at the group level, as the proposed \est estimator does, has several advantages. First, it allows us to avoid imposing any restrictions at the individual level, requiring only that the group quantile functions are linear in group-level covariates, while the individual-level structure can be fully nonparametric. Moreover, our framework also explicitly allows for misspecification of this group-level linear structure. By contrast, the statistical results in \CLP and \MP depend on the linear structure at the individual as well as group level. Second, our Fr\'echet regression formulation clarifies that estimation can proceed in a single step: regressing the group-level quantile functions on the treatment. Third, by projecting the quantile functions and recovering coefficients by OLS, our estimator can exploit the functional nature of the data to improve precision, as formally shown in Theorem~\ref{thm:pw_ols_improvement} below. Fourth, our estimator for $\beta(u)$ is guaranteed to produce valid quantile functions at every observed covariate value, while \CLP and \MP are not.\footnote{One could, alternatively, achieve the last two points by using monotone rearrangement instead of projection \citep{chernozhukov2010quantile}. However, unlike projection, rearrangement does not naturally arise in the Fr\'echet regression framework, and one would lose the (IV-weighted) barycenter interpretation.}

Indeed, the simulations of Section~\ref{sec:simulation} show projection gains of up to 63\% in IMSE, driven by finite-sample non-monotonicity in the fitted quantile functions. In our replication of \CLP's empirical application in Section~\ref{sec:empirical}, the fitted conditional wage quantile functions implied by unprojected \est violate monotonicity in around 2\% of CZ--decade cells, so the projected point estimates are close to the unprojected ones. However, the projected bootstrap still delivers meaningfully tighter confidence bands. 

\paragraph{With individual-level covariates.} When \CLP and \MP control for individual characteristics $W_{ij}$, they identify the direct (within-type) effect $\delta(u)$. As discussed above, these are different estimands, and which estimand is of interest depends on the application and the question being asked. 

Finally, we emphasize that, in the presence of individual-level covariates, the instrument exogeneity assumption in \CLP and \MP is \textit{not} weaker than the analogous assumption in \est without individual-level covariates. When the instrument used to identify the group-level treatment varies
at the group level, our condition $\E[\tilde Z_j\eta_j(u)] = 0$ is equivalent to the orthogonality condition between the instrument and the
group-level unobservable imposed by \CLP and \MP. Individual-level covariates change the estimand from a total group-level effect to a conditional/direct effect but do not weaken the required exclusion restriction for the group-level
IV variation. This differs from the standard logic that conditioning on covariates can make the instrument exogeneity assumption more credible. See Appendix~\ref{sec:iv_decomposition} for a formal derivation. Note that, due to its functional nature, \est does not support the inclusion of individual-level covariates. Formulating a version of distribution-valued regression with individual-level covariates is an interesting avenue for future research, see also the distribution-on-distribution regression approaches of \citet{oliva2013distribution} and \citet{ghodrati2022distribution}.

\subsection{The \est estimator}\label{sec:div_estimator}

The \est estimator solves the sample analogue of the population Fr\'echet IV problem \eqref{eq:frechet-objective-fivr_intercept_pop} at each observed covariate value, and recovers the linear parametrization by OLS:
\begin{equation}\label{eq:div_estimator}
\hat{\beta}^{\est}(u) = (\hat{\mathbf{X}}^{\!\T}\hat{\mathbf{X}})^{-1}\hat{\mathbf{X}}^{\!\T} \Pi_{\mathcal{Q}}(\hat{\psi}_{X_j})(u),
\end{equation}
where $\hat{\mathbf{X}}$ is the $n \times (p+1)$ matrix with rows $(1, (X_j - \hat{\mu}_X)^{\!\T})$, and $\hat{\psi}_x(u) := \frac{1}{n}\sum_{i=1}^n \hat{s}_i(Z_i, x)\, Q_{Y_i}(u)$ is the IV-weighted average quantile curve at $x$, with plug-in IV weights
$$
\hat{s}_{j}(Z_j,x) := 1 + (x-\hat{\mu}_X)^{\!\T} (\hat{\Sigma}_{ZX}^{\!\T}\hat{\Sigma}_{ZZ}^{-1}\hat{\Sigma}_{ZX})^{-1} \hat{\Sigma}_{ZX}^{\!\T}\hat{\Sigma}_{ZZ}^{-1} (Z_j-\hat{\mu}_Z).
$$

Since $\hat{\psi}_x(u)$ is linear in $x$, it defines the unprojected coefficient functions \[\tilde{\beta}(u) = (\tilde{\beta}_0(u), \tilde{\beta}_1(u)^{\!\T})^{\!\T}\] with $\hat{\psi}_x(u) = \tilde{\beta}_0(u) + \tilde{\beta}_1(u)^{\!\T}(x - \hat{\mu}_X)$. As above, because the IV weights can be negative, $\hat{\psi}_{X_j}(\cdot)$ may not be monotone. The projection $\Pi_{\mathcal{Q}}$ maps each IV-weighted average onto the space of valid quantile functions, yielding $\hat{Q}(X_j, u) := \Pi_{\mathcal{Q}}(\hat{\psi}_{X_j})(u)$---the closest valid probability distribution (in $W_2$) to the IV-weighted average at covariate value $X_j$. This is the sample IV-Fr\'echet barycenter, the sample analogue of the population projection \eqref{eq:projection-fivr_intercept_main}. It can be computed as an isotonic regression by the pool-adjacent-violators algorithm (PAVA) \citep{ayer1955empirical, miles1959complete, kruskal1964nonmetric}. If only samples from the distribution $Y_j$ are available, we replace $Q_{Y_j}(u)$ by the empirical quantile function $\widehat{Q}_{Y_j}(u)$. We show in Section \ref{sec:empirical_quantiles} below that under a weak condition on the relationship between within- and across-group sample sizes, this additional first-stage estimation does not affect our asymptotic results.

The projection guarantees that each $\hat{Q}(X_j, \cdot)$ is a valid quantile function and is closer to the true structural quantile function $q(X_j, \cdot)$ than the unprojected $\hat{\psi}_{X_j}$ in all $L^p$ norms (Lemma~\ref{lemma:improvement} below). The OLS step in \eqref{eq:div_estimator} then recovers the linear parametrization from the projected quantile functions. 

\section{Theoretical properties of \est}

\subsection{Finite-sample properties}

In this section, we study the finite sample properties of the \est projection step and show that it decreases the estimation error. The first lemma shows that the projection step decreases the estimation error in the quantile function. This is a classical property of isotonic projections \citep{Robertson1988} and is reproduced from \citet[Lemma 2]{van2025regression}.

\begin{lemma}[Improvement in estimation: quantile function]\label{lemma:improvement}
Suppose that $\hat{Q}$ is an estimator of some true quantile function $Q_0 \in \mathcal{Q}$. Then $\hat{Q}^* = \Pi_{\mathcal{Q}}(\hat{Q})$ satisfies $\| \hat{Q}^*-Q_0\|_{L^p} \leq\|\hat{Q}-Q_0\|_{L^p}$ for all $p \in[1, \infty]$.
\end{lemma}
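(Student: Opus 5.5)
The idea is to prove a stronger pointwise statement and read off every $L^p$ bound from it. Concretely, we show that the isotonic projection is a contraction toward any monotone target in every $L^p$ norm. The standard route is the max--min characterization of $L^2$ isotonic regression on $[0,1]$ with Lebesgue measure (the continuum analogue of PAVA; see \citet{Robertson1988}):
\[
\hat Q^*(u)=\sup_{s\le u}\ \inf_{t\ge u}\ \frac{1}{t-s}\int_s^t \hat Q(v)\,\dd v
\]
(a.e., with the usual conventions at the endpoints). Equivalently, $\hat Q^*$ is the right derivative of the greatest convex minorant of the primitive $F(u)=\int_0^u \hat Q$. Since $\hat Q\in L^2$ for the projection onto $\calQ$ to be defined, these averages are finite. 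If one prefers to avoid the continuum formula, one can first work with step functions on a grid, where PAVA gives exactly this formula, and then pass to the limit.

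\textbf{Step 1 (order preservation).} From the max--min formula, $\Pi_{\mathcal Q}$ is monotone in the partial order: $\hat Q_1\le \hat Q_2$ a.e. implies $\Pi_{\mathcal Q}\hat Q_1\le \Pi_{\mathcal Q}\hat Q_2$ a.e. It also commutes with adding constants: $\Pi_{\mathcal Q}(\hat Q+c)=\Pi_{\mathcal Q}(\hat Q)+c$.

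\textbf{Step 2 (pointwise sandwich).} Take the case $p=\infty$ first. Write $\delta=\|\hat Q-Q_0\|_{L^\infty}$, so $Q_0-\delta\le \hat Q\le Q_0+\delta$. Since $Q_0\pm\delta$ is already monotone, it is fixed by $\Pi_{\mathcal Q}$. Step 1 then gives $Q_0-\delta\le \hat Q^*\le Q_0+\delta$, which is the $L^\infty$ claim.

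For finite $p$ we need a localized version of this sandwich. Fix $u$ and let $(s,t)\ni u$ be the level interval (block) of $\hat Q^*$ containing $u$. On this block $\hat Q^*$ equals the average of $\hat Q$. If $\hat Q^*(u)>Q_0(u)$, the max--min formula together with the monotonicity of $Q_0$ bounds $\hat Q^*(u)-Q_0(u)$ by an average of $(\hat Q-Q_0)_+$ over a sub-interval $[s',t]$ of the block lying to the right of $u$. Symmetrically, if $\hat Q^*(u)<Q_0(u)$, the gap is bounded by an average of $(\hat Q-Q_0)_-$ over a sub-interval to the left of $u$.

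\textbf{Step 3 (from pointwise to $L^p$).} Here we work within each block $B$ on which $\hat Q^*\equiv c$. The goal is
\[
\int_B |c-Q_0|^p\le \int_B |\hat Q-Q_0|^p .
\]
Our preferred route is to use that $c$ is the mean of $\hat Q$ on $B$ and that $\hat Q$, restricted to $B$, majorizes the constant $c$ in the convex (Hardy--Littlewood--P\'olya) order. This majorization comes from the block characterization: the partial averages of $\hat Q$ from the left end of $B$ are $\ge c$, and those from the right end are $\le c$. Because $Q_0$ is nondecreasing, the function $v\mapsto |y-Q_0(v)|^p$ is convex in $y$ and "co-monotone" in the appropriate sense. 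A rearrangement and Karamata-type inequality then gives the display above. Summing over blocks, and noting that $\hat Q^*=\hat Q$ off the blocks, yields the result for $p\in[1,\infty)$.

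\textbf{Main obstacle.} The hard part is Step 3, namely justifying the Karamata/majorization inequality when $Q_0$ varies over the block and is not constant. What I would try first is a discretization. Approximate $\hat Q$ and $Q_0$ by step functions on a uniform grid; there the claim is the classical finite-dimensional result that isotonic regression reduces $\sum|\hat q_i-q_{0,i}|^p$ for every monotone vector $q_0$ (\citet{Robertson1988}, via the PAVA pooling argument, where each pooling step is a convex averaging that cannot increase the distance to a monotone vector). Then pass to the limit using the $L^2$-continuity (indeed nonexpansiveness) of $\Pi_{\mathcal Q}$ and Fatou's lemma. The $L^\infty$ case follows either from Step 2 or by letting $p\to\infty$.
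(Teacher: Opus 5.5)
Your proposal is correct. It is more self-contained than the paper, which gives no argument of its own: it simply invokes the classical contraction property of isotonic regression \citep{Robertson1988}, as reproduced in \citet[Lemma 2]{van2025regression}. You instead reprove that property directly in $L^2([0,1])$.

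The order-preservation sandwich handles $p=\infty$ in one line. Applied with an arbitrary (not necessarily monotone) center $g$, it even gives $\|\Pi_{\mathcal Q}f-\Pi_{\mathcal Q}g\|_\infty\le\|f-g\|_\infty$. This is the sup-norm contraction the paper later borrows from \citet{van2025regression}.

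Your discretization fallback for finite $p$ also works. To make it clean, take the approximations to be cell averages on the grid. These keep $Q_0$ nondecreasing, and the cell-average operator is an $L^p$-contraction. Moreover, $\Pi_{\mathcal Q}$ maps a grid step function to the grid PAVA output: averaging the projection over cells preserves monotonicity and does not increase the $L^2$ distance, so by uniqueness the projection is itself a grid step function. Fatou along an a.e.-convergent subsequence then finishes.

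What the paper's citation buys is brevity. Moreover, for the one case it actually invokes, $p=2$ in the proof of Theorem~\ref{thm:pw_ols_improvement}, the claim is just nonexpansiveness of a metric projection onto a closed convex set containing $Q_0$.

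Two smaller points. First, the obstacle you flag in Step 3 dissolves without any Karamata inequality. Let $\Phi(y)=|y|^p$ and let $\phi$ be a nondecreasing selection of its subdifferential. On a block $B=(s,t)$ on which $\hat Q^*\equiv c$, convexity gives
\[
\int_s^t\bigl\{\Phi(\hat Q-Q_0)-\Phi(c-Q_0)\bigr\}\,\dd v\;\ge\;\int_s^t a(v)\,\bigl(\hat Q(v)-c\bigr)\,\dd v,
\qquad a(v):=\phi\bigl(c-Q_0(v)\bigr),
\]
and $a$ is nonincreasing because $Q_0$ is nondecreasing. Set $D(v):=\int_s^v(\hat Q-c)\,\dd w$. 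Your block characterization says precisely that $D\ge 0$ and $D(s)=D(t)=0$, so integration by parts turns the right-hand side into $-\int_s^t D\,\dd a\ge 0$. On a grid this is Abel summation. The same computation justifies your assertion that each PAVA pooling step cannot increase the distance to a monotone vector. In the continuum it applies verbatim whenever $Q_0$ is bounded on $B$, which is automatic unless $B$ touches an endpoint of $[0,1]$.

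Second, the localized pointwise bound in Step 2 has its sides reversed. If $\hat Q^*(u)>Q_0(u)$, the relevant piece is the left part $[s,u]$ of the block, where the average of $\hat Q$ is at least $c$ and $Q_0\le Q_0(u)$. Hence $c-Q_0(u)\le\frac{1}{u-s}\int_s^u(\hat Q-Q_0)\,\dd v$, and the case $\hat Q^*(u)<Q_0(u)$ uses $[u,t]$. That step is not needed anyway, since pointwise bounds over $u$-dependent intervals do not integrate to an $L^p$ bound without losing a constant, so you can drop it.
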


Applied to \est, this result implies that at each observed $X_j$, the projected $\hat{Q}(X_j, \cdot) = \Pi_{\mathcal{Q}}(\hat{\psi}_{X_j})$ is closer to any valid quantile function than the unprojected $\hat{\psi}_{X_j}$, in every $L^p$ norm. 

Next, we show that this improvement translates into finite sample improvements for the proposed estimator of the coefficient functions, which are often the main objects of interest. Throughout, let $b=(b_0,b_1)$ with $b_0\colon[0,1]\to\R$ and $b_1\colon[0,1]\to\R^p$ denote \emph{reference coefficients} such that $q_b(X_j,\cdot):=b_0(\cdot)+b_1(\cdot)^{\!\T}(X_j-\hat\mu_X)\in\mathcal Q$ for all $j=1,\ldots,n$. The results below hold for any such $b$, and no assumption on the linear model~\eqref{eq:linear_model} is needed. In practice, $b$ is set equal to the estimation target. Under correct specification of~\eqref{eq:linear_model}, setting $b_0(u)=\beta_0(u)+(\hat{\mu}_X-\mu_X)^{\!\T}\beta_1(u)$ and $b_1=\beta_1$ gives $q_b(X_j,u)=q(X_j,u)\in\mathcal Q$, so the bounds measure the estimation error for the true coefficients $(\beta_0,\beta_1)$. Under misspecification, a natural choice is the pseudo-true parameter $b = \beta^{\mathrm{unc}}$: under Assumption~\ref{asspt:average_strictness} below, it produces strictly increasing quantile functions for all $x\in\mathcal X$ with slope at least $\kappa>0$, so $q_b(X_j,\cdot)\in\mathcal Q$ for all $j$ with probability approaching one as $n\to\infty$. In either case, the projection brings the \est coefficients closer to the target than their unprojected counterparts.
\begin{theorem}[Improvement in estimation: joint coefficients]\label{thm:pw_ols_improvement}
Let $b=(b_0,b_1)$ be any target coefficients such that $q_b(X_j,\cdot)\in\mathcal Q$ for all $j=1,\ldots,n$. Then for any realization of the sample,
\begin{equation}\label{eq:pw_improvement}
\|\hat{\beta}_0^{\est} - b_0\|^2_{L^2} + \|\hat{\beta}_1^{\est} - b_1\|^2_{\hat{\Sigma}_{XX}, L^2} \;\leq\; \|\tilde{\beta}_0 - b_0\|^2_{L^2} + \|\tilde{\beta}_1 - b_1\|^2_{\hat{\Sigma}_{XX}, L^2},
\end{equation}
where $\hat{\Sigma}_{XX} = \frac{1}{n}\sum_{j=1}^n (X_j - \hat{\mu}_X)(X_j - \hat{\mu}_X)^{\!\T}$ and, for a vector-valued function $f\colon [0,1] \to \R^p$,
\[
\|f\|_{\hat{\Sigma}_{XX}, L^2}^2 := \int_0^1 f(u)^{\!\T}\, \hat{\Sigma}_{XX}\, f(u)\, \dd u.
\]
\end{theorem}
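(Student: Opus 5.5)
The plan is to reduce the coefficient-space inequality to a pointwise (in $u$) statement about fitted values, and then invoke Lemma~\ref{lemma:improvement} at each observed covariate value. The key observation is that the weighted norm on the left-hand side of \eqref{eq:pw_improvement} is exactly the empirical $L^2$ norm of the implied fitted quantile functions. For any coefficient function $c=(c_0,c_1)$, write $\hat{\mathbf X}_j=(1,(X_j-\hat\mu_X)^{\!\T})^{\!\T}$. Since $\frac1n\sum_j (X_j-\hat\mu_X)=0$, the cross term vanishes and
\[
\frac1n\sum_{j=1}^n \big(\hat{\mathbf X}_j^{\!\T}c(u)\big)^2 = c_0(u)^2 + c_1(u)^{\!\T}\hat\Sigma_{XX}c_1(u).
\]
Integrating over $u$ therefore gives $\|c_0\|_{L^2}^2+\|c_1\|^2_{\hat\Sigma_{XX},L^2}=\frac1n\sum_j\|\hat{\mathbf X}_j^{\!\T}c\|_{L^2}^2$. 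We apply this with $c=\hat\beta^{\est}-b$ and with $c=\tilde\beta-b$.

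Next we identify the fitted values. Let $H=\hat{\mathbf X}(\hat{\mathbf X}^{\!\T}\hat{\mathbf X})^{-1}\hat{\mathbf X}^{\!\T}$ be the hat matrix, and stack the projected curves as the vector $\hat Q(u)=(\hat Q(X_j,u))_j$. Then the fitted values of $\hat\beta^{\est}$ are $H\hat Q(u)$. Similarly, the fitted values of $b$ are the vector $q_b(u)=(q_b(X_j,u))_j$. This vector lies in the column space of $\hat{\mathbf X}$, so $Hq_b(u)=q_b(u)$.

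Hence, for each $u$,
\[
\sum_j \big(\hat{\mathbf X}_j^{\!\T}(\hat\beta^{\est}(u)-b(u))\big)^2=\|H(\hat Q(u)-q_b(u))\|^2\le \|\hat Q(u)-q_b(u)\|^2,
\]
because $H$ is an orthogonal projection in $\R^n$. Integrating in $u$ and dividing by $n$, the left-hand side of \eqref{eq:pw_improvement} is at most $\frac1n\sum_j\|\Pi_{\mathcal Q}(\hat\psi_{X_j})-q_b(X_j,\cdot)\|_{L^2}^2$.

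By Lemma~\ref{lemma:improvement} with $p=2$, applied separately at each $j$ with $Q_0=q_b(X_j,\cdot)\in\mathcal Q$ (this is exactly where the hypothesis on $b$ is used), the bound becomes $\frac1n\sum_j\|\hat\psi_{X_j}-q_b(X_j,\cdot)\|^2_{L^2}$.

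Finally, $\hat\psi_{X_j}(u)-q_b(X_j,u)=\hat{\mathbf X}_j^{\!\T}(\tilde\beta(u)-b(u))$, because $\hat\psi_x$ is affine in $x$ with coefficients $\tilde\beta$ centered at $\hat\mu_X$. So by the norm identity from the first step, this last quantity equals the right-hand side of \eqref{eq:pw_improvement}.

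The only delicate points are bookkeeping. First, $\hat{\mathbf X}^{\!\T}\hat{\mathbf X}$ must be invertible, which holds whenever $\hat\beta^{\est}$ is defined. Second, we need measurability and integrability in $u$ so that the pointwise inequality can be integrated. I expect the main obstacle to be noticing that the centering by $\hat\mu_X$ is what makes the mixed norm exactly the empirical fitted-value norm. Once that is seen, the argument is Pythagoras combined with the contraction property of the projection.
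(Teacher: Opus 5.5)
Your proposal is correct and follows essentially the same route as the paper's proof. Both apply Lemma~\ref{lemma:improvement} with $p=2$ at each observed $X_j$ and use the OLS orthogonality; your hat-matrix contraction $\|H(\hat Q(u)-q_b(u))\|\le\|\hat Q(u)-q_b(u)\|$ is the paper's Pythagorean step with the nonnegative residual term dropped. Both then use the block-diagonal identity from centering at $\hat\mu_X$ to convert fitted-value norms into the weighted coefficient norm.
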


We emphasize that Theorem \ref{thm:pw_ols_improvement} does not assume the linear model \eqref{eq:linear_model}. The bound in Theorem \ref{thm:pw_ols_improvement} is a weighted joint improvement for estimating $(b_0, b_1)$. It does not guarantee an improvement for estimating any $b_{1,k}$ separately. The following results provide conditions under which the slope coefficients also improve individually.

To isolate individual slope coefficients, we use the Frisch--Waugh--Lovell (FWL) decomposition and the variational characterization of the isotonic projection. Let $C$ be the $n\times p$ matrix with rows $(X_j-\hat\mu_X)^{\!\T}$, write $c_k$ for its $k$-th column and $C_{-k}$ for $C$ with column $k$ removed. Define the FWL residual $r_k:=M_{-k}c_k$ with $M_{-k}:=I_n-C_{-k}(C_{-k}^{\!\T} C_{-k})^{-1}C_{-k}^{\!\T}$, let $r_{jk}$ denote its $j$-th entry, and set $J_k:=\{j\le n:r_{jk}\neq 0\}$ and $\hat v_k:=\frac{1}{n}\sum_{j=1}^n r_{jk}^2>0$. Let $\pi_k$ denote the coefficient vector from the sample linear projection of $c_k$ on $C_{-k}$, so that $c_k=C_{-k}\pi_k+r_k$. For a given target $b$, define the \emph{nuisance estimation error}
\begin{equation}\label{eq:ejk}
e_{jk}(u)
\;:=\;
\underbrace{(\tilde\beta_0-b_0)(u)}_{\text{intercept error}}
\;+\;
\underbrace{c_{j,-k}^{\!\T}
\bigl[
(\tilde\beta_{1,-k}-b_{1,-k})(u)+\pi_k(\tilde\beta_{1,k}(u)-b_{1,k}(u))
\bigr]}_{\text{nuisance slope errors}},
\end{equation}
which collects all estimation errors except the error in the $k$-th slope along the residualized variation $r_{jk}$. This decomposition is justified by the algebraic identity,
\begin{equation}\label{eq:decomp_identity}
\hat\psi_{X_j}(u)
\;=\;
q_b(X_j,u)\;+\;e_{jk}(u)\;+\;r_{jk}\bigl(\tilde\beta_{1,k}(u)-b_{1,k}(u)\bigr),
\end{equation}
which separates the unprojected fitted curve $\hat\psi_{X_j}$ into three components: the target quantile function $q_b(X_j,\cdot)$, the nuisance error $e_{jk}$, and the $k$-th slope error along the identifying variation $r_{jk}$.

\begin{proposition}[Improvement in estimation: isolated coefficients]\label{prop:coord_improvement}
Let $b=(b_0,b_1)$ be any reference coefficients such that $q_b(X_j,\cdot)\in\mathcal Q$ for all $j\in J_k$. Fix $k\in\{1,\ldots,p\}$. Then for any realization of the sample,
\begin{equation}\label{eq:coord_bound_general}
\|\hat\beta_{1,k}^{\est}-b_{1,k}\|_{L^2}^2
\;\le\;
\|\tilde\beta_{1,k}-b_{1,k}\|_{L^2}^2
\;-\;\frac{1}{n\hat v_k}\sum_{j\in J_k}\|D_{X_j}\|_{L^2}^2
\;-\;\frac{2}{n\hat v_k}\sum_{j\in J_k}\langle D_{X_j},\, e_{jk}\rangle_{L^2},
\end{equation}
where $D_{X_j}(u):=\hat{Q}(X_j,u)-\hat\psi_{X_j}(u)$ is the projection correction.
\end{proposition}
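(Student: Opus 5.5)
The plan is to write both $\hat\beta^{\est}_{1,k}$ and $\tilde\beta_{1,k}$ through the Frisch--Waugh--Lovell representation. Then I expand the squared error and control the cross term with the variational inequality of the metric projection $\Pi_{\mathcal Q}$, and the quadratic term with Cauchy--Schwarz.

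\emph{Step 1 (FWL representation).} The columns of $C$ are centered at $\hat\mu_X$, so they are orthogonal to the intercept column of $\hat{\mathbf X}$. Hence $r_k=M_{-k}c_k$ is orthogonal both to the constant and to $C_{-k}$. By FWL, the $k$-th slope from the OLS regression of any vector $(y_j)_{j\le n}$ on $\hat{\mathbf X}$ is $\frac{1}{n\hat v_k}\sum_j r_{jk}y_j$. Applying this pointwise in $u$ to $y_j=\hat Q(X_j,u)$ and to $y_j=\hat\psi_{X_j}(u)$ gives
\[
\hat\beta^{\est}_{1,k}(u)-\tilde\beta_{1,k}(u)=\frac{1}{n\hat v_k}\sum_{j\in J_k} r_{jk}D_{X_j}(u).
\]
For the unprojected curves this uses that $\hat\psi_x$ is exactly affine in $x$ with coefficients $\tilde\beta$. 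The sum may be restricted to $J_k$ because $r_{jk}=0$ off $J_k$.

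Write $\delta:=\tilde\beta_{1,k}-b_{1,k}$ and $S:=\frac{1}{n\hat v_k}\sum_{j\in J_k} r_{jk}D_{X_j}$. Then
\[
\|\hat\beta^{\est}_{1,k}-b_{1,k}\|_{L^2}^2=\|\delta\|_{L^2}^2+\frac{2}{n\hat v_k}\sum_{j\in J_k} r_{jk}\langle D_{X_j},\delta\rangle_{L^2}+\|S\|_{L^2}^2 .
\]

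\emph{Step 2 (projection inequality for the cross term).} For $j\in J_k$, $q_b(X_j,\cdot)\in\mathcal Q$ by assumption, and $\mathcal Q$ is closed and convex. The obtuse-angle characterization of $\hat Q(X_j,\cdot)=\Pi_{\mathcal Q}(\hat\psi_{X_j})$ therefore gives
\[
\langle \hat\psi_{X_j}-\hat Q(X_j,\cdot),\,q_b(X_j,\cdot)-\hat Q(X_j,\cdot)\rangle_{L^2}\le 0,
\qquad\text{equivalently}\qquad
\langle D_{X_j},\,\hat Q(X_j,\cdot)-q_b(X_j,\cdot)\rangle_{L^2}\le 0.
\]
By the identity \eqref{eq:decomp_identity}, $\hat Q(X_j,\cdot)-q_b(X_j,\cdot)=D_{X_j}+e_{jk}+r_{jk}\delta$. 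Substituting yields
\[
r_{jk}\langle D_{X_j},\delta\rangle\le-\|D_{X_j}\|^2-\langle D_{X_j},e_{jk}\rangle .
\]
Summing over $J_k$ and multiplying by $2/(n\hat v_k)$ bounds the cross term by
\[
-\frac{2}{n\hat v_k}\sum_{J_k}\|D_{X_j}\|^2-\frac{2}{n\hat v_k}\sum_{J_k}\langle D_{X_j},e_{jk}\rangle .
\]

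\emph{Step 3 (quadratic term).} Pointwise in $u$, Cauchy--Schwarz over $j$ gives
\[
\Bigl(\sum_{J_k} r_{jk}D_{X_j}(u)\Bigr)^2\le\Bigl(\sum_j r_{jk}^2\Bigr)\sum_{J_k}D_{X_j}(u)^2=n\hat v_k\sum_{J_k}D_{X_j}(u)^2 .
\]
Integrating over $u$ then gives $\|S\|_{L^2}^2\le\frac{1}{n\hat v_k}\sum_{J_k}\|D_{X_j}\|^2$. Adding the bounds from Steps 2 and 3, the two $\|D_{X_j}\|^2$ terms combine to the coefficient $-2+1=-1$, which is exactly \eqref{eq:coord_bound_general}.

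The main point needing care is Step 1. One must verify that the intercept column does not contaminate the FWL residual, which holds because the columns of $C$ are centered at $\hat\mu_X$, and that the unprojected curves reproduce $\tilde\beta$ exactly under OLS. Once that is in place, the remaining steps are the standard projection inequality and Cauchy--Schwarz, used only on $J_k$, which is precisely where validity of $q_b$ is assumed.
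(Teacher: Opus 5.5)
Your proposal is correct and follows the paper's proof essentially step for step. Both arguments write $\hat\beta^{\est}_{1,k}-\tilde\beta_{1,k}=\frac{1}{n\hat v_k}\sum_j r_{jk}D_{X_j}$ via FWL, bound the cross term using the variational inequality with $h=q_b(X_j,\cdot)$ and the decomposition identity, and bound the quadratic term by Cauchy--Schwarz. Your explicit check that centering at $\hat\mu_X$ keeps the intercept column out of the FWL residual $r_k=M_{-k}c_k$ fills in a detail the paper leaves implicit.
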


The bound~\eqref{eq:coord_bound_general} decomposes the change in $k$-th coefficient error into a \emph{projection gain} $\frac{1}{n\hat v_k}\sum_{j\in J_k}\|D_{X_j}\|^2_{L^2}$, which is always non-negative, and a \emph{nuisance cross-term} $\frac{2}{n\hat v_k}\sum_{j\in J_k}\langle D_{X_j}, e_{jk}\rangle_{L^2}$, which captures the interaction between projection corrections and nuisance estimation errors and can be negative or positive. An improvement obtains whenever the nuisance cross-term is positive, or when it is negative but does not overwhelm the projection gain. The following corollary gives a sufficient condition that eliminates the nuisance cross-term entirely, thus guaranteeing an improvement.

\begin{corollary}[Sufficient condition for coefficient-wise improvement]
\label{cor:coord_sufficient}
Under the setup of Proposition~\ref{prop:coord_improvement}, if the nuisance errors do not make the target quantile functions non-monotonic, i.e.,
\begin{equation}\label{eq:Ak_feas}
q_b(X_j,\cdot)+e_{jk}(\cdot)\in\mathcal Q
\qquad\text{for every }j\in J_k,
\end{equation}
then the nuisance cross-term in~\eqref{eq:coord_bound_general} is eliminated and
\begin{equation}\label{eq:coord_bound_clean}
\|\hat\beta_{1,k}^{\est}-b_{1,k}\|_{L^2}^2
\;\le\;
\|\tilde\beta_{1,k}-b_{1,k}\|_{L^2}^2
\;-\;\frac{1}{n\hat v_k}\sum_{j\in J_k}\|D_{X_j}\|_{L^2}^2
\;\le\;
\|\tilde\beta_{1,k}-b_{1,k}\|_{L^2}^2.
\end{equation}
\end{corollary}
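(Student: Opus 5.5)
The plan is to take Proposition~\ref{prop:coord_improvement} as given and show that, under \eqref{eq:Ak_feas}, the nuisance cross-term satisfies $\sum_{j\in J_k}\langle D_{X_j},e_{jk}\rangle_{L^2}\ge 0$. Once that holds, the first inequality in \eqref{eq:coord_bound_clean} follows by dropping a nonpositive term from \eqref{eq:coord_bound_general}. The second inequality is immediate because $\frac{1}{n\hat v_k}\sum_{j\in J_k}\|D_{X_j}\|^2_{L^2}\ge 0$, since $\hat v_k>0$. I will in fact prove the stronger termwise statement $\langle D_{X_j},e_{jk}\rangle_{L^2}\ge 0$ for every $j\in J_k$.

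The key tool is the variational characterization of the metric projection onto the closed convex cone $\mathcal Q\subset L^2(0,1)$. Write $\hat Q_j:=\hat Q(X_j,\cdot)=\Pi_{\mathcal Q}(\hat\psi_{X_j})$, so that $D_{X_j}=\hat Q_j-\hat\psi_{X_j}$. Projection onto a closed convex set gives, for every $g\in\mathcal Q$,
\[
\langle \hat\psi_{X_j}-\hat Q_j,\ g-\hat Q_j\rangle_{L^2}\le 0,
\qquad\text{equivalently}\qquad
\langle D_{X_j},\ g-\hat Q_j\rangle_{L^2}\ge 0 .
\]
Because $\mathcal Q$ is a cone, taking $g=0$ and $g=2\hat Q_j$ also yields $\langle D_{X_j},\hat Q_j\rangle_{L^2}=0$, although I do not expect to need this.

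I then take $g:=q_b(X_j,\cdot)+e_{jk}(\cdot)$, which lies in $\mathcal Q$ by \eqref{eq:Ak_feas}. The identity \eqref{eq:decomp_identity} gives $g=\hat\psi_{X_j}-r_{jk}\delta_k$, where $\delta_k:=\tilde\beta_{1,k}-b_{1,k}$. Substituting this $g$ does not directly isolate $e_{jk}$, so this is the main obstacle: I need to relate $\langle D_{X_j},e_{jk}\rangle$ to the inequality. The approach is to subtract a second instance of the inequality. Since $q_b(X_j,\cdot)\in\mathcal Q$ by hypothesis, I can also take $g'=q_b(X_j,\cdot)$, and $\mathcal Q$ is closed under addition. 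A cleaner route is to read the variational inequality as saying that $-D_{X_j}$ lies in the normal cone of $\mathcal Q$ at $\hat Q_j$. That normal cone is characterized by $\langle -D_{X_j},h\rangle\le 0$ for all $h\in\mathcal Q$, together with $\langle D_{X_j},\hat Q_j\rangle=0$; the first condition follows by taking $g=\hat Q_j+h\in\mathcal Q$ in the variational inequality. So $\langle D_{X_j},h\rangle\ge 0$ for every $h\in\mathcal Q$. Applying this with $h=q_b(X_j,\cdot)+e_{jk}$ gives $\langle D_{X_j},q_b+e_{jk}\rangle\ge0$, and applying it with $h=q_b(X_j,\cdot)$ gives $\langle D_{X_j},q_b\rangle\ge 0$. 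These two facts do not yet yield $\langle D_{X_j},e_{jk}\rangle\ge0$, so I must return to how the cross-term actually arises in Proposition~\ref{prop:coord_improvement}.

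That cross-term presumably comes from expanding $\langle D_{X_j},\hat\psi_{X_j}-q_b-e_{jk}\rangle$-type quantities. Concretely, $\langle D_{X_j}, r_{jk}\delta_k\rangle$ combined with the projection inequality at $g=q_b+e_{jk}$ yields $\langle D_{X_j},\hat\psi_{X_j}-r_{jk}\delta_k-\hat Q_j\rangle\ge0$, that is, $-\|D_{X_j}\|^2-\langle D_{X_j},r_{jk}\delta_k\rangle\ge 0$. My fallback, therefore, is to redo the final step of the proof of Proposition~\ref{prop:coord_improvement}. By FWL, $\hat\beta^{\est}_{1,k}-\tilde\beta_{1,k}=\frac{1}{n\hat v_k}\sum_j r_{jk}D_{X_j}$. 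Expanding $\|\hat\beta^{\est}_{1,k}-b_{1,k}\|^2=\|\delta_k+\frac{1}{n\hat v_k}\sum_j r_{jk}D_{X_j}\|^2$ and bounding the square term by Cauchy--Schwarz, the cross-term $\frac{2}{n\hat v_k}\sum_j\langle D_{X_j},r_{jk}\delta_k\rangle$ is bounded above using the projection inequality at $g=q_b+e_{jk}$, namely $\langle D_{X_j},r_{jk}\delta_k\rangle\le-\|D_{X_j}\|^2$. This produces \eqref{eq:coord_bound_clean} directly with no $e_{jk}$ term. I expect the bookkeeping of the Cauchy--Schwarz step against the $-2\|D\|^2$ gain, so that a net $-\|D\|^2$ survives, to be the delicate point.
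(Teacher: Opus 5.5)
Your fallback argument is correct and is exactly the paper's proof. You plug $h=q_b(X_j,\cdot)+e_{jk}=\hat\psi_{X_j}-r_{jk}(\tilde\beta_{1,k}-b_{1,k})\in\mathcal Q$ into the variational inequality to get $r_{jk}\langle D_{X_j},\tilde\beta_{1,k}-b_{1,k}\rangle_{L^2}\le-\|D_{X_j}\|_{L^2}^2$ for $j\in J_k$. Restrict all sums to $J_k$: for $r_{jk}=0$ this inequality would generally fail, but those terms vanish from $\hat\beta^{\est}_{1,k}-\tilde\beta_{1,k}$ anyway. The $-2$ from this cross-term against the $+1$ from the Cauchy--Schwarz bound on $\|\hat\beta^{\est}_{1,k}-\tilde\beta_{1,k}\|_{L^2}^2$ then leaves exactly $-\frac{1}{n\hat v_k}\sum_{j\in J_k}\|D_{X_j}\|_{L^2}^2$.

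Do discard your opening plan, however. The inequality $\langle D_{X_j},e_{jk}\rangle_{L^2}\ge 0$ does not follow from the projection inequality, $q_b(X_j,\cdot)\in\mathcal Q$ and \eqref{eq:Ak_feas}. For example, take $q_b(X_j,u)=u$, $e_{jk}(u)=-\epsilon u$ with $\epsilon\in(0,1)$, and $r_{jk}(\tilde\beta_{1,k}-b_{1,k})(u)=-(2-\epsilon)u$; these give $\hat\psi_{X_j}(u)=-u$, $D_{X_j}(u)=u-\tfrac12$ and $\langle D_{X_j},e_{jk}\rangle_{L^2}=-\epsilon/12$. So ``eliminated'' means the bound is rederived without that term, not that the term is signed.
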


The decomposition~\eqref{eq:decomp_identity} clarifies what condition~\eqref{eq:Ak_feas} requires: the error in the $k$-th slope along the identifying variation $r_{jk}$ must be \emph{necessary} for any monotonicity violations in $\hat\psi_{X_j}$. The nuisance errors $e_{jk}$ may erode the monotonicity margin of $q_b(X_j,\cdot)$, but as long as they do not destroy it entirely, every projection correction is fixing a violation attributable to the $\beta_{1,k}$ error. Put differently,~\eqref{eq:Ak_feas} requires $q_b(X_j,\cdot)$ to increase steeply enough to absorb whatever local decreases $e_{jk}$ introduces. Since every term in $e_{jk}$ is $O_p(n^{-1/2})$ uniformly in $u$ and $j$---the latter by Assumption~\ref{asspt:bounded_support}, which bounds $\|c_{j,-k}\|$---the condition holds asymptotically whenever the target quantile functions have slopes bounded away from zero. This is the same requirement needed for uniform convergence of the estimators, formalized in Assumption~\ref{asspt:average_strictness} below. When $p=1$, there is no partialling out ($r_{j1}=X_j-\hat\mu_X$), the nuisance slope term in $e_{j1}$ vanishes, and~\eqref{eq:Ak_feas} reduces to $q_b(X_j,\cdot)\in\mathcal Q$: the assumed monotonicity of the target.

Theorem~\ref{thm:pw_ols_improvement} and Proposition~\ref{prop:coord_improvement} sit at two ends of a spectrum. Theorem~\ref{thm:pw_ols_improvement} requires no additional assumptions beyond valid reference coefficients, but guarantees improvement only for all coefficients jointly. Proposition~\ref{prop:coord_improvement} isolates improvements coefficient by coefficient, at the cost of requiring condition~\eqref{eq:Ak_feas} on which errors drive the monotonicity violations. A natural intermediate approach is to pick a subset $S$ of coefficients for which~\eqref{eq:Ak_feas} is expected to hold. The same argument then yields an improvement result for that subset, with the nuisance error collecting only estimation errors from the intercept and the slopes outside~$S$.

\subsection{Asymptotic distribution}\label{sec:asymptotic_distribution}

In this section, we establish functional central limit theorems for the \est
estimator under weak conditions. We do not require the linear model
\eqref{eq:linear_model} to hold in population. When the full group-level quantile
functions are observed for each group, the unprojected results allow these quantile functions to be discrete. For the projected results, we impose below an average smoothness condition that rules out synchronized jumps at fixed quantile indices while still allowing individual group distributions to have atoms and flat parts. This contrasts with existing inferential results for global Wasserstein-Fr\'echet regression \citep{petersen2021wasserstein} and grouped quantile methods (e.g., \CLP; \MP), which typically impose density or pathwise smoothness conditions. Throughout this section, we work on $[a,b] \subset (0,1)$, and the projection $\Pi_{\mathcal{Q}}$ should hence be understood with respect to the relevant cone of monotone functions in $L^2([a,b])$.

\begin{assumption}[Sampling] \label{asspt:sampling}
$\{(Z_j, X_j, Y_j)\}_{j=1}^n$ is sampled i.i.d.\ from the joint distribution $F$ on $\R^l \times \R^p \times \calP$, where $\calP$ is the space of CDFs with finite second moments.
\end{assumption}

\begin{assumption}[Finite 4th moments] \label{asspt:finite_moments}
$\sup_{u \in [a,b]} E[|Q_{Y_j}(u)|^{4}] < \infty $, $E[\|Z_j\|^{4}] < \infty$, and $E[\|X_j\|^{4}] < \infty$.
\end{assumption}

Assumption~\ref{asspt:finite_moments} corresponds to the standard 4th moment condition in linear IV regression \citep[e.g.,][]{hansen2022econometrics}. 

To establish the weak convergence of the fully projected \est coefficients, we will additionally impose a boundedness restriction on the covariates.
\begin{assumption}[Bounded support] \label{asspt:bounded_support}
The support $\mathcal{X}$ of $X$ is bounded: $\sup_{x \in \mathcal{X}} \|x - \mu_X\| \leq B$ for some $B < \infty$.
\end{assumption}

Assumption \ref{asspt:sampling} requires i.i.d. sampling \textit{across} groups, but importantly leaves the within-group sampling scheme fully unrestricted. Additionally, Assumptions \ref{asspt:finite_moments} and \ref{asspt:bounded_support} are implied by the assumptions of \CLP and \MP, who assume bounded support of both $X$ and $Z$.

The next assumption requires that $\psi_x(\cdot)$ is uniformly strictly increasing on $[a,b]$ with slope $\geq \kappa$, for all $x$ in the support.

\begin{assumption}[Average strictness] \label{asspt:average_strictness}
There exists $\kappa>0$ such that for all $c<d$ with $[c,d]\subset [a,b]$ and all $x \in \mathcal{X}$,
\[
\E\!\Big[s(Z,x)\,\big\{Q_Y(d)-Q_Y(c)\big\}\Big]\;\ge\;\kappa\,(d-c).
\]
\end{assumption} 
Assumption \ref{asspt:average_strictness} is weaker than requiring every group to have a density bounded away from zero (as, e.g., in \CLP). In particular, it allows for flat parts and atoms in individual group quantile functions. The assumption only requires that, on every quantile interval, a non-negligible IV-weighted share of groups contributes a positive increment. Technically, Assumption~\ref{asspt:average_strictness} gives \(\psi_x\) a uniform monotonicity margin on \([a,b]\), so the projection is asymptotically inactive and the functional delta method applies. Note that Assumption~\ref{asspt:average_strictness} does not require smoothness. It is implied by a fixed positive IV-weighted fraction of groups locally ``moving'' (heterogeneous supports/atoms). See Remark ~\ref{rem:fang_santos} below for the case where Assumption~\ref{asspt:average_strictness} fails.

For the projected asymptotic results, we also require a mild average smoothness
condition on the cross-group average quantile process.

\begin{assumption}[Average Lipschitz quantile increments]\label{asspt:average_lipschitz}
There exists $K<\infty$ such that for all $c<d$ with $[c,d]\subset[a,b]$, and all $x \in \mathcal{X}$
\[
\E\!\left[s(Z,x) \left( Q_Y(d)-Q_Y(c) \right)\right]\le K(d-c).
\]
\end{assumption}

Assumption~\ref{asspt:average_lipschitz} is an average continuity condition in
the quantile index. It allows individual group distributions to have atoms, but
rules out a positive mass of groups having a jump at the same quantile index.
It is weaker than the pathwise Lipschitz conditions imposed in \CLP on the underlying group-specific coefficient and error
processes. Note that it is distinct from Assumption~\ref{asspt:average_strictness}. Assumption~\ref{asspt:average_strictness} gives a lower bound on the IV-weighted population slope, while
Assumption~\ref{asspt:average_lipschitz} gives an upper bound.

With these assumptions in hand, we now first establish functional CLTs for the unprojected and projected
estimators of the structural quantile function at a fixed $x$, respectively.

\begin{theorem}[CLT for the unprojected quantile function] \label{thm:convergence_unprojected}
Under Assumptions \ref{ass:full_rank}--\ref{asspt:finite_moments},
for each fixed $x\in\mathcal X$,
\[
\sqrt{n}\bigl(\hat\psi_x(\cdot)-\psi_x(\cdot)\bigr)
\leadsto
\mathbb G_x(\cdot)
\qquad\text{in }\ell^\infty([a,b]),
\]
where $\mathbb G_x$ is a tight mean-zero Gaussian process with covariance kernel
\[
\Gamma_x(u,u') = E\!\left[\phi_x(W;u)\phi_x(W;u')\right],
\]
and $\phi_x(W;u)$ is the influence function of $\hat\psi_x(u)$, given in
Appendix~\ref{sec:covariance_derivation}. Under correct specification, this
reduces to $\phi_x(W;u)=s(Z,x)\eta(u)$.
\end{theorem}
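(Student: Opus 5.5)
The plan is to write $\hat\psi_x(u)$ as a smooth function of a finite number of sample means, with one of those means indexed by $u$. Then we get a Donsker class in $u$ and apply the functional delta method (or a direct linearization) uniformly over $u\in[a,b]$.

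\textbf{Algebraic decomposition.} Expanding $\hat s_i(Z_i,x)$ gives
\[
\hat\psi_x(u)=\mathbb P_n Q_Y(u)+(x-\hat\mu_X)^{\!\T}\hat A\,\bigl(\mathbb P_n[Z Q_Y(u)]-\hat\mu_Z\,\mathbb P_n Q_Y(u)\bigr),
\]
where $\hat A:=(\hat\Sigma_{ZX}^{\!\T}\hat\Sigma_{ZZ}^{-1}\hat\Sigma_{ZX})^{-1}\hat\Sigma_{ZX}^{\!\T}\hat\Sigma_{ZZ}^{-1}$. The population analogue $\psi_x(u)$ has the same form with $A$, $\mu_X$, $\mu_Z$ and the expectations $E[Q_Y(u)]$ and $\Sigma_{ZQ}(u):=E[\tilde Z Q_Y(u)]$.

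\textbf{Finite-dimensional ingredients.} The parameters $\hat\mu_X,\hat\mu_Z,\hat\Sigma_{ZX},\hat\Sigma_{ZZ}$ are $\sqrt n$-asymptotically normal by the ordinary multivariate CLT, using the fourth moments in Assumption~\ref{asspt:finite_moments}. By Assumption~\ref{ass:full_rank}, the map to $A$ is differentiable at the population values because the inverses exist.

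\textbf{Functional ingredients.} The key step is to show that the function classes
\[
\mathcal F_1=\{w\mapsto Q_Y(u):u\in[a,b]\},\qquad \mathcal F_2=\{w\mapsto Z_k Q_Y(u):u\in[a,b]\}
\]
are $P$-Donsker. Both are indexed by a monotone family: $u\mapsto Q_Y(u)$ is nondecreasing for every $w$. I will use the fact that a class $\{f_u\}$ with $f_u$ monotone in $u$ has bracketing number of order $\varepsilon^{-2}$ in $L^2(P)$, by the standard argument (e.g.\ van der Vaart--Wellner, Example 2.5.4 / Theorem 2.7.5-type bounds):
\begin{itemize}
\item Choose grid points $a=u_0<\dots<u_m=b$ such that $E[(Q_Y(u_{i+1})-Q_Y(u_i))^2]$ is small. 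The function $u\mapsto E[Q_Y(u)^2]$ has bounded variation and is controlled by the envelope, so such a grid exists.
\item Use the brackets $[Q_Y(u_i),Q_Y(u_{i+1})]$.
\item The envelope $\sup_u|Q_Y(u)|\le |Q_Y(a)|+|Q_Y(b)|$ is square integrable.
\end{itemize}
For $\mathcal F_2$, split $Z_k=Z_k^+-Z_k^-$. Each piece times a monotone family is again monotone, with an envelope in $L^2$ by Cauchy--Schwarz and fourth moments. This yields finite bracketing entropy integrals, hence the Donsker property. Joint weak convergence of $\sqrt n(\mathbb P_n-P)$ over $\mathcal F_1\cup\mathcal F_2$ together with the finitely many moment functions then follows in $\ell^\infty$ of the union class.

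\textbf{Delta method.} The map from (finite vector, $\ell^\infty([a,b])^{l+1}$ element) to $\ell^\infty([a,b])$ given by the display above is Hadamard differentiable. It is bilinear in the function part, with coefficients that are smooth functions of the finite part, and products of a bounded function with a convergent vector are continuous in sup norm. The functional delta method gives $\sqrt n(\hat\psi_x-\psi_x)\leadsto\mathbb G_x$, a tight mean-zero Gaussian process. Its influence function $\phi_x(W;u)$ is obtained by linearizing:
\[
\phi_x(W;u)=s(Z,x)Q_Y(u)-\psi_x(u)-(X-\mu_X)^{\!\T}\beta_1^{\rm unc}(u)+(x-\mu_X)^{\!\T}\bigl[dA\,\Sigma_{ZQ}(u)-A\,\tilde Z\,E Q_Y(u)\bigr].
\]
The $-A\tilde Z\,EQ_Y(u)$ term comes from centering by $\hat\mu_Z$, and $dA$ is the derivative in the $\Sigma$'s, which produces the terms involving $\tilde Z\tilde X^{\T}-\Sigma_{ZX}$ and $\tilde Z\tilde Z^{\T}-\Sigma_{ZZ}$.

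\textbf{Correct specification.} Substituting $Q_Y(u)=\mathbf X^{\T}\beta(u)+\eta(u)$ with $E[\tilde Z\eta(u)]=0$, the terms from $\hat\Sigma_{ZX}$ cancel against the regressor part, exactly as in scalar 2SLS. The terms from $\hat\Sigma_{ZZ}$ vanish because they multiply $E[\tilde Z\eta]=0$. What remains is $s(Z,x)\eta(u)$.

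\textbf{Main obstacle.} I expect the main obstacle to be the Donsker step without any continuity of $Q_Y$ in $u$: the quantile functions may have jumps. The monotone-bracketing argument handles this precisely because it needs no smoothness, only monotonicity and an $L^2$ envelope. Tightness of the limit in $\ell^\infty([a,b])$, rather than in $C[a,b]$, is then automatic.
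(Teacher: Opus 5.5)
Your proposal is correct, but it reaches the result by a different route from the paper.

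\textbf{How the paper argues.} The paper splits $\sqrt n(\hat\psi_x-\psi_x)$ into two sums.
\begin{itemize}
\item An oracle-weight sum $n^{-1/2}\sum_j\{s(Z_j,x)Q_{Y_j}(u)-\psi_x(u)\}$. This is shown to be Donsker by a VC-subgraph argument: the quantile class is VC-subgraph and stays so after multiplying by the fixed function $s(\cdot,x)$. The uniform-entropy CLT then applies.
\item A weight-estimation sum $n^{-1/2}\sum_j\{\hat s_j(x)-s(Z_j,x)\}Q_{Y_j}(u)$. This is handled by a second-order Taylor expansion of $\theta\mapsto g(\theta,Z)$ with Hessian envelope $C_0+C_1\|Z\|$ (Lemma~\ref{lem:smooth_g_from_fullrank}), a uniform LLN for $\mathbb P_n[Q_Y(u)\nabla_\theta s(\theta_0,Z)]$, joint convergence with $\sqrt n(\hat\theta-\theta_0)$, and continuous mapping.
\end{itemize}

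\textbf{How you argue, and what it buys.} You use that $\hat s_j$ is affine in $Z_j$. This lets you write $\hat\psi_x$ exactly as a bilinear map, with smooth coefficients, of finitely many sample moments and the processes $\mathbb P_nQ_Y(\cdot)$ and $\mathbb P_n[ZQ_Y(\cdot)]$. You then prove Donsker by monotone bracketing and apply the functional delta method. This route:
\begin{itemize}
\item needs no measurability conditions and no external VC lemma;
\item replaces the Taylor-remainder and uniform-LLN steps with a map that is plainly Fr\'echet differentiable;
\item gives a compact closed form for $\phi_x$, which matches the paper's $(sQ_Y-\psi_x)+(\partial_\theta\psi_x)^{\T}m(W)$.
\end{itemize}
Your correct-specification reduction is also right. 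With $A$ as in your proposal, the key identity is $A\Sigma_{ZX}=I$. It gives $(dA)\Sigma_{ZX}=-A\,(d\Sigma_{ZX})$, and in particular the $\Sigma_{ZZ}$-direction of $dA$ annihilates $\Sigma_{ZX}\beta_1$.

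The paper's oracle/estimated-weight split has its own advantages. It would carry over to weights that are smooth but not affine in $Z$. It also produces the $\partial_\theta\psi_x$ form of the influence function, which Appendix~\ref{sec:covariance_derivation} and the later continuity lemmas reuse.

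\textbf{A repair in your bracketing sketch.} One step fails as written, and it fails exactly at the jumps you flag.

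First, small increments of $u\mapsto E[Q_Y(u)^2]$ do not control $E[(Q_Y(v)-Q_Y(u))^2]$, because signs can cancel. Use instead the nondecreasing function $G(u):=E[Q_Y(u)F]$ with $F:=|Q_Y(a)|+|Q_Y(b)|$. Since $0\le Q_Y(v)-Q_Y(u)\le F$ for $u<v$, you get $E[(Q_Y(v)-Q_Y(u))^2]\le G(v)-G(u)$.

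Second, Assumptions~\ref{ass:full_rank}--\ref{asspt:finite_moments} allow $Q_Y$ to jump at a fixed $u^*$ with positive probability. In that case no grid makes all brackets $[Q_Y(u_i),Q_Y(u_{i+1})]$ small. The fix is the one used in the classical proof for $\{\mathbf 1\{\cdot\le t\}\}$:
\begin{itemize}
\item use brackets $[Q_Y(u_i+),Q_Y(u_{i+1})]$ for $u\in(u_i,u_{i+1}]$, built from pointwise right limits (these need not belong to the class);
\item add the degenerate bracket $[Q_Y(a),Q_Y(a)]$;
\item choose the grid so that $G(u_{i+1})-G(u_i+)\le\varepsilon^2$.
\end{itemize}
This needs $O(\varepsilon^{-2})$ brackets. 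The same construction handles $Z_k^{\pm}Q_Y(u)$, with $G$ replaced by $u\mapsto E[(Z_k^{\pm})^2Q_Y(u)F]$. With this fix your argument is complete.
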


\begin{theorem}[CLT for the projected quantile function] \label{thm:convergence_projected}
Under the assumptions of Theorem~\ref{thm:convergence_unprojected} and
Assumptions~\ref{asspt:average_strictness} and
\ref{asspt:average_lipschitz}, for each fixed $x\in\mathcal X$,
\[
\sqrt{n}\left(\Pi_{\mathcal Q}(\hat\psi_x)(\cdot)-\psi_x(\cdot)\right)
\leadsto
\mathbb G_x(\cdot)
\qquad\text{in }\ell^\infty([a,b]),
\]
with the same Gaussian limit process as in
Theorem~\ref{thm:convergence_unprojected}.
\end{theorem}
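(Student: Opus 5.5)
We will show that the projection is asymptotically negligible at the $\sqrt n$ rate in sup-norm, that is,
\[
\sqrt n\,\bigl\|\Pi_{\mathcal Q}(\hat\psi_x)-\hat\psi_x\bigr\|_{\infty,[a,b]}\toP 0 .
\]
The statement then follows from Theorem~\ref{thm:convergence_unprojected} by Slutsky's lemma in $\ell^\infty([a,b])$.

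The first ingredient is a deterministic bound on the isotonic $L^2$ projection in terms of the size of monotonicity violations. We use the min--max characterization of the isotonic regression on $[a,b]$:
\[
\Pi_{\mathcal Q}(f)(u)=\sup_{c\le u}\inf_{d\ge u}\frac{1}{d-c}\int_c^d f .
\]
Define the maximal downward oscillation $V(f):=\sup_{s\le t}\{f(s)-f(t)\}_+$. Using the min--max formula we will show that for a right-continuous-type function $f$,
\[
\|\Pi_{\mathcal Q}(f)-f\|_\infty\le V(f).
\]
The reason is that every local average over $[c,d]$ containing $u$ lies within $V(f)$ of $f(u)$ from the relevant side. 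The argument is cleanest for càdlàg $f$; $\hat\psi_x$ is a finite linear combination of quantile functions, hence of bounded variation, so this is fine.

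The second ingredient bounds $V(\hat\psi_x)$. Write
\[
\hat\psi_x=\psi_x+n^{-1/2}\mathbb G_{n,x},\qquad \mathbb G_{n,x}:=\sqrt n(\hat\psi_x-\psi_x).
\]
For $s<t$, Assumption~\ref{asspt:average_strictness} gives
\[
\hat\psi_x(s)-\hat\psi_x(t)\le -\kappa(t-s)+n^{-1/2}\bigl(\mathbb G_{n,x}(s)-\mathbb G_{n,x}(t)\bigr).
\]
By Theorem~\ref{thm:convergence_unprojected}, $\mathbb G_{n,x}$ is asymptotically tight in $\ell^\infty([a,b])$. Tightness with a Gaussian limit gives asymptotic uniform equicontinuity with respect to the intrinsic semimetric $\rho(s,t)=\mathrm{sd}(\mathbb G_x(s)-\mathbb G_x(t))$. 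Here Assumption~\ref{asspt:average_lipschitz} enters: together with the moment bounds, it should give $\rho(s,t)\to0$ as $|t-s|\to0$. This rules out synchronized jumps, which would make the limit process discontinuous and produce $O_p(n^{-1/2})$ violations that do not vanish after scaling. With $\rho$ continuous in Euclidean distance, equicontinuity holds in the Euclidean sense, so for every $\epsilon>0$ there is $\delta>0$ with
\[
\limsup_n \Prob\Bigl(\sup_{|s-t|\le\delta}|\mathbb G_{n,x}(s)-\mathbb G_{n,x}(t)|>\epsilon\Bigr)<\epsilon .
\]
We then split pairs $s<t$ into two cases.
\begin{itemize}
\item If $t-s\le\delta$, the violation is at most $n^{-1/2}\epsilon$.
\item If $t-s>\delta$, the drift term is at most $-\kappa\delta$, while the noise term is $O_p(n^{-1/2})$. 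The expression is therefore negative with probability tending to one.
\end{itemize}
Hence $\sqrt n\,V(\hat\psi_x)\le\epsilon$ with probability at least $1-2\epsilon$ eventually, which gives the claim.

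The main obstacle is deriving continuity of $\rho$ from Assumption~\ref{asspt:average_lipschitz}. That assumption only bounds the IV-weighted mean increment, whereas the variance of increments involves
\[
E\bigl[\phi_x(W;t)-\phi_x(W;s)\bigr]^2 .
\]
We would first try to use monotonicity of each $Q_{Y_j}$: the increments $Q_Y(t)-Q_Y(s)$ are nonnegative and tend to zero pointwise as $t\downarrow s$ except at atoms. Dominated convergence with fourth moments then gives $L^2$ continuity at every $u$ that is not a common jump point charged with positive probability. Excluding such common jumps is what the average Lipschitz condition is meant to deliver, possibly after splitting $s(Z,x)$ into its positive and negative parts.

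If this route fails, the fallback is to use the bounded-variation structure of $\hat\psi_x$ directly. One would bound the $\sqrt n$-scaled modulus of continuity at scale $\delta$ by a bracketing argument: the class $\{s(Z,x)Q_Y(u)\}_u$ consists of products of a fixed function with monotone functions in $u$, so it has finite bracketing entropy under fourth moments. This yields the needed equicontinuity without first identifying $\rho$.
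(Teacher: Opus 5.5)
Your route is correct in structure and genuinely different from the paper's. The paper first replaces $\mathbb G_x$ by a continuous modification (Corollary~\ref{cor:cont-version}). It then notes that under Assumptions~\ref{asspt:average_strictness} and~\ref{asspt:average_lipschitz}, $\psi_x$ is Lipschitz with secant slopes at least $\kappa$, and invokes an external lemma: $\Pi_{\mathcal Q}$ is Hadamard directionally differentiable at such $\psi_x$, tangentially to $C([a,b])$, with identity derivative. The functional delta method finishes.

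You instead show directly that $\sqrt n\|\Pi_{\mathcal Q}(\hat\psi_x)-\hat\psi_x\|_\infty\toP 0$, and then apply Slutsky. This rests on the bound $\|\Pi_{\mathcal Q}f-f\|_\infty\le V(f)$ plus a drift-versus-modulus argument. The bound also follows in one line from the sup-norm contraction, $\|\Pi_{\mathcal Q}f-f\|_\infty\le 2\inf_{g\in\mathcal Q}\|f-g\|_\infty=V(f)$, which sidesteps the edge cases of the min--max formula.

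Your version is self-contained and avoids the differentiability lemma. It also delivers the stronger fact that the projection correction is $o_p(n^{-1/2})$, which is essentially what the paper re-proves, uniformly in $x$ via $C^1$ proxies, inside the proof of Theorem~\ref{thm:pw_ols_clt}. The paper's route is more modular: it is the one that extends to Remark~\ref{rem:fang_santos}, where the constraint binds and the derivative is no longer the identity.

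Both routes need the same input: continuity of the intrinsic semimetric $\rho_x$ in the Euclidean metric. This is the step you leave open, and your suggested fix would not work. Splitting $s(Z,x)$ into positive and negative parts gains nothing, because Assumption~\ref{asspt:average_lipschitz} only controls the signed weighted mean.

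The device is to use unit weights. The map $x\mapsto E[s(Z,x)\{Q_Y(d)-Q_Y(c)\}]$ is affine in $x$, $s(Z,\mu_X)\equiv 1$, and $\mu_X$ lies in the closed convex hull of $\mathcal X$. Hence $E[Q_Y(u+h)-Q_Y(u)]\le Kh$. From there:
\begin{itemize}
\item Fatou gives $E[Q_Y(u+)-Q_Y(u)]=0$, so by left-continuity $Q_Y$ is a.s.\ continuous at each fixed $u$.
\item Dominated convergence with envelope $16\max\{|Q_Y(a)|,|Q_Y(b)|\}^4$ then gives $L^4$-continuity of $u\mapsto Q_Y(u)$.
\item Cauchy--Schwarz with $E\|Z\|^4<\infty$ gives $L^2(P)$-continuity of $u\mapsto\phi_x(\cdot;u)$.
\end{itemize}
This is exactly Lemmas~\ref{lem:QY_L2_cont} and~\ref{lem:Gamma-cont-strong}.

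Your bracketing fallback does not bypass this step. Finite bracketing entropy gives equicontinuity only in the $L^2(P)$ semimetric, not in the Euclidean one. For example, $Q_Y(u)=u+\zeta\,1\{u>u_0\}$ with $\zeta\ge0$ nondegenerate gives a Donsker class whose limit jumps at $u_0$. So the Euclidean modulus your case split requires cannot be obtained without the no-common-jump property.
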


The next two theorems establish functional CLTs for the unconstrained and \est estimators of the coefficient vector, respectively.

\begin{theorem}[CLT for the unconstrained coefficients] \label{thm:convergence_beta}
Under Assumption~\ref{ass:full_rank}--\ref{asspt:finite_moments},
\[
\sqrt{n}\big(\tilde{\beta}(\cdot) - \beta^{\mathrm{unc}}(\cdot)\big) \leadsto \mathbb{G}_{\beta}(\cdot) \quad \text{in } \ell^\infty([a,b])^{p+1},
\]
where $\mathbb{G}_{\beta}$ is a tight mean-zero Gaussian process with covariance kernel
\[
\Omega(u, u') = T \, \E\!\left[\Phi(u)\,\Phi(u')^{\!\T}\right] T^{\!\T},
\]
with $T := \mathrm{diag}(1, S_0)$, $S_0 := (\Sigma_{ZX}^{\!\T}\Sigma_{ZZ}^{-1}\Sigma_{ZX})^{-1}\Sigma_{ZX}^{\!\T}\Sigma_{ZZ}^{-1}$, and score
\[
\Phi(u)
:=
\begin{pmatrix}
Q_Y(u) - E[Q_Y(u)] \\[2pt]
\tilde{Z}\,\xi(u)
\end{pmatrix}
\in \R^{1+l},
\qquad
\xi(u) := Q_Y(u) - \mathbf{X}^{\!\T}\beta^{\mathrm{unc}}(u).
\]
In particular, the slope--slope block of $\Omega$ is $S_0\,\E[\tilde Z\tilde Z^{\!\T}\xi(u)\xi(u')]\,S_0^{\!\T}$. Under correct specification, $\beta^{\mathrm{unc}}(u) = \beta(u)$ and $\xi(u) = \eta(u)$.
\end{theorem}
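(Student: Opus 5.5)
The plan is to write $\tilde\beta(u)$ as a smooth function of finitely many sample moments, some of which are indexed by $u$. We then get a joint functional CLT for those moments from a Donsker argument and push it through the delta method.

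By construction, $\tilde\beta_0(u)=\hat\psi_{\hat\mu_X}(u)=\mathbb P_n Q_Y(u)$, since the weights $\hat s_j$ average to one at $x=\hat\mu_X$ because $\mathbb P_n(Z-\hat\mu_Z)=0$. Likewise $\tilde\beta_1(u)=\hat S\,\hat\Sigma_{ZQ}(u)$, where $\hat S:=(\hat\Sigma_{ZX}^{\T}\hat\Sigma_{ZZ}^{-1}\hat\Sigma_{ZX})^{-1}\hat\Sigma_{ZX}^{\T}\hat\Sigma_{ZZ}^{-1}$ and $\hat\Sigma_{ZQ}(u):=\mathbb P_n[(Z-\hat\mu_Z)Q_Y(u)]$. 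The population counterparts are $\beta^{\mathrm{unc}}_0(u)=E[Q_Y(u)]$ and $\beta^{\mathrm{unc}}_1(u)=S_0\Sigma_{ZQ}(u)$.

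First, the intercept. We have $\sqrt n(\tilde\beta_0-\beta_0^{\mathrm{unc}})=\mathbb G_n Q_Y(\cdot)$ exactly.

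Second, the slope. We write
\[
\tilde\beta_1(u)-\beta_1^{\mathrm{unc}}(u)=\hat S\,\mathbb P_n\big[(Z-\hat\mu_Z)\hat\xi(u)\big],
\]
where $\hat\xi(u):=Q_Y(u)-\beta^{\mathrm{unc}}_0(u)-(X-\hat\mu_X)^{\T}\beta^{\mathrm{unc}}_1(u)$. This uses $\hat S\hat\Sigma_{ZX}=I_p$. Expanding $\hat\mu_Z$ and $\hat\mu_X$ around their population values gives
\[
\mathbb P_n[(Z-\hat\mu_Z)\hat\xi(u)]=\mathbb P_n[\tilde Z\xi(u)]-(\hat\mu_Z-\mu_Z)\,\mathbb P_n\xi(u)+(\text{terms in }\mathbb P_n\tilde Z\cdot(\hat\mu_X-\mu_X)).
\]
Because $E\tilde Z=0$ and $E\xi(u)=0$, the correction terms are products of two $O_p(n^{-1/2})$ factors, uniformly in $u$. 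That uniformity holds because $\sup_u|\mathbb P_n\xi(u)|=O_p(n^{-1/2})$ once we have the Donsker property. Since $\hat S\toP S_0$ by the WLLN, Assumption~\ref{ass:full_rank} and the continuous mapping theorem, we obtain
\[
\sqrt n(\tilde\beta-\beta^{\mathrm{unc}})(u)=T\,\mathbb G_n\Phi(u)+o_p(1)
\]
uniformly on $[a,b]$. The limit then follows, with the covariance $T\,E[\Phi\Phi^{\T}]T^{\T}$ as claimed, once $\{\Phi(u):u\in[a,b]\}$ is shown to be Donsker.

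The main obstacle is this Donsker step, since no smoothness in $u$ is assumed; Assumption~\ref{asspt:average_lipschitz} is not available here. The plan is to use monotonicity. Each coordinate of $\Phi(u)$ is a finite linear combination of the following pieces:
\begin{itemize}
\item the nondecreasing-in-$u$ class $\{Q_Y(u)\}$;
\item the classes $\{\tilde Z_k Q_Y(u)\}$, each a product of a fixed function with a monotone class, which we split into positive and negative parts of $\tilde Z_k$ to get differences of monotone-in-$u$ classes;
\item deterministic continuous-in-$u$ multiples of the fixed functions $\tilde Z_k$ and $\tilde Z_k X_m$.
\end{itemize}
The deterministic multiples are continuous because $\beta^{\mathrm{unc}}(u)$ is built from expectations of monotone processes, which have bounded variation.

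A class $\{f_u\}$ monotone in $u$ with envelope $F$ has bracketing number $N_{[\,]}(\varepsilon\|F\|_{L^2},L^2)\lesssim\varepsilon^{-2}$, obtained by bracketing the one-dimensional index via the nondecreasing map $u\mapsto E[F\cdot(f_u)]$. It is therefore Donsker provided $F\in L^2$. The envelopes are $|Q_Y(a)|+|Q_Y(b)|$ and $|\tilde Z_k|(|Q_Y(a)|+|Q_Y(b)|)$. They are square integrable by Assumption~\ref{asspt:finite_moments} and Cauchy--Schwarz, using fourth moments. Products with bounded-variation deterministic functions, and finite sums of Donsker classes, remain Donsker by standard permanence results \citep{vaart1996weak}. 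Tightness of the Gaussian limit follows, and the slope block formula is read off from $T$.
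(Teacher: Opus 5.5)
Your argument follows the paper's proof step for step: the exact 2SLS representation via $\hat S\hat\Sigma_{ZX}=I_p$, recentering of $\hat\mu_Z$, a Donsker property for the score class, and Slutsky. It breaks at the same place the paper's does, namely the passage from $\hat S\sqrt n\,\mathbb{P}_n[\tilde Z\xi(u)]$ to $S_0\,\mathbb{G}_n[\tilde Z\xi(u)]$.

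That replacement needs $c(u):=\E[\tilde Z\xi(u)]=0$. The population 2SLS normal equations defining $\beta^{\mathrm{unc}}$ only give $\Sigma_{ZX}^{\T}\Sigma_{ZZ}^{-1}c(u)=0$, i.e.\ $S_0c(u)=0$. The theorem allows misspecification, since Assumption~\ref{ass:instrument_exogeneity} is not imposed. When $l>p$ and the linear model is misspecified, $c(u)=(I-\Sigma_{ZX}S_0)\Sigma_{ZQ}(u)\neq0$ in general. Then
\[
\hat S\sqrt n\,\mathbb{P}_n[\tilde Z\xi(u)]=\hat S\,\mathbb{G}_n[\tilde Z\xi(u)]+\sqrt n\,(\hat S-S_0)\,c(u),
\]
and the last term is first order. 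Because $\sqrt n\,\mathbb{P}_n[\tilde Z\xi(u)]$ drifts like $\sqrt n\,c(u)$, it is not $O_p(1)$, so Slutsky cannot absorb $\hat S-S_0$. The full delta method you announced in your first paragraph would have forced you to linearize $\hat S$ in $(\hat\Sigma_{ZX},\hat\Sigma_{ZZ})$, which gives
\[
\sqrt n\,(\hat S-S_0)\,c(u)=A^{-1}\,\mathbb{G}_n\bigl[v\,\tilde Z^{\T}\Sigma_{ZZ}^{-1}c(u)\bigr]+o_p(1),\qquad A:=\Sigma_{ZX}^{\T}\Sigma_{ZZ}^{-1}\Sigma_{ZX},\quad v:=\tilde X-\Sigma_{ZX}^{\T}\Sigma_{ZZ}^{-1}\tilde Z.
\]
This is the standard correction from estimating the first stage in misspecified overidentified 2SLS/GMM.

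Weak convergence to a tight mean-zero Gaussian process therefore still holds. However, its kernel equals $T\,\E[\Phi(u)\Phi(u')^{\T}]T^{\T}$ only when $c\equiv0$. That is the just-identified case $l=p$ (where $S_0=\Sigma_{ZX}^{-1}$), or correct specification with Assumption~\ref{ass:instrument_exogeneity}. In general the slope score is $S_0\tilde Z\xi(u)+A^{-1}v\,\tilde Z^{\T}\Sigma_{ZZ}^{-1}c(u)$.

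Here is a concrete check. Take $l=2$, $p=1$, $Z_1,Z_2,\nu$ i.i.d.\ $N(0,1)$, $X=Z_1+Z_2+\nu$ and $Q_Y(u)=Z_1+u$. Then $S_0=(1/2,1/2)$, $\beta_1^{\mathrm{unc}}=1/2$, $\xi=(Z_1-Z_2-\nu)/2$ and $c=(1/2,-1/2)^{\T}$. The stated slope variance $S_0\E[\tilde Z\tilde Z^{\T}\xi^2]S_0^{\T}$ equals $3/8$, whereas the true asymptotic variance of $\sqrt n(\tilde\beta_1(u)-\beta_1^{\mathrm{unc}}(u))$ is $1/2$.

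The paper's proof contains the identical flaw. Its Step~3 applies Slutsky to $\mathrm{diag}(1,\hat S_n)\,n^{-1/2}\sum_j\Phi_j$, although $n^{-1/2}\sum_j\Phi_j$ is uncentered; $T\,\E[\Phi(u)]=0$ removes the drift only for the fixed matrix $T$. This also contradicts the paper's own appendix remark that, in the overidentified case, the influence function of $\hat\psi_x$ does not collapse to $s(Z,x)\xi(u)$, since the stated $\Omega$ implies exactly that collapse.

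To prove a true statement you must either add $c\equiv0$ as a hypothesis or augment the score as above. Your Donsker argument covers the extra term, because $c(\cdot)$ is bounded and $v\tilde Z^{\T}$ has square-integrable entries under the fourth-moment assumption.

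Apart from this, your monotone-bracketing route to the Donsker property is a valid and more self-contained substitute for the paper's VC-subgraph lemma. You do need left/right-limit brackets at jumps of $u\mapsto\E[Ff_u]$, as in the empirical-CDF argument. One slip: without Assumption~\ref{asspt:average_lipschitz}, $\beta^{\mathrm{unc}}$ need not be continuous, and bounded variation does not give continuity. You only need $\sup_u\|\beta^{\mathrm{unc}}(u)\|<\infty$, since a bounded subset of a finite-dimensional span of $L^2(P)$ functions is Donsker.
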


\begin{theorem}[CLT for the \est estimator]\label{thm:pw_ols_clt}
Under Assumptions~\ref{ass:full_rank}--\ref{asspt:average_lipschitz},
\[
\sqrt{n}\big(\hat{\beta}^{\est}(\cdot)-\beta^{\est}(\cdot)\big)
\leadsto
\mathbb{G}_\beta(\cdot)
\quad \text{in } \ell^\infty([a,b])^{p+1},
\]
the same Gaussian process as the unprojected estimator.
\end{theorem}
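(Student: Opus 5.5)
Under Assumption~\ref{asspt:average_strictness}, $\psi_x$ is a valid quantile function for every $x\in\mathcal X$. Hence $Q_{m^{\est}(x)}=\psi_x$, and the population \est coefficients equal the OLS coefficients of the affine function $x\mapsto\psi_x(u)=\mathbf{x}^{\!\T}\beta^{\mathrm{unc}}(u)$. So $\beta^{\est}=\beta^{\mathrm{unc}}$, and Theorem~\ref{thm:convergence_beta} gives the limit of $\sqrt n(\tilde\beta-\beta^{\mathrm{unc}})$. It therefore suffices to show
\[
\sup_{u\in[a,b]}\bigl\|\hat\beta^{\est}(u)-\tilde\beta(u)\bigr\|=o_p(n^{-1/2}).
\]
Since $\hat\psi_{X_j}(u)=\hat{\mathbf X}_j^{\!\T}\tilde\beta(u)$ holds exactly, OLS reproduces $\tilde\beta$ from the unprojected curves. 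Hence
\[
\hat\beta^{\est}(u)-\tilde\beta(u)=(\hat{\mathbf X}^{\!\T}\hat{\mathbf X}/n)^{-1}\,\frac1n\sum_j \hat{\mathbf X}_j D_{X_j}(u),
\]
with $D_{X_j}=\Pi_{\mathcal Q}(\hat\psi_{X_j})-\hat\psi_{X_j}$. By Assumption~\ref{asspt:bounded_support} and $\hat{\mathbf X}^{\!\T}\hat{\mathbf X}/n\toP\mathrm{diag}(1,\Sigma_{XX})$, everything reduces to showing $\max_{j}\sup_{u\in[a,b]}|D_{X_j}(u)|=o_p(n^{-1/2})$.

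The key step is a uniform-in-$x$ version of the argument behind Theorem~\ref{thm:convergence_projected}. Write $\hat\psi_x=\psi_x+\Delta_x$, where
\[
\Delta_x(u)=(\tilde\beta_0-\beta_0^{\mathrm{unc}})(u)+(\tilde\beta_1-\beta_1^{\mathrm{unc}})(u)^{\!\T}(x-\hat\mu_X)+\beta_1^{\mathrm{unc}}(u)^{\!\T}(\mu_X-\hat\mu_X).
\]
By Theorem~\ref{thm:convergence_beta} and tightness of $\mathbb G_\beta$ in $\ell^\infty([a,b])^{p+1}$, we get $\sup_{\|x-\mu_X\|\le B}\sup_u|\Delta_x(u)|=O_p(n^{-1/2})$. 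Asymptotic equicontinuity of $\sqrt n(\tilde\beta-\beta^{\mathrm{unc}})$ gives the modulus bound
\[
\sup_{x}\sup_{|u-u'|\le\delta}\sqrt n\,|\Delta_x(u)-\Delta_x(u')|\le\omega_n(\delta),
\]
where $\lim_{\delta\downarrow0}\limsup_n \Prob(\omega_n(\delta)>\epsilon)=0$. Here I use $\beta^{\mathrm{unc}}$ being Lipschitz, which follows from Assumption~\ref{asspt:average_lipschitz} applied at finitely many $x$ spanning the affine hull. For $|u-u'|\ge\delta_n$, Assumption~\ref{asspt:average_strictness} gives $\psi_x(u')-\psi_x(u)\ge\kappa\delta_n$. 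Choosing $\delta_n\to0$ with $\sqrt n\,\delta_n\to\infty$ makes $\hat\psi_x$ increasing on any pair of points at distance $\ge\delta_n$, uniformly in $x$, with probability tending to one. Any decrease of $\hat\psi_x$ therefore happens within windows of length $<\delta_n$, with size at most $n^{-1/2}\omega_n(\delta_n)$.

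Next I use the min–max formula for the $L^2$ isotonic projection on $[a,b]$,
\[
\Pi_{\mathcal Q}(f)(u)=\max_{c\le u}\min_{d\ge u}\frac{1}{d-c}\int_c^d f .
\]
It implies that $|\Pi_{\mathcal Q}(f)(u)-f(u)|$ is bounded by the largest downward oscillation of $f$ over windows around $u$ on which $f$ fails to dominate its earlier values. Combining this with the previous step gives $\sup_x\sup_u|D_x(u)|\le n^{-1/2}\omega_n(\delta_n)+O_p(\text{boundary effects})=o_p(n^{-1/2})$. Because the bound is uniform over the bounded support, it applies to the random points $X_j$. I expect this to be the main obstacle: turning the local-violation picture into a clean sup-norm bound on the PAVA correction uniformly in $x$. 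The first approach I would try is the elementary inequality
\[
\|\Pi_{\mathcal Q}(\psi+\Delta)-(\psi+\Delta)\|_\infty\le\sup_{u<u'}\bigl(\Delta(u)-\Delta(u')-(\psi(u')-\psi(u))\bigr)_+ ,
\]
which holds for continuous isotonic regression by the min–max formula. I would then split the supremum into $|u-u'|<\delta_n$, where it is controlled by $\omega_n$, and $|u-u'|\ge\delta_n$, where it is zero with probability tending to one. Jumps of individual $Q_{Y_j}$ enter only through $\Delta_x$ and are handled by equicontinuity, so no pathwise smoothness is needed.

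Finally, Slutsky's lemma combines $\sqrt n(\hat\beta^{\est}-\tilde\beta)=o_p(1)$ in $\ell^\infty([a,b])^{p+1}$ with Theorem~\ref{thm:convergence_beta}. This yields $\sqrt n(\hat\beta^{\est}-\beta^{\est})\leadsto\mathbb G_\beta$, the same limit as the unprojected estimator.
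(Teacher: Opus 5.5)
Your proof is correct, and its outer layer matches the paper's. Assumption~\ref{asspt:average_strictness} makes the population projection inactive, so $\beta^{\est}=\beta^{\mathrm{unc}}$. The OLS identity $\hat\beta^{\est}-\tilde\beta=(\hat{\mathbf X}^{\T}\hat{\mathbf X}/n)^{-1}n^{-1}\sum_j\hat{\mathbf X}_jD_{X_j}$ together with bounded support then reduces everything to $\sup_{x\in\mathcal X}\sqrt n\|D_x\|_\infty=o_p(1)$.

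The key step is where you genuinely differ from the paper.
\begin{itemize}
\item The paper uses asymptotic tightness of the $p+1$ coefficient processes in $C([a,b])^{p+1}$. It covers a high-probability compact set by finitely many sup-norm $\varepsilon$-balls with $C^1$ centres of common Lipschitz constant $\Lambda_\varepsilon$, notes that $\psi_x+n^{-1/2}L_n^x\in\mathcal Q$ once $n^{-1/2}\Lambda_\varepsilon C_B\le\kappa/2$, and applies the sup-norm $1$-Lipschitz property of $\Pi_{\mathcal Q}$ to get $\sqrt n\|D_x\|_\infty\le 2\|H_n^x-L_n^x\|_\infty\le 2C_B\varepsilon$.
\item You bound the correction directly by the largest downward drop of $\hat\psi_x$ and split pairs at a scale $\delta_n$ with $\sqrt n\,\delta_n\to\infty$. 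Far pairs cannot violate monotonicity with probability approaching one because of the $\kappa$-margin. Near pairs are controlled by $n^{-1/2}\omega_n(\delta_n)$ with $\omega_n(\delta_n)=o_p(1)$.
\end{itemize}
The deterministic inputs are the same inequality in disguise. Write $d_\infty(f,\mathcal Q):=\inf_{\ell\in\mathcal Q}\|f-\ell\|_\infty$. Optimizing the paper's bound over $\ell$ gives $\|\Pi_{\mathcal Q}f-f\|_\infty\le 2\,d_\infty(f,\mathcal Q)$. This equals your maximal drop, since the infimum is attained at $(g+h)/2$ with $g(u)=\sup_{v\le u}f(v)$ and $h(u)=\inf_{w\ge u}f(w)$. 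So your route replaces the covering/$C^1$-approximation construction with a modulus-of-continuity argument. That is shorter and yields the explicit high-probability bound $\sqrt n\sup_x\|D_x\|_\infty\le\omega_n(\delta_n)$. The paper's route reuses the contraction lemma it already needs for Theorem~\ref{thm:convergence_projected} and carries over verbatim to the projected bootstrap (Corollary~\ref{cor:bootstrap_projected}); yours would extend there just as easily.

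Two small points.
\begin{itemize}
\item What you call the main obstacle is already settled by the inequality you wrote, with no boundary terms. Since $g,h\in\mathcal Q$ and $h\le f\le g$, order preservation of isotonic regression (immediate from the min--max formula) gives $h\le\Pi_{\mathcal Q}f\le g$. Hence $|\Pi_{\mathcal Q}f-f|\le g-h\le\sup_{v\le w}(f(v)-f(w))_+$, and you should drop the vague ``$O_p(\text{boundary effects})$'' remark.
\item A modulus in the Euclidean distance $|u-u'|$ does not follow from weak convergence in $\ell^\infty([a,b])^{p+1}$ alone. It requires the limit $\mathbb G_\beta$ to have a.s.\ continuous paths, i.e.\ a continuous covariance kernel. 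This is where Assumption~\ref{asspt:average_lipschitz} really enters, via Lemma~\ref{lem:Omega-cont} and Corollary~\ref{cor:cont-version-beta}, not only through the regularity of $\beta^{\mathrm{unc}}$ (for which continuity, also from that lemma, suffices). The paper relies on the same fact to obtain tightness in $C([a,b])^{p+1}$.
\end{itemize}
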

Under these conditions, the \est estimator has the same asymptotic variance as
its unprojected variant, while Theorem~\ref{thm:pw_ols_improvement} guarantees smaller finite-sample error.

\begin{remark}[Inference when Assumption \ref{asspt:average_strictness} fails]
\label{rem:fang_santos}
If one does not maintain Assumption~\ref{asspt:average_strictness}, the
projection need not be asymptotically inactive. In that case, inference can in
principle be based on the Hadamard directional differentiability of the isotonic
projection map. If
\[
\sqrt n(\hat\psi_x-\psi_x)\leadsto \mathbb G_x,
\]
then the directional delta method gives
\[
\sqrt n\{\Pi_{\mathcal Q}(\hat\psi_x)-\Pi_{\mathcal Q}(\psi_x)\}
\leadsto
D\Pi_{\mathcal Q}(\psi_x)[\mathbb G_x].
\]
Under Assumption~\ref{asspt:average_strictness}, this derivative is the identity.
When the monotonicity constraint binds, for instance because \(\psi_x\) has flat
segments or because \(\psi_x\notin\mathcal Q\), the derivative instead projects
the perturbation onto the relevant tangent/critical cone, so the limit is
generally non-Gaussian. Inference can then be conducted using methods for
directionally differentiable functionals \citep[e.g.,][]{fang2019inference}. \qed
\end{remark}

\subsection{Inference}\label{sec:inference}
Under the conditions of Theorem~\ref{thm:pw_ols_clt}, projected and unprojected
\est share the limiting process $\mathbb G_\beta$. Therefore, one can use the unprojected influence functions as a starting point for developing inference procedures. We now describe pointwise and uniform procedures, distinguishing unprojected and projected variants. The uniform bootstrap confidence band results are new to the literature. 

\paragraph{Pointwise inference.} Write $\bar{Q}_n(u):=n^{-1}\sum_{j=1}^n Q_{Y_j}(u)$ for the cross-group average quantile function. For unprojected estimator $\tilde\beta(u)$, the asymptotic variance at a fixed $u_0$ is $\Omega(u_0, u_0)/n$ from Theorem~\ref{thm:convergence_beta}. A consistent estimator is
\[
\hat{\Omega}(u_0, u_0) = \hat{T} \left(\frac{1}{n}\sum_{j=1}^n \hat{\Phi}_j(u_0)\hat{\Phi}_j(u_0)^{\!\T}\right) \hat{T}^{\!\T},
\]
where $\hat{T} = \mathrm{diag}(1, \hat{S}_{\mathrm{2SLS}})$, $\hat{S}_{\mathrm{2SLS}}$ is the sample analogue of $S_0$, and the score is
\[
\hat{\Phi}_j(u) = \begin{pmatrix} Q_{Y_j}(u) - \bar{Q}_n(u) \\[2pt] (Z_j - \bar{Z}_n)\,\hat{\xi}_j(u) \end{pmatrix},
\qquad
\hat{\xi}_j(u) = Q_{Y_j}(u) - \hat{\mathbf{X}}_j^{\!\T}\tilde{\beta}(u).
\]
The sandwich pointwise $(1-\alpha)$ CI for $\beta_k(u_0)$ is $\tilde\beta_k(u_0)\pm z_{1-\alpha/2}\,\hat\sigma_k(u_0)/\sqrt{n}$, where $\hat\sigma_k^2(u_0):=\hat\Omega_{kk}(u_0,u_0)$.

For the \est estimator, the projected pointwise CI replaces $\tilde\beta$ with $\hat\beta^{\est}$ and recomputes the residuals accordingly: define $\hat{\xi}_j^{\est}(u):=Q_{Y_j}(u)-\hat{\mathbf{X}}_j^{\!\T}\hat\beta^{\est}(u)$ and
\[
\hat{\Phi}_j^{\est}(u) = \begin{pmatrix} Q_{Y_j}(u) - \bar{Q}_n(u) \\[2pt] (Z_j - \bar{Z}_n)\,\hat{\xi}_j^{\est}(u) \end{pmatrix},
\]
with $\hat\Omega^{\est}$ and $\hat\sigma_k^{\est}$ defined analogously. Under the conditions of Theorem~\ref{thm:pw_ols_clt}, both variance estimators are consistent for $\Omega$. However, the projected residuals $\hat\xi^{\est}_j$ are closer to the population residuals in finite samples (by Theorem~\ref{thm:pw_ols_improvement}), which can yield tighter pointwise CIs, as we demonstrate in the simulations and the empirical application below.

\paragraph{Uniform inference.} For uniform confidence bands over $u \in [a,b]$, we use a multiplier bootstrap. Let $\{\omega_j\}_{j=1}^n$ be i.i.d. multiplier weights independent of the data satisfying $E[\omega_j]=0$, $E[\omega_j^2]=1$, and $E|\omega_j|^{2+\delta}<\infty$ for some $\delta>0$. The unprojected bootstrap process is
\[
\hat{\mathbb{G}}_\beta^*(u) = \hat{T}\,\frac{1}{\sqrt{n}}\sum_{j=1}^n \omega_j \hat{\Phi}_j(u).
\]
Throughout, $P^x$ and $E^x$ denote probability and expectation conditional on the data (i.e., with respect to the multiplier weights only). Recall that $\hat{\mathbb{G}}_\beta^*\rightsquigarrow_{\Prob}\mathbb{G}_\beta$ denotes conditional weak convergence in probability \citep[Section~2.9]{vaart1996weak}. The following theorem establishes the conditional weak convergence of $\hat{\mathbb{G}}_\beta^*$.

\begin{theorem}[Multiplier bootstrap validity]\label{thm:bootstrap}
Let Assumption~\ref{ass:full_rank}--\ref{asspt:finite_moments} hold, and let
$\{\omega_j\}_{j=1}^n$ be i.i.d.\ multipliers, independent of the data, satisfying
$E[\omega_j]=0$, $E[\omega_j^2]=1$, and $E|\omega_j|^{2+\delta}<\infty$ for some
$\delta>0$. Then, conditionally on the data,
\[
\hat{\mathbb{G}}_\beta^*(\cdot)\rightsquigarrow_{\Prob}\mathbb{G}_\beta(\cdot)
\qquad\text{in }\ell^\infty([a,b])^{p+1},
\]
where $\mathbb{G}_\beta$ is the Gaussian process in Theorem~\ref{thm:convergence_beta}.
\end{theorem}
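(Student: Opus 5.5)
The plan is to reduce the statement to the standard multiplier CLT for a Donsker class, plus a uniform-in-$u$ argument that the estimated scores $\hat\Phi_j$ and $\hat T$ may be replaced by their population counterparts $\Phi_j$ and $T$ at a conditional cost of $o_{\Prob}(1)$.

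\textbf{Step 1 (infeasible process).} Define the infeasible multiplier process $\mathbb{G}^{*}_{0}(u):=T\,n^{-1/2}\sum_{j=1}^n\omega_j\Phi_j(u)$. Its coordinates are multiplier empirical processes indexed by the classes
\[
\{Q_Y(u)-E[Q_Y(u)]:u\in[a,b]\},\qquad \{\tilde Z_k\,(Q_Y(u)-\mathbf X^{\!\T}\beta^{\mathrm{unc}}(u)):u\in[a,b]\}.
\]
These are the same classes whose Donsker property underlies Theorem~\ref{thm:convergence_beta}, so I will take that property as given. The key structural fact is that $u\mapsto Q_Y(u)$ is monotone, so $\{Q_Y(u)\}$ is a VC-type class. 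Its envelope $\sup_{u\in[a,b]}|Q_Y(u)|\le|Q_Y(a)|+|Q_Y(b)|$ is square integrable by Assumption~\ref{asspt:finite_moments}. Multiplying by the fixed vector $\tilde Z$ and taking linear combinations with bounded deterministic coefficient functions preserves the Donsker property. For the coefficients, I need $\sup_u\|\beta^{\mathrm{unc}}(u)\|<\infty$, which follows from the same moment bounds via $\beta^{\mathrm{unc}}(u)=T\,E[\cdot]$. The product $\tilde Z\,Q_Y(u)$ has square-integrable envelope by Cauchy--Schwarz together with the fourth-moment assumption. The conditional multiplier CLT of \citet[Theorem~2.9.6/2.9.7]{vaart1996weak} then gives $\mathbb{G}^{*}_{0}\rightsquigarrow_{\Prob}\mathbb G_\beta$, whose covariance kernel $T\,E[\Phi\Phi'^{\T}]\,T^{\T}$ matches Theorem~\ref{thm:convergence_beta}. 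That theorem uses mean-zero, unit-variance multipliers with $\|\omega\|_{2,1}<\infty$, and this is implied by $E|\omega|^{2+\delta}<\infty$.

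\textbf{Step 2 (estimated scores).} Decompose
\[
\hat{\mathbb G}^*_\beta-\mathbb G^*_0=(\hat T-T)\,n^{-1/2}\sum_j\omega_j\hat\Phi_j+T\,n^{-1/2}\sum_j\omega_j(\hat\Phi_j-\Phi_j).
\]
\begin{itemize}
\item $\hat T\toP T$ by the law of large numbers and Assumption~\ref{ass:full_rank}.
\item $\sup_u\|n^{-1/2}\sum_j\omega_j\hat\Phi_j\|=O_{\Prob}(1)$ follows once the remainder below is shown to be small.
\end{itemize}
Expanding $\hat\Phi_j-\Phi_j$ produces terms of two kinds:
\begin{itemize}
\item $-(\bar Q_n(u)-E Q_Y(u))\,n^{-1/2}\sum_j\omega_j$;
\item $-(\bar Z_n-E Z)\,n^{-1/2}\sum_j\omega_j\xi_j(u)$, and $-\,n^{-1/2}\sum_j\omega_j\tilde Z_j\,\hat{\mathbf X}_j^{\T}(\tilde\beta(u)-\beta^{\mathrm{unc}}(u))$, together with the analogous term from $\hat\mu_X-\mu_X$.
\end{itemize}
Each term is a product of a factor that is uniformly $O_{\Prob}(n^{-1/2})$ or $o_{\Prob}(1)$ in $u$ with a multiplier sum that is conditionally $O_{\Prob}(1)$.
\begin{itemize}
\item The first factor is handled by Theorem~\ref{thm:convergence_beta} and the uniform law of large numbers (Glivenko--Cantelli).
\item For the second factor: $n^{-1/2}\sum_j\omega_j$ is conditionally $O_{\Prob}(1)$; $n^{-1/2}\sum_j\omega_j\xi_j(\cdot)$ is conditionally tight by Step 1; and $n^{-1/2}\sum_j\omega_j\tilde Z_j\hat{\mathbf X}_j^{\T}$ is a finite-dimensional matrix whose conditional variance is $n^{-1}\sum_j\|\tilde Z_j\hat{\mathbf X}_j\|^2=O_{\Prob}(1)$ by the fourth moments.
\end{itemize}
Hence $\sup_u\|\hat{\mathbb G}^*_\beta-\mathbb G^*_0\|$ converges to zero in conditional probability, in probability. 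By the bounded-Lipschitz characterization, this transfers the conditional weak convergence from $\mathbb G^*_0$ to $\hat{\mathbb G}^*_\beta$.

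\textbf{Main obstacle.} I expect the main difficulty to lie in Step 1, in verifying the Donsker/uniform-entropy condition under only the moment Assumptions~\ref{ass:full_rank}--\ref{asspt:finite_moments}, with no smoothness in $u$. My first approach is to write each coordinate as a difference of two monotone-in-$u$ processes. This works because $Q_Y(u)$ is nondecreasing and $\tilde Z_k=\tilde Z_k^+-\tilde Z_k^-$, and classes of the form $\{g\,h(\cdot,u)\}$ with $h$ monotone in $u$ have bracketing numbers polynomial in $1/\varepsilon$ relative to an $L^2$ envelope. The deterministic part $\mathbf X^{\T}\beta^{\mathrm{unc}}(u)$ also requires care. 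Since $\beta^{\mathrm{unc}}=T\,E[\cdot\,Q_Y(u)]$ is a difference of monotone bounded functions of $u$, this term again falls into the same bracketing bound, so the uniform entropy integral is finite.
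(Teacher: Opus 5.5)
Your proposal is correct and follows the paper's proof. Both apply the conditional multiplier CLT \citet[Theorem~2.9.6]{vaart1996weak} to the Donsker score class from Theorem~\ref{thm:convergence_beta}, and both show that the plug-in errors from $\hat T$, $\bar Q_n$, $\bar Z_n$, $\hat\mu_X$ and $\tilde\beta$ enter only as products of uniformly $o_p(1)$ or $O_p(n^{-1/2})$ factors with conditionally tight multiplier sums. One detail worth making explicit, on which the paper is also loose: the uncentered infeasible process $T n^{-1/2}\sum_j\omega_j\Phi_j$ equals the $P$-centered process covered by the multiplier CLT only because $T\,E[\Phi(u)]=0$, which follows from the population 2SLS normal equations; in the overidentified misspecified case $E[\tilde Z\xi(u)]$ itself need not vanish.
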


The $(1-\alpha)$ unprojected uniform band for $\beta_k(u)$ is $\tilde{\beta}_k(u) \pm \hat{c}_{1-\alpha}\cdot\hat{\sigma}_k(u)/\sqrt{n}$, where $\hat{c}_{1-\alpha}$ is the $(1-\alpha)$-quantile of $\sup_{u\in[a,b]} |\hat{\mathbb{G}}_{\beta,k}^*(u)| / \hat{\sigma}_k(u)$ conditional on the data. In practice, we sample $\omega_j \sim N(0,1)$, which satisfies the general conditions on the weights and is the standard approach.

The projected bootstrap constructs the bootstrap process differently: for each draw of multiplier weights, it computes bootstrap unconstrained coefficients $\tilde\beta^*(u)$, evaluates $\hat\psi^*_{X_j}(u):=\tilde\beta^*_0(u)+\tilde\beta^*_1(u)^{\!\T}(X_j-\hat\mu_X)$ at each observed $X_j$, applies $\Pi_{\mathcal{Q}}$ to each curve, and recovers projected bootstrap coefficients $\hat\beta^{\est,*}(u)$ according to Eq.\ \eqref{eq:div_estimator}. The projected bootstrap process is $\hat{\mathbb{G}}_\beta^{\est,*}(u):=\sqrt{n}(\hat\beta^{\est,*}(u)-\hat\beta^{\est}(u))$, and the projected uniform band for $\beta_k^{\est}(u)$ is constructed as above but with $\hat{\mathbb{G}}_\beta^{\est,*}$ replacing $\hat{\mathbb{G}}_\beta^*$, $\hat\beta^{\est}$ replacing $\tilde\beta$, and $\hat\sigma_k^{\est}$ replacing $\hat\sigma_k$. As in the pointwise case,  the projected bootstrap generally yields tighter pointwise CIs, as confirmed by the simulations and applications below. The following corollary establishes its validity.

\begin{corollary}[Projected bootstrap validity]\label{cor:bootstrap_projected}
Under the assumptions of Theorem~\ref{thm:bootstrap} and
Assumptions~\ref{asspt:bounded_support}--\ref{asspt:average_lipschitz}, the projected bootstrap satisfies
\[
\hat{\mathbb{G}}_\beta^{\est,*}(\cdot)
\rightsquigarrow_{\Prob}
\mathbb{G}_\beta(\cdot)
\qquad\text{in }\ell^\infty([a,b])^{p+1},
\]
and yields asymptotically valid uniform confidence bands for
$\beta_k^{\est}(u)$.
\end{corollary}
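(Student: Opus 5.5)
The plan is to show that the projected bootstrap process is asymptotically equivalent, conditionally on the data, to the unprojected bootstrap process, and then invoke Theorem~\ref{thm:bootstrap}. Write $\tilde\beta^*(u)=\tilde\beta(u)+n^{-1/2}\hat{\mathbb G}^*_\beta(u)+r_n^*(u)$, where $r_n^*$ collects the remainder from linearizing the bootstrap 2SLS map; I would first show $\sup_u\|r_n^*\|=o_{P^x}(n^{-1/2})$ in probability, using the same linearization that underlies Theorems~\ref{thm:convergence_beta} and~\ref{thm:bootstrap}. Together with Theorem~\ref{thm:bootstrap}, this gives $\sup_{u}\|\tilde\beta^*(u)-\tilde\beta(u)\|=O_{P^x}(n^{-1/2})$.

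The key step is to show that the projection is inactive with conditional probability approaching one, uniformly over the observed $X_j$. By Assumption~\ref{asspt:average_strictness}, $\psi_x$ has increments at least $\kappa(d-c)$ for every $x\in\mathcal X$. Because $\psi_x$ is affine in $x$ and $\mathcal X$ is bounded (Assumption~\ref{asspt:bounded_support}),
\[
\sup_{x\in\mathcal X,u}\bigl|\hat\psi^*_x(u)-\psi_x(u)\bigr|\le C\sup_u\bigl\|\tilde\beta^*(u)-\beta^{\mathrm{unc}}(u)\bigr\|,
\]
and the right-hand side is $O_{P^x}(n^{-1/2})$ by the previous step and Theorem~\ref{thm:convergence_beta}. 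The $\hat\mu_X$ recentering contributes only a further $O_p(n^{-1/2})$ shift.

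A sup-norm perturbation of size $\epsilon$ does not preserve monotonicity on intervals shorter than $2\epsilon/\kappa$, so this needs care. This is the main obstacle. I would handle it the way the proof of Theorem~\ref{thm:convergence_projected} presumably does: use the fact that $\Pi_{\mathcal Q}$ (isotonic regression in $L^2([a,b])$) is a contraction and is local. The PAVA solution differs from the input only on blocks where the input violates monotonicity. If $\|\hat\psi^*_x-\psi_x\|_\infty\le\epsilon$ and $\psi_x$ has slope at least $\kappa$, then every violating block has length $O(\epsilon/\kappa)$. Hence
\[
\|\Pi_{\mathcal Q}(\hat\psi^*_x)-\hat\psi^*_x\|_\infty\le 2\epsilon+\omega(\epsilon/\kappa),
\]
where $\omega$ is the modulus of continuity of the perturbation $\sqrt n(\hat\psi^*_x-\psi_x)$. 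Asymptotic equicontinuity of the bootstrap process and Assumption~\ref{asspt:average_lipschitz}, which controls the oscillation of $\psi_x$, make $\omega(\epsilon/\kappa)$ equal to $o(n^{-1/2})$. Since $\hat\psi^*_x$ is affine in $x$, I would obtain this bound uniformly over $x\in\mathcal X$, rather than pointwise, by parametrizing the perturbation through the $(p+1)$-dimensional process $\tilde\beta^*-\beta^{\mathrm{unc}}$. The conclusion is $\max_j\sup_u|\Pi_{\mathcal Q}(\hat\psi^*_{X_j})(u)-\hat\psi^*_{X_j}(u)|=o_{P^x}(n^{-1/2})$ in probability. The same argument applied to the original data gives $\max_j\sup_u|\hat Q(X_j,u)-\hat\psi_{X_j}(u)|=o_p(n^{-1/2})$.

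Finally, OLS is linear, and $\hat{\mathbf X}$ has rows bounded by $1+B$, with $(\hat{\mathbf X}^{\T}\hat{\mathbf X}/n)^{-1}$ converging to a finite limit. It follows that
\[
\sup_u\bigl\|\hat\beta^{\est,*}(u)-\tilde\beta^*(u)\bigr\|\le C\max_j\bigl\|\Pi_{\mathcal Q}(\hat\psi^*_{X_j})-\hat\psi^*_{X_j}\bigr\|_\infty=o_{P^x}(n^{-1/2}).
\]
The OLS fit of the unprojected curves $\hat\psi_{X_j}$, which are exactly linear in $X_j-\hat\mu_X$, returns $\tilde\beta$, and likewise for $\tilde\beta^*$. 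Hence $\hat{\mathbb G}^{\est,*}_\beta=\sqrt n(\tilde\beta^*-\tilde\beta)+o_{P^x}(1)=\hat{\mathbb G}^*_\beta+o_{P^x}(1)$ uniformly in $u$. Conditional weak convergence is preserved under perturbations that are $o_{P^x}(1)$ in probability, because $BL_1$ functions are Lipschitz. Theorem~\ref{thm:bootstrap} therefore yields $\hat{\mathbb G}^{\est,*}_\beta\rightsquigarrow_{\Prob}\mathbb G_\beta$.

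Validity of the uniform bands then follows by the standard argument. We have $\hat\sigma^{\est}_k\to\sigma_k$ uniformly, with $\sigma_k$ bounded away from zero, and Theorem~\ref{thm:pw_ols_clt} applies. The supremum of $|\mathbb G_{\beta,k}|/\sigma_k$ has a continuous distribution (Tsirelson-type result for Gaussian suprema), so $\hat c_{1-\alpha}$ converges in probability to the true critical value, and the coverage converges to $1-\alpha$.
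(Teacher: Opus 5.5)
Your reduction matches the paper's: show the projection corrections are negligible uniformly over $x\in\mathcal X$, for the data perturbation and the bootstrap perturbation. You correctly aim for negligibility rather than inactivity with probability approaching one, which generally fails. Then use the affine dependence on $x$ with bounded $\mathcal X$, push through the linear OLS map, and invoke Theorem~\ref{thm:bootstrap}. The gap is in the step you flag as the main obstacle.

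Your bound $\|\Pi_{\mathcal Q}(\hat\psi^*_x)-\hat\psi^*_x\|_\infty\le 2\epsilon+\omega(\epsilon/\kappa)$ contains the term $2\epsilon$, where $\epsilon=\|\hat\psi^*_x-\psi_x\|_\infty$ is of exact order $n^{-1/2}$. That term is just the trivial bound from the sup-norm contraction and $\psi_x\in\mathcal Q$. It yields only $\sqrt n\|D^*_x\|_\infty=O_{P^x}(1)$, not $o_{P^x}(1)$.

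Dropping $2\epsilon$ and bounding the correction on a violating block by the oscillation of $\hat\psi^*_x$ there does not rescue it with your block-length estimate. Under Assumption~\ref{asspt:average_lipschitz}, a block of length $O(\epsilon/\kappa)$ can carry an oscillation of $\psi_x$ of order $K\epsilon/\kappa\asymp n^{-1/2}$. There is also a scaling slip: the modulus of the \emph{scaled} perturbation at $\epsilon/\kappa\to 0$ is $o_{P^x}(1)$, not $o(n^{-1/2})$. So the proposal never converts an $O(n^{-1/2})$ perturbation into an $o(n^{-1/2})$ projection correction, which is the entire content of the corollary.

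There are two ways to close this.

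\emph{The paper's route} needs no PAVA block analysis. The scaled perturbation $H_n^x+H_n^{*,x}$ is (conditionally) asymptotically tight with a continuous limit and is affine in $x$. So on an event of probability at least $1-\varepsilon$ it lies within $2C_B\varepsilon$ in sup norm, uniformly in $x$, of some $L^x$ built from $C^1$ coefficient functions with Lipschitz constant at most $\Lambda_\varepsilon$. Once $n^{-1/2}\Lambda_\varepsilon C_B\le\kappa/2$, Assumption~\ref{asspt:average_strictness} puts $\psi_x+n^{-1/2}L^x$ in $\mathcal Q$. The contraction then gives $\sqrt n\|D^*_x\|_\infty\le 2\|(H_n^x+H_n^{*,x})-L^x\|_\infty\le 4C_B\varepsilon$, with $\varepsilon$ arbitrary.

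\emph{Your local route} needs a self-improving block-length bound instead of $O(\epsilon/\kappa)$. On a block $[c,d]$ where the isotonic fit is constant, the greatest-convex-minorant characterization implies that the average of $\hat\psi^*_x$ over $[c,\frac{c+d}{2}]$ is at least its average over $[\frac{c+d}{2},d]$. Combined with the slope bound $\kappa$, this gives $\kappa(d-c)/2\le n^{-1/2}\omega_n(d-c)$, where $\omega_n$ is the modulus of continuity of the scaled perturbation. Since $d-c=O_{P^x}(n^{-1/2})\to 0$ and the scaled perturbation is (conditionally) asymptotically equicontinuous, $\omega_n(d-c)=o_{P^x}(1)$, hence $d-c=o_{P^x}(n^{-1/2})$. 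Only then is the correction on the block, bounded by $K(d-c)+n^{-1/2}\omega_n(d-c)$, of order $o_{P^x}(n^{-1/2})$, uniformly in $x$ via the affine parametrization.
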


\subsection{Empirical quantile functions}\label{sec:empirical_quantiles}

In the previous sections, we have treated the group-level quantile functions $Q_{Y_j}$ as observed objects. This is suitable for settings with population-level within-group data (e.g., when all workers in a commuting zone or firm are observed). Here, we consider the case where researchers only observe a sample of the within-group realizations.

Suppose that for each group $j$, we observe draws $V_{jk}$, $k=1,\ldots,m_j$, from the distribution $Y_j$, which importantly need not be sampled i.i.d. The empirical distribution function and the associated empirical quantile function are
\[
\widehat Y_j(y):=\frac{1}{m_j}\sum_{k=1}^{m_j}\mathbf{1}\{V_{jk}\le y\},
\qquad
\widehat Q_{Y_j}(u):=\inf\{y:\widehat Y_j(y)\ge u\},
\quad u\in[a,b].
\]
The feasible IV-weighted quantile curve is then
\[
\bar\psi_x(u):=\frac{1}{n}\sum_{j=1}^n \hat s_j(Z_j,x)\,\widehat Q_{Y_j}(u),
\]
and we let $\bar{\tilde\beta}(u)=(\bar{\tilde\beta}_0(u),\bar{\tilde\beta}_1(u)^{\!\T})^{\!\T}$ denote the corresponding unconstrained coefficient functions defined by
\[
\bar\psi_x(u)=\bar{\tilde\beta}_0(u)+\bar{\tilde\beta}_1(u)^{\!\T}(x-\hat\mu_X).
\]
The projected feasible estimator is
\[
\bar\beta^{\est}(u)
:=
(\hat{\mathbf X}^{\!\T}\hat{\mathbf X})^{-1}\hat{\mathbf X}^{\!\T}\Pi_{\mathcal Q}(\bar\psi_{X_j})(u),
\]
that is, the estimator obtained from \eqref{eq:div_estimator} after replacing each $Q_{Y_j}$ by $\widehat Q_{Y_j}$. All other objects in Sections~\ref{sec:div_estimator}--\ref{sec:inference} are modified analogously.

Under the following high-level condition, this additional within-group sampling step is asymptotically negligible.

\begin{assumption}[High-level condition on empirical quantile functions]\label{asspt:empirical_quantiles}
There exist deterministic rates $r_{m_j}\downarrow 0$ such that, with
\[
R_n
:=
\max_{1\le j\le n}\sup_{u\in[a,b]}
\bigl|\widehat Q_{Y_j}(u)-Q_{Y_j}(u)\bigr|,
\]
we have
\[
R_n
=
O_p\!\left(\max_{1\le j\le n} r_{m_j}\right),
\qquad
\sqrt{n}\,\max_{1\le j\le n} r_{m_j}\to 0.
\]
\end{assumption}

 Assumption~\ref{asspt:empirical_quantiles} is stated in two parts. The first part requires that the stochastic fluctuations of the empirical quantile estimator are controlled uniformly over groups by a deterministic rate $\max_j r_{m_j}$, which follows from standard results under appropriate uniformity conditions, see Remark \ref{rem:sharper_growth} below.  Note that this allows for the within-group sampling to exhibit dependence by using appropriate quantile function estimators \citep[e.g.,][]{rio2017asymptotic}. Note that $Q_{Y_j}$ is itself a random target, but convergence here is conditional on a given group $j$. The randomness of $Q_{Y_j}$ across groups is accounted for in the across-group asymptotics above, and Assumption~\ref{asspt:empirical_quantiles} only controls the additional within-group estimation error $\widehat Q_{Y_j}-Q_{Y_j}$.  The second part  is a purely arithmetic growth condition ensuring that rate is fast enough relative to $n$. Together they imply $\sqrt{n}R_n = o_p(1)$, i.e., the first-stage quantile estimation error is uniformly smaller than the $\sqrt{n}$ rate governing the across-group IV problem. 
 
 The maximum over $j$ appears because Proposition~\ref{prop:empirical_quantiles} below requires control uniformly across all groups. For instance, if $m_{\min}:=\min_{1\le j\le n} m_j$ and the empirical quantiles satisfy
\[
R_n = O_p(m_{\min}^{-1/2}),
\] then Assumption~\ref{asspt:empirical_quantiles} reduces to the sufficient condition $n/m_{\min}\to 0$. More generally, any quantile estimator whose uniform error over groups is $o_p(n^{-1/2})$ can be used.

\begin{remark}[Sufficient conditions under density regularity]\label{rem:sharper_growth}
The high-level Assumption~\ref{asspt:empirical_quantiles} can be verified under various primitive conditions, yielding different growth requirements on $m_{\min}$ relative to $n$.

\begin{enumerate}[(i)]
\item \emph{Distribution-free (DKW).}
The Dvoretzky--Kiefer--Wolfowitz inequality \citep{dvoretzky1956asymptotic} gives, for each fixed group,
\(\sup_u|\widehat Q_{Y_j}(u)-Q_{Y_j}(u)|=O_p(m_j^{-1/2})\)
under a uniform lower density bound. Since Assumption~\ref{asspt:empirical_quantiles}
requires the maximum over \(j=1,\ldots,n\), a union bound gives
\[
R_n
=
O_p\!\left(\sqrt{\frac{\log n}{m_{\min}}}\right).
\]
Thus, using the DKW inequality, Assumption~\ref{asspt:empirical_quantiles} holds when
\[
\frac{n\log n}{m_{\min}}\to0.
\]
\item \emph{Bahadur linearization.} If, additionally, the within-group densities are 
uniformly bounded away from zero and Lipschitz on the relevant support (as assumed in Assumptions~5 and~7 in \CLP), one can decompose the empirical quantile 
error $\widehat Q_{Y_j}(u)-Q_{Y_j}(u)$ into a leading linear term plus a higher-order remainder. The leading Bahadur term is conditionally mean zero and is averaged across groups in the second stage. Its contribution is therefore of order
\(O_p((n m_{\min})^{-1/2})\), hence \(o_p(n^{-1/2})\) whenever
\(m_{\min}\to\infty\). The binding term is the uniform Bahadur remainder,
which is of order $\max_j\sup_u|r_j(u)|=O_p(m_{\min}^{-3/4}(\log(n\vee m_{\min}))^{3/4})$, which is faster  than the $m_{\min}^{-1/2}$ rate from DKW. This yields the sufficient condition
\[
\frac{n^{2/3}\log(n\vee m_{\min})}{m_{\min}}\to 0,
\]
which matches \CLP and is considerably weaker than $n/m_{\min}\to 0$ in that 
it allows the within-group samples to be much smaller than the number of groups.
\end{enumerate}
We state Assumption~\ref{asspt:empirical_quantiles} at a high level to accommodate both  regimes and to separate the across-group IV theory (Sections~\ref{sec:div_estimator}--\ref{sec:inference}) from the within-group estimation problem, which may employ quantile estimators other than the empirical quantile function.
\end{remark}

The next proposition shows that, under Assumption~\ref{asspt:empirical_quantiles}, replacing the true group quantile functions by empirical ones has no first-order effect. 

\begin{proposition}[Empirical quantile convergence]\label{prop:empirical_quantiles}
Suppose Assumptions~\ref{ass:full_rank}, \ref{asspt:finite_moments}, and~\ref{asspt:empirical_quantiles} hold.

\begin{enumerate}[(i)]
\item For each fixed $x\in\R^p$,
\[
\|\bar\psi_x-\hat\psi_x\|_{\ell^\infty([a,b])}
=
o_p(n^{-1/2}),
\qquad
\|\Pi_{\mathcal Q}(\bar\psi_x)-\Pi_{\mathcal Q}(\hat\psi_x)\|_{\ell^\infty([a,b])}
=
o_p(n^{-1/2}).
\]

\item The unconstrained coefficient functions satisfy
\[
\|\bar{\tilde\beta}-\tilde\beta\|_{\ell^\infty([a,b])^{p+1}}
=
o_p(n^{-1/2}).
\]

\item The projected coefficient functions satisfy
\[
\|\bar\beta^{\est}-\hat\beta^{\est}\|_{\ell^\infty([a,b])^{p+1}}
=
o_p(n^{-1/2}).
\]
\end{enumerate}
\end{proposition}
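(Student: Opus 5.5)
The plan is to exploit that every map in the pipeline is linear or $1$-Lipschitz in sup norm. The whole argument then reduces to bounding one weighted average of first-stage errors.

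\emph{Step 1 (part (i), unprojected).} Write
\[
\bar\psi_x(u)-\hat\psi_x(u)=\frac1n\sum_{j=1}^n \hat s_j(Z_j,x)\bigl(\widehat Q_{Y_j}(u)-Q_{Y_j}(u)\bigr),
\]
so that
\[
\sup_{u\in[a,b]}|\bar\psi_x-\hat\psi_x|\le R_n\cdot\frac1n\sum_{j=1}^n|\hat s_j(Z_j,x)|.
\]
By Assumption~\ref{ass:full_rank}, Assumption~\ref{asspt:finite_moments} and the LLN, $\hat\Sigma_{ZZ}$, $\hat\Sigma_{ZX}$ and $\hat\mu$ are consistent, and the 2SLS map $\hat S$ converges in probability to $S_0$. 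Hence
\[
\frac1n\sum_j|\hat s_j(Z_j,x)|\le 1+\|x-\hat\mu_X\|\,\|\hat S\|\,\frac1n\sum_j\|Z_j-\hat\mu_Z\|=O_p(1)
\]
for fixed $x$. Assumption~\ref{asspt:empirical_quantiles} gives $\sqrt n R_n=o_p(1)$, and the first claim follows.

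\emph{Step 2 (part (i), projected).} The key fact is that the isotonic $L^2$ projection onto monotone functions on $[a,b]$ is nonexpansive in the sup norm: $\|\Pi_{\mathcal Q}f-\Pi_{\mathcal Q}g\|_\infty\le\|f-g\|_\infty$. I would justify this via the min–max representation of isotonic regression,
\[
\Pi_{\mathcal Q}f(u)=\sup_{s\le u}\inf_{t\ge u}\frac{1}{t-s}\int_s^t f
\]
(in the a.e. sense). Averages, sups and infs are all sup-norm $1$-Lipschitz, so the projection is as well. Alternatively, I would use that the projection commutes with adding constants and is order preserving, which implies $\infty$-nonexpansiveness. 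The second claim then follows from the first. This is the step I expect to require the most care: making the representation rigorous on $L^2([a,b])$ with representatives chosen appropriately, so that sup norms are meaningful. The rest is bookkeeping.

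\emph{Step 3 (part (ii)).} The coefficient maps are linear in the data: $\bar{\tilde\beta}_0-\tilde\beta_0$ is the average of $\widehat Q_{Y_j}-Q_{Y_j}$, and
\[
\bar{\tilde\beta}_1-\tilde\beta_1=\hat S\,\frac1n\sum_j(Z_j-\hat\mu_Z)\bigl(\widehat Q_{Y_j}-Q_{Y_j}\bigr).
\]
Both are bounded in sup norm by $R_n\cdot O_p(1)$, by the same argument as in Step 1.

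\emph{Step 4 (part (iii)).} The OLS step gives
\[
\bar\beta^{\est}-\hat\beta^{\est}=(\hat{\mathbf X}^{\T}\hat{\mathbf X}/n)^{-1}\frac1n\sum_j\hat{\mathbf X}_j\,\Delta_j,
\qquad
\Delta_j:=\Pi_{\mathcal Q}(\bar\psi_{X_j})-\Pi_{\mathcal Q}(\hat\psi_{X_j}).
\]
By Step 2, $\sup_u|\Delta_j|\le\sup_u|\bar\psi_{X_j}-\hat\psi_{X_j}|$. This is in turn at most $\sup_u|\bar{\tilde\beta}_0-\tilde\beta_0|+\|X_j-\hat\mu_X\|\sup_u\|\bar{\tilde\beta}_1-\tilde\beta_1\|$, which is uniform in $j$ because $\hat\psi_x$ is affine in $x$.

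Therefore
\[
\sup_u\|\bar\beta^{\est}-\hat\beta^{\est}\|\le\|(\hat{\mathbf X}^{\T}\hat{\mathbf X}/n)^{-1}\|\cdot\frac1n\sum_j\|\hat{\mathbf X}_j\|\bigl(1+\|X_j-\hat\mu_X\|\bigr)\cdot\|\bar{\tilde\beta}-\tilde\beta\|_\infty .
\]
The first two factors are $O_p(1)$ by the LLN, finite second moments and $\Sigma_{XX}\succ0$; the latter is implied by Assumption~\ref{ass:full_rank}, since $\Sigma_{ZX}$ has full column rank. The last factor is $o_p(n^{-1/2})$ by part (ii). This route avoids needing bounded support, which is consistent with the assumptions listed in the proposition.
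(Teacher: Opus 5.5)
Your proposal is correct and follows the paper's proof essentially step for step. The paper also bounds the coefficient differences by $R_n$ times $O_p(1)$ factors, and transfers this to $\bar\psi_x-\hat\psi_x$ and to the projected curves via the sup-norm contraction of $\Pi_{\mathcal Q}$, which it cites as Lemma A-6(iii) of \citet{van2025regression} rather than deriving. It then pushes the bound through the OLS step without using bounded support. The differences are cosmetic: you bound $\bar\psi_x-\hat\psi_x$ directly through the weights $\hat s_j$ instead of through the affine coefficient representation, and you make explicit why $\Sigma_{XX}\succ 0$ follows from Assumption~\ref{ass:full_rank}, which the paper leaves implicit.
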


As an immediate consequence, all of the limit theories in Section~\ref{sec:asymptotic_distribution} continue to hold verbatim for the feasible estimators.

\begin{corollary}[Asymptotic limits with empirical quantiles]\label{cor:empirical_quantiles}
Under Assumption~\ref{asspt:empirical_quantiles}, the feasible estimators based on $\widehat Q_{Y_j}$ have the same weak limits as their infeasible counterparts based on $Q_{Y_j}$. More precisely:
\begin{enumerate}[(i)]
\item if the assumptions of Theorem~\ref{thm:convergence_unprojected} hold, then for each fixed $x$,
\[
\sqrt{n}\bigl(\bar\psi_x(\cdot)-\psi_x(\cdot)\bigr)
\leadsto
\mathbb{G}_x(\cdot)
\qquad\text{in }\ell^\infty([a,b]);
\]

\item if the assumptions of Theorem~\ref{thm:convergence_projected} hold, then for each fixed $x$,
\[
\sqrt{n}\bigl(\Pi_{\mathcal Q}(\bar\psi_x)(\cdot)-\psi_x(\cdot)\bigr)
\leadsto
\mathbb{G}_x(\cdot)
\qquad\text{in }\ell^\infty([a,b]);
\]

\item if the assumptions of Theorem~\ref{thm:convergence_beta} hold, then
\[
\sqrt{n}\bigl(\bar{\tilde\beta}(\cdot)-\beta^{\mathrm{unc}}(\cdot)\bigr)
\leadsto
\mathbb{G}_\beta(\cdot)
\qquad\text{in }\ell^\infty([a,b])^{p+1};
\]

\item if the assumptions of Theorem~\ref{thm:pw_ols_clt} hold, then
\[
\sqrt{n}\bigl(\bar\beta^{\est}(\cdot)-\beta^{\est}(\cdot)\bigr)
\leadsto
\mathbb{G}_\beta(\cdot)
\qquad\text{in }\ell^\infty([a,b])^{p+1}.
\]
\end{enumerate}
\end{corollary}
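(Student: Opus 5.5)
The corollary follows from Proposition~\ref{prop:empirical_quantiles} combined with Slutsky's lemma in the space $\ell^\infty$. In each case I write the feasible centered process as the infeasible centered process plus a remainder. The infeasible process has the required weak limit by the corresponding earlier theorem, and the remainder is $o_p(1)$ uniformly in $u$. Weak convergence in $\ell^\infty([a,b])$ (or $\ell^\infty([a,b])^{p+1}$) is preserved under addition of a term that converges in outer probability to zero in sup-norm. This is the standard version of Slutsky's lemma for nonmeasurable maps, see \citet[Lemma~1.10.2]{vaart1996weak}. The limit process is tight and Gaussian, hence separable, which is all that lemma requires.

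For (i), I write
\[
\sqrt n(\bar\psi_x-\psi_x)=\sqrt n(\hat\psi_x-\psi_x)+\sqrt n(\bar\psi_x-\hat\psi_x).
\]
Theorem~\ref{thm:convergence_unprojected} gives the weak limit $\mathbb G_x$ for the first term. The first claim of Proposition~\ref{prop:empirical_quantiles}(i) makes the second term $o_p(1)$ in sup-norm.

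For (ii), I decompose $\sqrt n(\Pi_{\mathcal Q}(\bar\psi_x)-\psi_x)$ in the same way. The infeasible piece $\sqrt n(\Pi_{\mathcal Q}(\hat\psi_x)-\psi_x)$ converges by Theorem~\ref{thm:convergence_projected}. The remainder $\sqrt n(\Pi_{\mathcal Q}(\bar\psi_x)-\Pi_{\mathcal Q}(\hat\psi_x))$ is $o_p(1)$ by the second claim of Proposition~\ref{prop:empirical_quantiles}(i).

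For (iii), Theorem~\ref{thm:convergence_beta} handles the infeasible part and Proposition~\ref{prop:empirical_quantiles}(ii) handles the remainder. For (iv), Theorem~\ref{thm:pw_ols_clt} handles the infeasible part and Proposition~\ref{prop:empirical_quantiles}(iii) handles the remainder.

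The one step needing care is that the hypotheses line up. Proposition~\ref{prop:empirical_quantiles} requires Assumptions~\ref{ass:full_rank}, \ref{asspt:finite_moments}, and~\ref{asspt:empirical_quantiles}. Each item of the corollary assumes the hypotheses of the corresponding theorem, and every one of those theorems includes Assumptions~\ref{ass:full_rank}--\ref{asspt:finite_moments}. Together with Assumption~\ref{asspt:empirical_quantiles}, the proposition therefore applies in every case.

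I also check that the centering objects are unchanged: $\psi_x$, $\beta^{\mathrm{unc}}$, and $\beta^{\est}$ are population quantities defined from the true $Q_{Y_j}$. The only change in the feasible estimators is the replacement of $Q_{Y_j}$ by $\widehat Q_{Y_j}$, and the proposition controls exactly that difference. No further obstacle is expected, since all substantive work (in particular the Lipschitz property of the projection in sup-norm, used for part (iii) of the proposition) has already been done in Proposition~\ref{prop:empirical_quantiles}.
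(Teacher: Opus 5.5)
Your proposal is correct and follows the paper's own proof: write each feasible centered process as its infeasible counterpart plus a remainder that Proposition~\ref{prop:empirical_quantiles} makes $o_p(1)$ in sup-norm, then conclude by Slutsky's lemma. Your explicit check that the proposition's hypotheses are contained in each theorem's assumptions is a detail the paper leaves implicit.
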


\section{Monte Carlo simulations}
\label{sec:simulation}

To empirically validate the theoretical results above, we now compare unprojected and projected \est across several DGP configurations. Unprojected \est coincides with the \CLP and \MP estimators without individual covariates, though algorithmically our estimator, by working directly with the quantile functions as outcomes, does not require first-stage quantile regression. For each, we report coefficient integrated mean squared error (IMSE), the average squared Wasserstein distance $W_2^2$ between estimated and true conditional quantile functions, (uniform) confidence band coverage, and computational benchmarks. Across the reported configurations, the IMSE improves by up to 63\%. Our uniform and projected bootstraps also achieve correct nominal coverage with up to 1.4\% smaller width in the simulations (10\% in the empirical application). Finally, computational benchmarks show our one-step estimator is faster than the two-step estimators of \CLP and \MP without individual covariates (see Appendix \ref{sec:computation}). 

\subsection{Data-generating process}
The simulations use the grouped-data IV design in \CLP as the benchmark. Given our focus on unconditional group-level treatment effect, we drop \CLP's individual-level covariate term (see the discussion in Section \ref{sec:identification_interpretation}). The data-generating process (DGP) is
\begin{equation}\label{eq:dgp_sim}
Y_{ij} = b_0(U_{ij}) + \frac{U_{ij}}{2} + X_j\,\gamma(U_{ij}) + (\zeta_j-\tfrac{1}{2})U_{ij}
        + \sum_{k=1}^{p-1} W_{jk}h_k(U_{ij}),
\end{equation}
where $U_{ij}\sim U(0,1)$, $\zeta_j\sim U(0,1)$ and
\[
X_j^0=(\pi_Z+\delta(\zeta_j-\tfrac{1}{2}))Z_j+\zeta_j+\nu_j,
\qquad
Z_j,\nu_j\sim \exp(0.25\,\mathcal{N}(0,1)).
\]
The terms $U_{ij}/2$ and $(\zeta_j-1/2)U_{ij}$ are written separately to match CLP's location term and group error. Together they equal $\zeta_j U_{ij}$.

Panel~D in Table \ref{tab:simulations} is the exact no-individual-covariate CLP benchmark. If $b_0\equiv0$, $\gamma(u)=\sqrt u$, $\pi_Z=1$, $\delta=0$, $p=1$, and $X_j=X_j^0$, then
\[
y_{ij}=X_j\sqrt{u_{ij}}+\zeta_j u_{ij},\qquad X_j=Z_j+\zeta_j+\nu_j,
\]
which gives the DGP in CLP Appendix A with the individual covariate term omitted. The derivative of this population quantile function is $X_j/(2\sqrt u)+\zeta_j$, so the benchmark is strongly monotone because $X_j>0$ and $\zeta_j\ge0$.

Panels~A--C keep the same first-stage endogeneity but add features that are common in applications: moderate instruments, controls, heterogeneous first stages, and treatment effects that need not be monotone in $u$. In these panels we standardize the endogenous first stage, $\bar X_j^0=(X_j^0-\bar X^0)/s_{X^0}$, and set $X_j=B_X\tanh(\bar X_j^0/B_X)$ with $B_X=1.5$ to make sure $X$ has bounded support without artificially clipping the range. Controls, when included, are bounded in the same way: $W_{jk}=B_W\tanh(\tilde W_{jk}/B_W)$ with $B_W=1.5$. Panel~B sets their true effects to zero, so it isolates the finite-sample cost of estimating extra nuisance coefficient functions. Panel~C uses mixed control effects: odd controls have $h_k(u)=u$ and even controls have $h_k(u)=0.1\sin(k\pi u)$. The treatment coefficient is
\[
\gamma(u)=\sqrt{u}+\beta_{\text{slope}}\sin(2\pi u),
\]
so $\beta_{\text{slope}}=0$ gives the CLP coefficient and $\beta_{\text{slope}}>0$ adds curvature.

The extra baseline term $b_0(u)=\sigma q_0(u)$ lets us vary the shape of the within-group outcome distribution while preserving valid population quantile functions. For Panels~A--C, we choose $\sigma$ analytically. Let $m_0=\inf_{u\in[a,b]}q_0'(u)$, $L_\gamma=\sup_{u\in[a,b]}|\gamma'(u)|$, and $L_k=\sup_{u\in[a,b]}|h_k'(u)|$. We set
\[
\sigma m_0>B_XL_\gamma+B_W\sum_k L_k.
\]
Then every population quantile function generated by the DGP is strictly increasing on $[a,b]$. Any monotonicity violations reported below are therefore finite-sample violations in the fitted 2SLS curves, not failures of the population DGP.

\subsection{Results}

Table~\ref{tab:simulations} reports coefficient IMSE, $W_2^2$, and the fraction of groups with non-monotone fitted curves, all averaged over 500 replications with 19 quantile grid points from $u = 0.05$ to $u = 0.95$. Unless otherwise noted, the designs use $n = 50$ groups and $N = 50$ observations per group.

\begin{table}[htbp]
\centering\scriptsize
\caption{Monte Carlo results: IMSE and $W_2^2$ improvement from projection.}
\label{tab:simulations}
\resizebox{\textwidth}{!}{%
\begin{tabular}{lccccccr}
\toprule
 & \multicolumn{3}{c}{Coefficient IMSE} & \multicolumn{3}{c}{$W_2^2$} & \\
\cmidrule(lr){2-4} \cmidrule(lr){5-7}
 & Unproj. & Proj. & \% Gain & Unproj. & Proj. & \% Gain & Inv.~(\%) \\
\midrule
\multicolumn{8}{l}{\textit{Panel A: Instrument strength} ($n=50$, $N=50$, CLP-base first stage)} \\
\addlinespace[2pt]
$F$ & & & & & & & \\
5 & 31.06 & 11.54 & 62.8 & 17.82 & 9.38 & 47.3 & 14.7 \\
10 & 0.389 & 0.359 & 7.9 & 0.291 & 0.275 & 5.4 & 5.8 \\
21 & 0.155 & 0.155 & 0.4 & 0.156 & 0.156 & 0.2 & 1.4 \\
\addlinespace[4pt]
\multicolumn{8}{l}{\textit{Panel B: Included controls \& heterogeneous first stage} ($n=50$, $N=50$, median $F\approx 11$--$20$)} \\
\addlinespace[2pt]
$p=1$, $\delta=0.0$ & 0.155 & 0.155 & 0.4 & 0.156 & 0.156 & 0.2 & 1.4 \\
$p=1$, $\delta=0.5$ & 0.912 & 0.656 & 28.1 & 0.637 & 0.509 & 20.2 & 4.7 \\
$p=2$, $\delta=0.0$ & 0.139 & 0.137 & 1.2 & 0.164 & 0.163 & 0.5 & 1.7 \\
$p=2$, $\delta=0.5$ & 0.615 & 0.435 & 29.4 & 0.467 & 0.376 & 19.6 & 5.2 \\
$p=5$, $\delta=0.0$ & 0.184 & 0.182 & 1.1 & 0.245 & 0.244 & 0.4 & 2.6 \\
$p=5$, $\delta=0.5$ & 0.654 & 0.494 & 24.5 & 0.515 & 0.437 & 15.1 & 6.5 \\
\addlinespace[4pt]
\multicolumn{8}{l}{\textit{Panel C: Realistic combination} ($n=50$, $N=25$, median $F\approx 11$)} \\
\addlinespace[2pt]
$p=5$, $\delta=1.0$, LN, $\beta_{\text{slope}}=0.2$ & 9.23 & 7.42 & 19.6 & 7.85 & 6.94 & 11.5 & 13.0 \\
\addlinespace[4pt]
\multicolumn{8}{l}{\textit{Panel D: CLP no-individual-covariate benchmark} (median $F$ shown by row)} \\
\addlinespace[2pt]
$F \approx 11$, $(n,N)=(25,25)$ & 0.281 & 0.233 & 17.0 & 0.071 & 0.066 & 7.2 & 11.2 \\
$F \approx 11$, $(n,N)=(25,50)$ & 0.274 & 0.249 & 8.8 & 0.067 & 0.064 & 3.7 & 6.3 \\
$F \approx 20$, $(n,N)=(50,50)$ & 0.024 & 0.024 & 0.1 & 0.033 & 0.033 & 0.0 & 1.3 \\
\bottomrule
\end{tabular}%
}
\floatfoot{\footnotesize \textit{Notes:} 500 replications; $n = 50$, $N = 50$ unless noted; 19 quantile grid points ($u = 0.05, \ldots, 0.95$). IMSE $= \int \|\hat{\beta}_1(u) - \beta_1(u)\|^2\, du$ averaged over replications; $W_2^2$: average squared Wasserstein distance between estimated and true conditional distributions; ``Inv.'': fraction of groups with non-monotone fitted $\hat{\psi}_{X_j}$. Panel~A varies the first-stage $F$-statistic ($p = 1$, no controls). Panel~B adds included controls ($p$) with zero true direct effects and first-stage heterogeneity ($\delta$), isolating the finite-sample cost of estimating nuisance coefficient functions; its first-stage $F$ range reflects the displayed rows. Panel~C combines small cells, many controls, a heterogeneous first stage, lognormal outcomes, treatment heterogeneity, and a first stage above the usual weak-instrument threshold. All panels build on the CLP Appendix-A grouped-data structure. Panel~D replicates the exact DGP from \CLP: $Y_{ij} = X_j\sqrt{U_{ij}} + \zeta_j U_{ij}$ and $X_j = Z_j + \zeta_j + \nu_j$, with uncentered log-normal $Z_j$ and $\nu_j$.}
\end{table}

\paragraph{Instrument strength (Panel~A).} With a lognormal base distribution, a single endogenous regressor, and no controls, the largest gains occur when the first stage is weak or moderate. At $F \approx 5$, \est reduces IMSE by 63\% and $W_2^2$ by 47\%, with 15\% of groups having non-monotone fitted distributions. At $F \approx 10$ the corresponding gains fall to 8\% and 5\%. With a stronger first stage ($F \approx 21$), violations are rare and gains are close to zero.

\paragraph{Controls and heterogeneous first stage (Panel~B).} Panel~B adds controls to the estimating equation while setting their true direct effects to zero. This isolates the finite-sample cost of estimating more nuisance coefficient functions. With $\delta=0$, increasing $p$ from 1 to 5 raises the share of non-monotone fitted curves from 1.4\% to 2.6\%, while projection gains remain small. With first-stage heterogeneity ($\delta=0.5$), violations rise from 4.7\% to 6.5\%. The projection then matters for estimation: with $p=2$, IMSE falls by 29\% and $W_2^2$ by 20\%; with $p=5$, the reductions are 25\% and 15\%. The gains need not be monotone in $p$, but the invalid-rate column shows the dimensionality channel directly. Each additional regressor adds a dimension to the fitted curve $\hat{\psi}_{X_j}(u)=\tilde{\beta}_0(u)+\sum_k\tilde{\beta}_{1,k}(u)(X_{jk}-\hat\mu_k)$, making it harder for all dimensions to satisfy monotonicity at once.

\paragraph{Realistic combination (Panel~C).} Panel~C combines small cells ($N=25$), $p=5$ controls, a heterogeneous first stage ($\delta=1$), lognormal outcomes, treatment-effect curvature ($\beta_{\text{slope}}=0.2$), and median first-stage $F\approx 11$. This conservative stress test gives a 20\% IMSE reduction and a 12\% $W_2^2$ reduction. About 13\% of fitted quantile functions are non-monotone under unprojected \est.

\paragraph{CLP benchmark (Panel~D).} Panel~D is the exact no-individual-covariate CLP parameter point described above. In this all-positive-treatment design, population monotonicity is very strong: $Q_{Y_j}'(u)=X_j/(2\sqrt{u})+\zeta_j>0$. Projection gains are therefore concentrated at the smallest group counts, where sampling noise and weak first stages still create occasional fitted-curve violations. At $(n,N)=(25,25)$, IMSE falls by 17\%; by $(n,N)=(50,50)$ the gain is essentially zero.

\subsection{Inference}

Theorem~\ref{thm:pw_ols_clt} establishes that projected and unprojected \est share the same asymptotic Gaussian process under the stated assumptions, so standard inference (sandwich SEs, multiplier bootstrap) applies without modification.

Table~\ref{tab:coverage} reports pointwise and uniform coverage alongside band widths from 500 replications with $B = 500$ bootstrap draws. For unprojected \est we use sandwich SEs and the unprojected bootstrap; for projected \est we use the projected bootstrap, which runs PAVA inside each bootstrap draw. The first three rows use the simple $p=1$ design and vary $n$ and $N$ while keeping the median first-stage $F$ at about 10 or 20. The last three use the realistic design from Panel~C and vary $n$ and $N$, with a median $F$-statistic of 11 and 23.

\begin{table}[htbp!]
\centering\small
\caption{Coverage and uniform band widths for $\beta_1(u)$ (nominal 95\%).}
\label{tab:coverage}
\begin{tabular}{lcc cc cc r}
\toprule
& \multicolumn{2}{c}{PW coverage (\%)} & \multicolumn{2}{c}{UB coverage (\%)} & \multicolumn{2}{c}{UB width} & \\
\cmidrule(lr){2-3} \cmidrule(lr){4-5} \cmidrule(lr){6-7}
& Unproj. & Proj. & Unproj. & Proj. & Unproj. & Proj. & $\Delta$ width \\
\midrule
\multicolumn{8}{l}{\textit{Simple design ($p = 1$)}} \\
\addlinespace[2pt]
$n=50$, $N=25$, $F\approx 10$ & 95.4 & 95.4 & 93.8 & 93.8 & 2.31 & 2.28 & $-1.4\%$ \\
$n=50$, $N=50$, $F\approx 10$ & 95.5 & 95.5 & 94.0 & 94.0 & 1.74 & 1.73 & $-0.8\%$ \\
$n=100$, $N=50$, $F\approx 20$ & 95.5 & 95.5 & 95.8 & 95.4 & 1.25 & 1.25 & $-0.2\%$ \\
\addlinespace[4pt]
\multicolumn{8}{l}{\textit{Realistic design ($p = 5$, $\delta=1$, $\beta_{\text{slope}}=0.2$)}} \\
\addlinespace[2pt]
$n=50$, $N=25$, $F\approx 11$ & 95.6 & 95.7 & 96.8 & 96.6 & 6.97 & 6.94 & $-0.4\%$ \\
$n=50$, $N=50$, $F\approx 11$ & 94.6 & 94.6 & 93.8 & 93.8 & 5.23 & 5.19 & $-0.7\%$ \\
$n=100$, $N=50$, $F\approx 23$ & 95.6 & 95.6 & 95.8 & 95.8 & 3.67 & 3.66 & $-0.1\%$ \\
\bottomrule
\end{tabular}
\floatfoot{\footnotesize \textit{Notes:} 500 replications; $B = 500$ multiplier bootstrap draws; lognormal base distribution throughout. PW: pointwise; UB: uniform band. Band widths are medians across replications. ``$\Delta$ width'': percentage reduction in UB width for projected \est relative to unprojected \est. Unprojected \est uses sandwich SEs and the unprojected bootstrap; projected \est uses the projected bootstrap (projection applied inside each draw). \textit{Simple design} rows use $p=1$, $\delta=0$, and $\beta_{\text{slope}}=0$.}
\end{table}

Coverage is near-nominal for both methods across all configurations, confirming the asymptotic equivalence in Theorem~\ref{thm:pw_ols_clt}. The projection does not distort inference. Because \est has lower IMSE at comparable standard errors, it has weakly higher power for testing $\beta_1(u) = 0$.

The projected bootstrap produces similar or narrower uniform bands for \est. In these coverage simulations, the reduction is below 1.5\%; in the empirical CLP application it is about 10\% (Section~\ref{sec:empirical}). The projected bootstrap captures the finite-sample regularization from the projection, yielding a weakly tighter critical value for the sup-statistic.

\section{Empirical illustrations}\label{sec:empirical}

We present two applications. Section~\ref{sec:empirical_clp} revisits the distributional effects of Chinese import competition on wages studied by \citet{autor2013china} and \CLP, illustrating the finite-sample gains from the \est projection and inference. Section~\ref{sec:empirical_fsp} studies the Food Stamp Program's effect on birth weights analyzed by \citet{almond2011inside} and \MP, using the decomposition from Appendix~\ref{app:individual_decomposition} to compare the \est estimates to the conditional quantile treatment effect estimates.

\subsection{Import competition and the wage distribution}\label{sec:empirical_clp}

We revisit the distributional effects of Chinese import competition on U.S.\ wages, following \CLP and \citet{autor2013china} (ADH henceforth). Their setting is well-suited for illustrating \est: there is one endogenous group-level treatment (import exposure), a standard instrument (other-country imports), and a distribution-valued outcome (the wage distribution within each commuting zone (CZ)).

\paragraph{Setting and data.}
\CLP estimate the effect of rising Chinese imports on the distribution of log weekly wages across U.S.\ commuting zones. The unit of observation is a CZ--decade pair ($j = 1, \ldots, 722$ CZs, two periods: 1990--2000 and 2000--2007, giving $n = 1{,}444$ observations). For each CZ--decade, the outcome is the vector of 19 within-CZ empirical quantiles $\hat Q_{Y_j}(u_q)$ for $u_q \in \{0.05, 0.10, \ldots, 0.95\}$. The endogenous variable $X_j$ is the change in Chinese import exposure per worker in commuting zone $j$. Following \citet{autor2013china}, $Z_j$ is a shift-share instrument that assigns Chinese exports to other high-income countries to each CZ using lagged local industry employment shares, isolating China-side supply variation. The IV regression of $\hat Q_{Y_j}(u)$ on $X_j$ at the CZ level includes six continuous controls (manufacturing employment share, college share, foreign-born share, female employment share, routine task intensity, outsourcing exposure), eight census region dummies, and a period dummy, with the same controls included in the instrument vector $Z_j$. Regressions are weighted by CZ start-of-period population and standard errors are clustered by state (using a cluster multiplier bootstrap for the uniform bands). The first-stage $F$-statistic is 533, so the instrument is strongly relevant.

We apply \est to this data and compare the results with the original \CLP quantile-by-quantile 2SLS estimates, which do not include individual-level covariates. In their replication package, CLP estimate a 2SLS regression with the decade-equivalent \textit{change} in CZ wage quantiles as outcome. To match this approach, we project the level quantile functions in each period and then recover the slope coefficients by OLS on the long difference between these projected quantile functions. Since both estimators share the same asymptotic distribution under the conditions
of Theorem~\ref{thm:pw_ols_clt}, any differences between them reflect the finite-sample effect of the \est projection step.

\paragraph{Results.} Figure~\ref{fig:clp_comparison} overlays the \CLP estimates (black circles) with the \est estimates (red triangles) for the full sample. Both series tell the same qualitative story: import competition depresses wages across the distribution, with the largest effects at lower quantiles (around $-1.4$ log points at the 10th quantile) and smaller effects in the upper tail ($-0.4$ to $-0.5$). The two coefficient functions are visually indistinguishable. The reason, as mentioned, is that \CLP's specification is in long differences so the natural object whose validity the \est projection enforces is the implied \emph{level} CZ wage quantile function for each (CZ, decade) cell. Anchoring those level quantile functions at the IPUMS-derived 1990 baseline quantile, fewer than $2\%$ violate monotonicity (0.0\% in the full sample, 0.8\% for females, 1.4\% for males), and the projection alters the slope by at most $0.0012$ log points at any quantile. This does not mean the projection is irrelevant: applying it inside the bootstrap still smooths the finite-sample distribution of the coefficient process and tightens the confidence bands reported below.

\begin{figure}[ht!]
\centering
\includegraphics[width=0.85\textwidth]{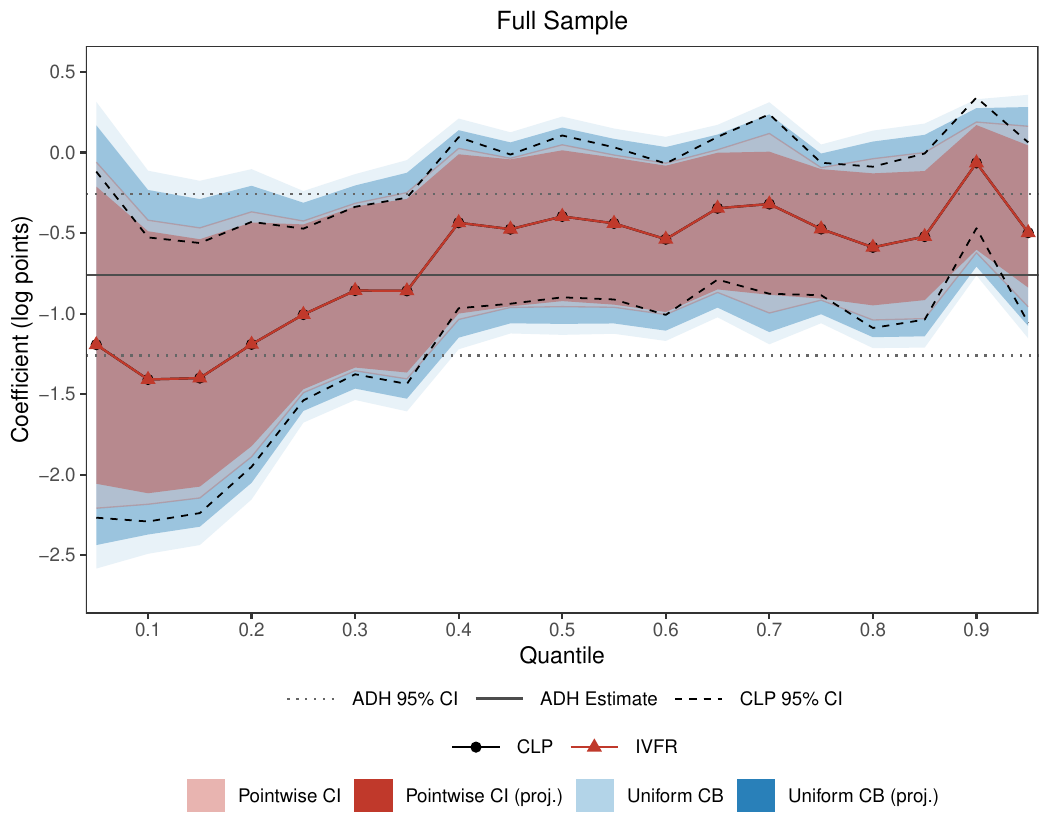}
\caption{Chinese import competition and the U.S. wage distribution.}
\floatfoot{\footnotesize \textit{Notes:} \CLP  estimates and \est estimates of the effect of Chinese import competition on log weekly wages, full sample ($n = 1{,}444$). Black circles: \CLP quantile-by-quantile 2SLS. Red triangles: \est. Dashed lines: \CLP pointwise 95\% CI (clustered by state). Red shaded bands around \est: projected bootstrap pointwise CI (darkest), sandwich pointwise CI (medium), and uniform confidence band (lightest), all clustered by state. Horizontal lines: ADH mean estimate and 95\% CI from \citet{autor2013china}.}
\label{fig:clp_comparison}
\end{figure}

Figure~\ref{fig:clp_comparison} displays four layers of confidence bands around the \est estimates. The two red bands show pointwise 95\% CIs: the outer (lighter) band uses cluster-robust sandwich SEs (identical to those of \CLP); the inner (darker) band uses projected-bootstrap SEs that apply PAVA inside each multiplier-bootstrap draw. The two blue bands show cluster-robust uniform 95\% bands over $u \in [0.05, 0.95]$: the outer (lighter) band uses the unprojected multiplier-bootstrap critical value; the inner (darker) band uses the projected-bootstrap critical value. The uniform bands are new---\CLP did not provide a procedure to construct uniform confidence bands. The projected pointwise CIs are on average 9.4\% narrower than the sandwich CIs (with reductions of up to 22\% at quantiles where the unconstrained coefficient function is most non-monotone, e.g.\ $u = 0.85$); the projected uniform bands are 10.1\% narrower at every $u$. Thus, the projection does meaningful empirical work for precision even though it does not change the CLP point estimate. Finally, we find that the unprojected multiplier-bootstrap SEs are within 1\% of the sandwich SEs at every quantile, confirming that both estimate the same asymptotic variance.

The uniform bands refine \CLP's conclusions about where effects are significant. \CLP reported pointwise CIs, which reject the null of no effect at 12 of 19 quantiles. The projected pointwise CIs reject at 15 quantiles, illustrating the power gain coming from the narrower confidence bands. Pointwise inference, however, does not account for simultaneous testing across the quantile grid. The projected uniform band, which does, rejects the null at quantiles concentrated in the 10th--35th percentile range, plus $u = 0.75$. The very bottom of the distribution ($u = 0.05$), where the \CLP pointwise CI barely excludes zero, does not survive the uniform correction. Above the median, effects are negative but imprecisely estimated. The projection adds modest power here as well: the unprojected uniform bands reject at 6 quantiles, and the projection flips $u = 0.75$ to significant. Import competition thus has its strongest and most robust effects on lower-middle wages, rather than at the very bottom of the distribution.

\paragraph{Subsampling exercise.} The full \CLP sample has 722 CZs, causing unprojected and projected \est to produce near-identical point estimates, in line with the simulation evidence. As shown, the projection nonetheless still provides benefits in the form of tighter confidence bands. To further assess the practical gains from the projection step at smaller sample sizes, we subsample the data at various CZ counts and compare IMSEs.

\begin{figure}[hb!]
\centering
\includegraphics[width=\textwidth]{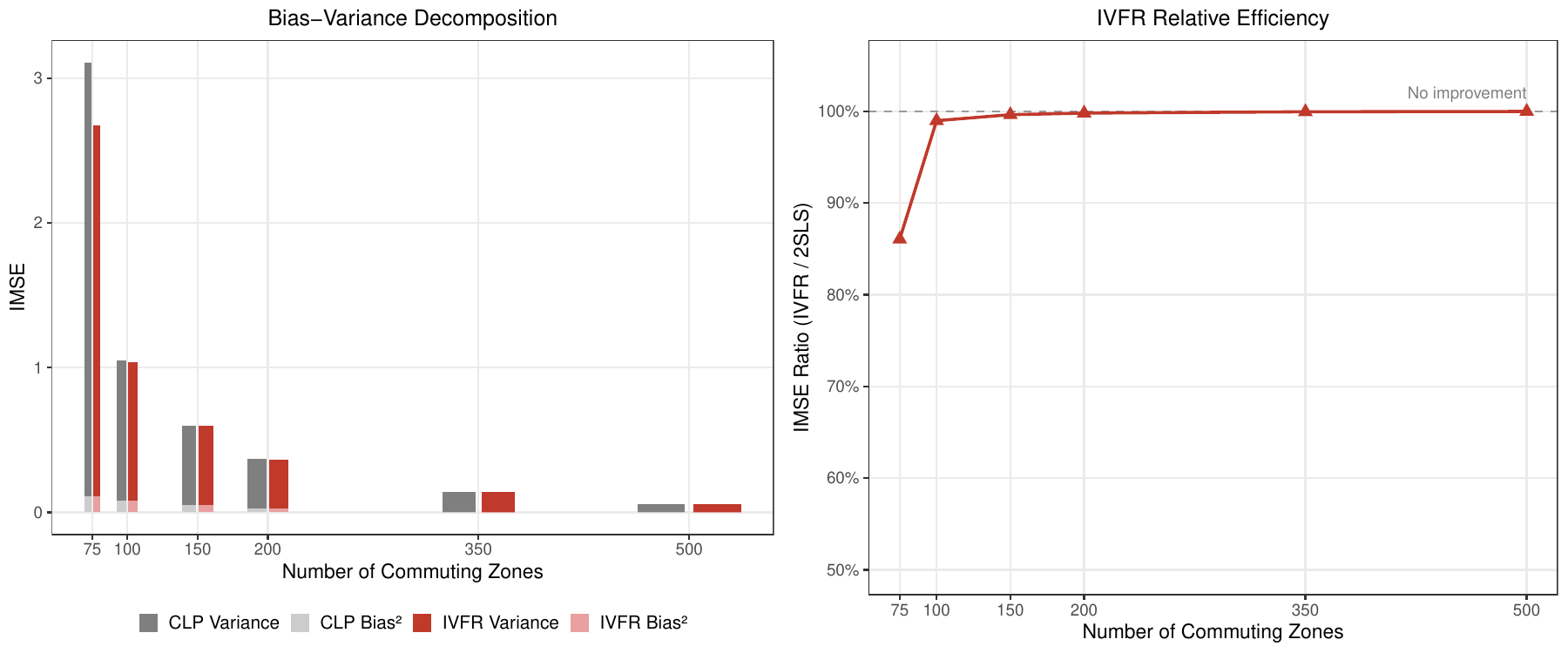}
\caption{Unprojected vs. projected \est IMSE in \CLP subsampling exercise}
\floatfoot{\footnotesize \textit{Notes:} Left panel: IMSE bias--variance decomposition for unprojected and projected \est at different numbers of commuting zones $M$ ($M = 50$ excluded; both methods have IMSE $> 170$). Bars show integrated variance (dark) and integrated squared bias (light), stacked. Right panel: IMSE ratio (\est); values below 100\% indicate \est outperforms. Based on 500 random subsamples per $M$, with full-sample projected \est as target.}
\label{fig:clp_subsampling}
\end{figure}

Concretely, for each subsample size $M \in \{75, 100, 150, 200, 350, 500\}$, we draw 500 random subsets of $M$ commuting zones (keeping both decades for each CZ), re-estimate both unprojected and level-projected \est on each subsample, and compute the $\text{IMSE} = \int (\hat{\beta}_1(u) - \beta_1^*(u))^2\, du$, where $\beta_1^*(u)$ is the full-sample \est estimate. We decompose the IMSE into integrated squared bias and integrated variance.

Figure~\ref{fig:clp_subsampling} reports the results. The left panel shows the bias-variance decomposition; both methods are variance-dominated at every sample size, with squared bias accounting for less than 10\% of total IMSE. The projection's effect is concentrated at the smallest sample sizes, where weaker first stages create larger fitted-CDF excursions for it to fix. At $M = 75$ (150 observations for 17 regressors), \est reduces IMSE by 14\%, almost entirely through variance reduction. At $M = 100$ the gain shrinks to 1\%, and at $M \ge 150$ the projection has essentially no IMSE effect. The simulation results in Table~\ref{tab:simulations} show the same pattern across DGPs: projection improves the IMSE materially when instruments are weak or fitted curves are oscillatory.

\subsection{Food stamps and the birth weight distribution}\label{sec:empirical_fsp}

We now turn to the empirical application in \MP, who study the effect of the Food Stamp Program (FSP) on birth weights. This setting provides a natural laboratory for the decomposition developed in Appendix~\ref{app:individual_decomposition}: the treatment (food stamps) is individually targeted but varies at the group level, creating a gap between the conditional quantile treatment effect $\delta(u)$ estimated by \MP and the group-level distributional effect $\beta(u)$ estimated by \est.

\paragraph{Setting and data.}
Following \citet{almond2011inside} and \MP, we study the staggered county-level introduction of the FSP between 1964 and 1975. Groups are county--trimester cells; within-group units are births in a county--trimester. The outcome is birth weight in grams. We use natality microdata from the NCHS (1968--1977) merged with county-level food stamp adoption dates from \citet{almond2011inside}, restricting to births by Black mothers ($n \approx 2.8$ million individual births in approximately 17{,}000 county--trimester groups with at least 25 observations). Controls include per capita income, government transfers, and 1960 county characteristics interacted with a linear time trend. All regressions include county, state$\times$year, and trimester fixed effects, with standard errors clustered by county.

The treatment indicator $\mathit{fsp}_{ct}$ equals one if a food stamp program was in place at least three months before birth in county $c$ in trimester $t$. Identification in \MP treats variation in FSP rollout as quasi-random, conditional on controls, so both estimators use $Z = X$. 

\paragraph{Estimands.}
\MP estimate a conditional quantile model with individual-level covariates (child sex, mother's age and its square, legitimacy status),
\begin{equation}\label{eq:mp_model}
Q(u, \mathit{bw}_{ict} \mid \mathit{fsp}_{ct}, x_{1ict}, x_{2ct}, v_{ct}) = \mathit{fsp}_{ct}\, \delta(u) + x_{1ict}'\, \beta(u) + x_{2ct}'\,\gamma_2(u) + \alpha(u, v_{ct}),
\end{equation}
where $\mathit{fsp}_{ct}$ was introduced above, $ x_{1ict}$ are individual-level controls, $x_{2ct}$ are the county-level controls mentioned above, and $\alpha(u, v_{ct})$ is the county-level unobservable.
The coefficient $\delta(u)$ is the direct within-type effect: the shift in the $u$-th conditional quantile for a given type of mother. The \est estimand $\beta_1(u)$ targets the effect on the realized group quantile---the actual $u$-th percentile of birth weights in a county--trimester cell.

\paragraph{Results.}
We replicate the \MP estimates using their \texttt{mdqr} package \citep{mdqr} and estimate $\beta_1(u)$ via \est on the same data. Figure~\ref{fig:fsp_decomposition} shows the decomposition from equation~\eqref{eq:TE_beta_new3}:
\[
\beta_1(u) = \delta(u) + \underbrace{\E[\bar W_j(1) - \bar W_j(0)]'\gamma(u)}_{\text{composition}} + \underbrace{\E[\Delta_j(u;1) - \Delta_j(u;0)]}_{\text{re-ranking}}.
\]
The composition-fixed quantile $Q_{Y_j}^{\oplus}(u)$ is computed as the within-group mean of the first-stage fitted values from \texttt{mdqr}, and the re-ranking gap $\Delta_j(u) = Q_{Y_j}(u) - Q_{Y_j}^{\oplus}(u)$ is the difference between the realized group quantile and this average. Each component is then regressed on $\mathit{fsp}$ with the same fixed effects.

\begin{figure}[ht!]
\centering
\includegraphics[width=0.85\textwidth]{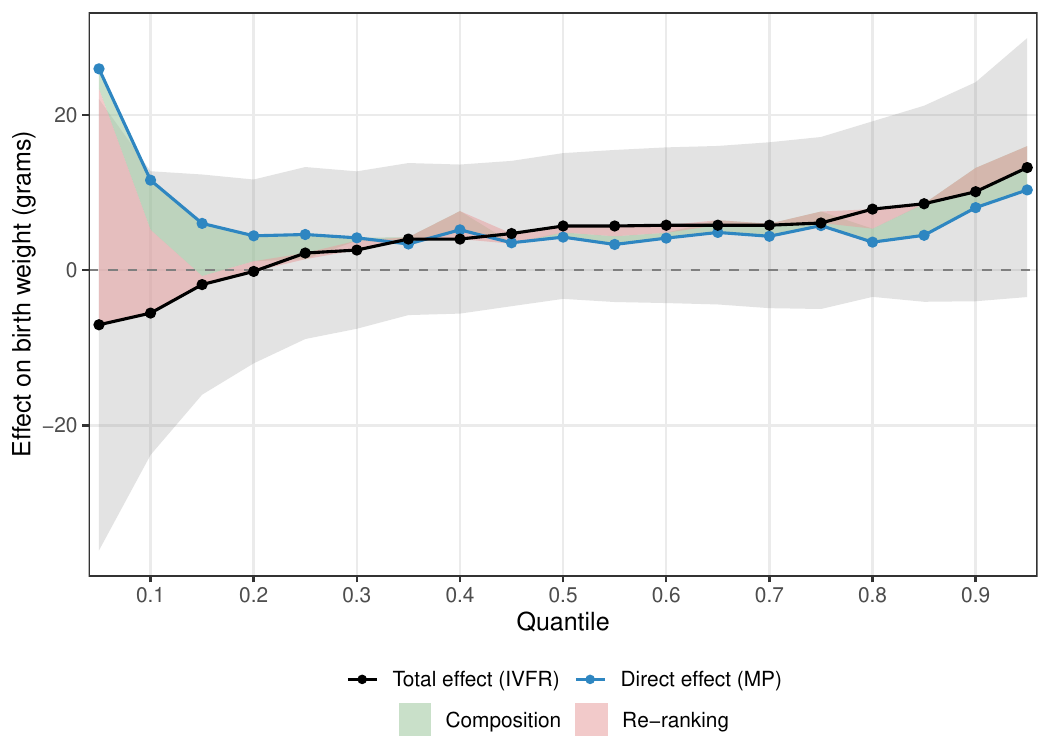}
\caption{Effects of FSP on the birth weight distribution (black mothers).}
\floatfoot{\footnotesize \textit{Notes:} Black: total \est effect $\beta_1(u)$, with a 95\% uniform confidence band (grey ribbon) from a cluster multiplier bootstrap over $u\in[0.05,0.95]$. Blue: direct within-type effect $\delta(u)$ from \citet{melly2025minimum}.}
\label{fig:fsp_decomposition}
\end{figure}

At the 5th percentile, $\delta(0.05) \approx 26$ grams (s.e.\ $9.5$): holding mother type fixed, FSP raises the conditional 5th percentile of birth weight by about 26 grams, a statistically significant effect.\footnote{Our point estimates of $\delta(u)$ are slightly smaller than those reported in \MP---e.g., 26 vs.\ nearly 30 grams at the 5th percentile---most likely reflecting minor differences in the NCHS natality vintage and the county crosswalk used to merge births with the \citet{almond2011inside} FSP rollout data, which leave us with 18{,}865 black county--trimester cells versus 19{,}482 in \MP.} Yet $\beta_1(0.05) \approx -7$ grams (s.e.\ $10.8$)---the actual 5th percentile of the county birth weight distribution does not significantly move. The 33-gram gap is almost entirely accounted for by the re-ranking
component, while the composition component is negligible.

The \est coefficient function $\beta_1(u)$ slopes upward: it is essentially zero at most quantiles and rises to roughly $+13$ grams at the 95th percentile. The pointwise 95\% CI excludes zero at $u=0.95$ and is borderline at the next three grid points, consistent with FSP shifting the right tail of the birth weight distribution modestly upward. The uniform 95\% confidence band, however, contains zero at every quantile. The pointwise significance at the top should therefore be read as suggestive evidence rather than a statistically robust finding at the conventional uniform level.

The large re-ranking term can be explained as follows. An individual at the \emph{group}'s 5th percentile is generally not at her own \emph{conditional} 5th percentile. A young unmarried mother (whose child has lower baseline birth weight) sitting at the group's 5th percentile may be at, say, her conditional 15th percentile, where the treatment effect $\delta(0.15) \approx 6$ grams is far smaller than $\delta(0.05) \approx 26$ grams. The re-ranking term aggregates these within-type percentile shifts across all types at the group quantile cutoff: because $\delta(u)$ is steeply decreasing at the left tail, the effective treatment effect at the group's 5th percentile is a density-weighted average of $\delta$ evaluated at \emph{higher} within-type ranks, where the effect is much smaller.

The composition channel---whether FSP changes \emph{who gives birth}---is negligible throughout the distribution. This is consistent with food stamps affecting nutrition rather than fertility decisions.

While the model in \eqref{eq:mp_model} imposes a common $\delta(u)$ across types, we can run the \MP estimator separately within each of four demographic cells (mother age $<24$/$\geq 24$ $\times$ legitimate/illegitimate) to examine heterogeneity. Figure~\ref{fig:fsp_type_qte} shows the type-specific conditional QTEs $\delta_k(u)$. At the 5th percentile, the effects range from $60$ grams for older married mothers to $-35$ grams for younger married mothers, with younger unmarried mothers---the dominant type at the group's left tail---showing an intermediate effect of approximately $40$ grams.

The overall $\delta(0.05) \approx 26$ grams is thus an average across heterogeneous type-specific effects, implicitly weighted by each type's conditional density at the group quantile cutoff.  Types that are more concentrated at the left tail of the birth weight distribution---young unmarried mothers, who account for 46\% of the density weight at the 5th percentile but only 36\% of births---receive disproportionate weight. This density-weighting mechanism is inherent to any estimand based on conditional quantiles evaluated at a common quantile index.

\begin{figure}[htbp]
\centering
\includegraphics[width=0.85\textwidth]{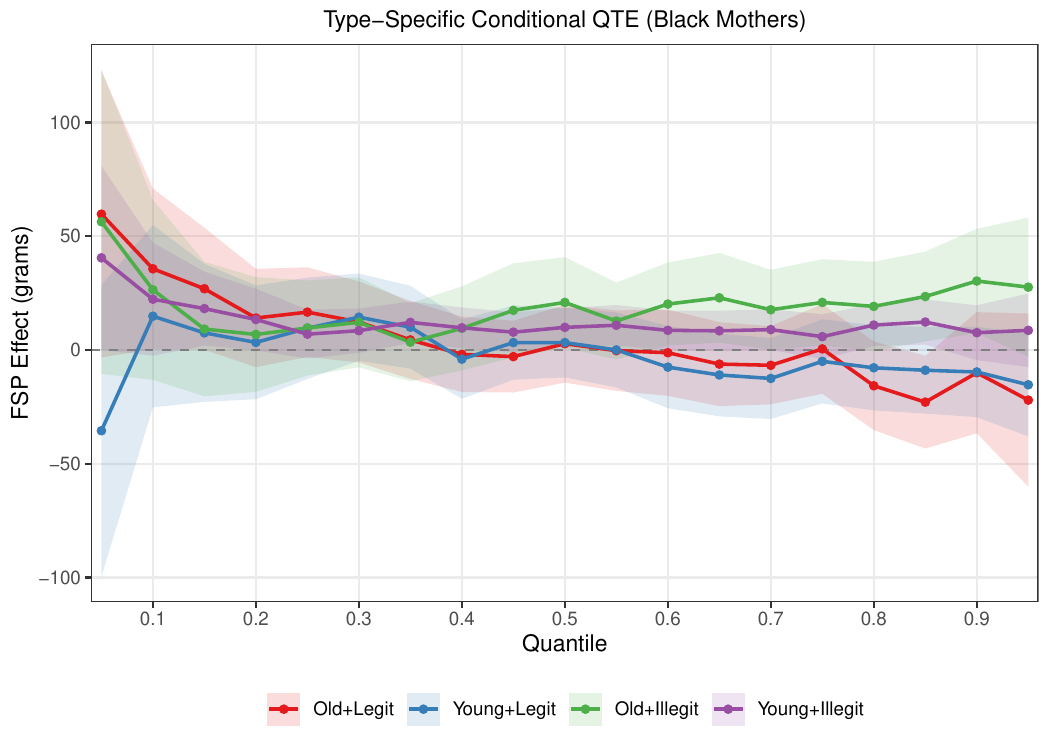}
\caption{Conditional QTEs hide large type heterogeneity at the left tail.}
\floatfoot{\footnotesize \textit{Notes:} Type-specific conditional QTEs $\delta_k(u)$ for black mothers, estimated by running \texttt{mdqr} separately within each of four demographic cells (mother age $<24$/$\geq 24$ $\times$ legitimate/illegitimate). Shaded bands are pointwise 95\% CIs clustered by county. }
\label{fig:fsp_type_qte}
\end{figure}

\paragraph{Discussion.}
The conditional quantile treatment effect is large and positive at the left
tail, while the aggregate distributional effect is close to zero and statistically
insignificant, because the types who drive the conditional estimate are not the types who are marginal at the group quantile cutoff. Young unmarried mothers---who comprise 36\% of births but 46\% of the probability density weight across types at the group's 5th percentile---drive the conditional estimate with a type-specific effect of approximately 40 grams, while the aggregate effect on the county's actual 5th percentile is indistinguishable from zero.

The two estimands answer different questions. The conditional effect $\delta(0.05) \approx 26$ grams documents that FSP delivers large nutritional benefits to the most vulnerable births, holding observed maternal type fixed. The \est estimate $\beta_1(0.05) \approx -7$ grams (insignificant) shows that these within-type gains do not translate to detectable shifts in the left tail of the county's birth-weight distribution. A county health department tracking the actual low-birth-weight rate---rather than conditional quantiles within demographic cells---would not detect a significant effect of FSP on the left tail of the birth weight distribution. This is because the babies at the county's 5th percentile are predominantly of young unmarried mothers sitting at much higher within-type ranks (around their 12th--15th conditional percentile), where the treatment effect $\delta$ is only 6--10 grams.

More broadly, this application illustrates when the two approaches diverge. For treatments that operate at the individual level but are identified through group-level variation (food stamps, school vouchers, Medicaid expansions), the conditional effect $\delta(u)$ captures the individual-level mechanism, while $\beta_1(u)$ captures the aggregate distributional impact. The re-ranking channel---which is first-order whenever $\delta(u)$ varies with $u$ and types are heterogeneously distributed across the group's outcome distribution---can dramatically attenuate the aggregate effect even when the conditional effect is large. For treatments that operate at the group level, like the import competition shock in Section~\ref{sec:empirical_clp}, the group quantile $\beta_1(u)$ is the natural estimand, and the decomposition is not needed.

\section{Conclusion}

This paper develops IV Fr\'echet regression (\est), a framework for estimating the effect of endogenous group-level treatments on distribution-valued outcomes. The approach recasts grouped quantile IV regression as an instrumental-variables problem in Wasserstein space. This perspective yields a simple estimator: construct IV-weighted average quantile curves, project them onto the space of valid quantile functions, and recover coefficient functions by OLS.

The paper makes three main contributions. First, it provides an identification result showing that, under standard quantile IV conditions, the structural distributional effect is the solution to an IV-weighted Fr\'echet problem. This gives the fitted object a clear interpretation as an instrumented average distribution. Second, it introduces a monotone projection step that guarantees valid fitted distributions and weakly improves finite-sample estimation error, while leaving the first-order asymptotic distribution unchanged under mild conditions. Third, it establishes functional asymptotic normality and multiplier-bootstrap procedures for pointwise and novel uniform inference over quantile indices.

Simulations and two empirical applications illustrate the practical value of the method. In finite samples, the projection can substantially reduce the integrated mean squared error (IMSE) relative to existing grouped quantile IV estimators. In an application to Chinese import competition, the method delivers tighter confidence bands and lower IMSE. Additionally, using our novel uniform confidence bands, we show that the evidence for wage losses is concentrated away from the very bottom of the distribution. In a second application to the effect of county-level food stamp programs on the birth weight distribution, we find no evidence for distributional effects using our uniform bands. More broadly, our results suggest that directly modeling outcomes as random distributions can sharpen both estimation and inference in settings where policy effects are inherently distributional.

\newpage

\bibliography{references} 
\bibliographystyle{apalike} 

\appendix
\clearpage

\section{The projection operator}

Here, we provide more formal details about the projection operator $\Pi_{\mathcal{Q}}$. Let $L^2([0,1])$ be the Hilbert space of square-integrable functions on the unit interval, equipped with the standard inner product $\langle f, g \rangle = \int_0^1 f(u) g(u) \dd u$ and norm $\|f\|_{L^2} = \sqrt{\langle f, f \rangle}$. Let $\calQ \subset L^2([0,1])$ be the subset of functions that are non-decreasing. $\calQ$ is a closed convex cone in $L^2([0,1])$.

The projection operator $\Pi_{\mathcal{Q}}: L^2([0,1]) \to \calQ$ maps any function $f \in L^2([0,1])$ to its unique closest element in $\calQ$ under the $L^2$ norm:
\[ \Pi_{\mathcal{Q}}(f) := \argmin_{h \in \calQ} \| f - h \|_{L^2}^2 = \argmin_{h \in \calQ} \int_0^1 (f(u) - h(u))^2 \dd u. \]
This projection is well-defined and unique \citep{rychlik2012projecting}, see also \citet[Proof of Proposition 2]{petersen2021wasserstein}. Computationally, for a function evaluated on a grid, $\Pi_{\mathcal{Q}}(f)$ can be computed using the Pool Adjacent Violators Algorithm (PAVA) \citep{ayer1955empirical, miles1959complete, kruskal1964nonmetric}. If $f \in \calQ$, then $\Pi_{\mathcal{Q}}(f) = f$.

\section{Decomposition under an individual-level model}
\label{app:individual_decomposition}

This appendix develops a formal decomposition of the group-level treatment effect $\beta_1(u)$ into direct, composition, and re-ranking components, under an individual-level quantile model, as in \CLP and \MP. The total causal effect $\beta_1(u) = E[Q_{Y(1)}(u) - Q_{Y(0)}(u)]$ was defined in Section~\ref{sec:identification_interpretation}, and we assume the treatment $X_j$ is binary for ease of exposition.

\subsection{Individual-level model}

To understand the channels through which group-level treatment affects the outcome distribution, consider a model at the level of within-group individuals, as in \CLP and \MP.

Let individuals $i$ be nested within groups $j$, and let $W_{ij} \in \mathbb{R}^d$ denote observed individual characteristics. Suppose individual outcomes satisfy:
\begin{equation}\label{eq:micro}
 Q_{Y_{j}}(u \mid W_{ij}, X_j) = W_{ij}^{\T} \gamma(u) + X_j^{\T} \delta(u) + \alpha_j(u), \qquad \E[\alpha_j(u) \mid W_{ij}, X_j] = 0,
\end{equation}
where $\gamma(u)$ is the coefficient on individual characteristics, $\delta(u)$ is the direct effect of group treatment on the quantile conditional on those characteristics, and $\alpha_j(u)$ is group-level unobserved heterogeneity. By the Skorohod representation \citep{skorokhod1956limit}, this implies a model for the scalar-valued outcome $Y_{ij}$ of individual $i$ in group $j$,
\begin{equation} \label{eq:skorohod}
Y_{i j}=W_{i j}^{\mathrm{T}} \gamma\left(U_{i j}\right)+X_j^{\mathrm{T}} \delta\left(U_{i j}\right)+\alpha_j\left(U_{i j}\right),
\end{equation}
where $U_{ij} \sim \mathrm{Uniform}[0,1]$.
Denote the distribution of individual characteristics within group $j$ as $H_j(\cdot; X_j)$, with mean $\bar{W}_j(X_j) := \int w \, \dd H_j(w; X_j)$.

A key feature of this setup is that the mixing distribution $H_j$ may itself depend on $X_j$. Treatment can affect who is in each group through hiring, attrition, migration, or sorting.

In what follows, we decompose $\beta_1(u)$ into a direct effect $\delta(u)$ and two composition terms. To interpret the decomposition economically, we consider the setting where the group is a commuting zone, the individual is a worker, and the outcome distribution is the local wage distribution.

\subsection{Two quantile concepts}

Two distinct group-level quantile objects arise from aggregating the individual conditional quantile model. Let observable worker ``types'' be indexed by $w$, with type distribution $H_j(w;X)$ in group $j$ under treatment state $X\in\{0,1\}$. Assume the linear conditional quantile specification
\begin{equation}\label{eq:indiv_q}
Q_{Y_{j}}(u\mid w,X)=w^{\T}\gamma(u)+X\,\delta(u)+\alpha_j(u),
\end{equation}
For ease of exposition, assume for the moment that $\alpha_j(u)$ is exogenous to treatment assignment (random assignment at the group level); Section~\ref{sec:iv_decomposition} discusses the IV analogue.

\paragraph{The composition-fixed quantile.}
Following \citet{petersen2021wasserstein}, the Wasserstein--Fr\'echet integral (WFI) quantile averages conditional \emph{quantile functions} across types:
\begin{equation}\label{eq:wfi_new3}
Q_{Y_j}^{\oplus}(u;X)
:=\int Q_{Y_{j}}(u\mid w,X)\,\dd H_j(w;X)
=\bar W_j(X)^{\T}\gamma(u)+X\,\delta(u)+\alpha_j(u),
\end{equation}
where $\bar W_j(X):=\int w\,\dd H_j(w;X)$. This object answers: \emph{what is the average type-specific $u$-quantile, holding the type composition fixed at $H_j(\cdot;X)$?}

\paragraph{The realized group quantile.}
The observed group quantile is the quantile of the mixture distribution:
\begin{equation}\label{eq:group_new3}
Q_{Y_j}(u;X)
:=\inf\Big\{y:\int F_{Y_{j}}(y\mid w,X)\,\dd H_j(w;X)\ge u\Big\},
\end{equation}
where $F_{j}$ is the CDF of the individual-level outcome $Y_{ij}$ in \eqref{eq:skorohod}, i.e., the inverse of the quantile function $Q_{Y_{j}}$. The group-level quantile function, $Q_{Y_j}(u; X)$, is what the researcher observes: \emph{the unconditional $u$-quantile of outcomes in group $j$.}

Since quantiles and mixing do not commute, define the gap
\begin{equation}\label{eq:Delta_new3}
\Delta_j(u;X):=Q_{Y_j}(u;X)-Q_{Y_j}^{\oplus}(u;X).
\end{equation}
A convenient representation uses the re-ranking map,
\begin{equation}\label{eq:rerank_map_new3}
t_{jX}(w;u)\;:=\;F_{Y_{j}}\!\big(Q_{Y_j}(u;X)\mid w,X\big)\in[0,1],
\end{equation}
the within-type percentile attained by type $w$ at the \emph{group} $u$-quantile cutoff. Let $Q_{Y_{j}}(\cdot\mid w,X)$ denote the conditional quantile function (inverse CDF). Then
\begin{equation}\label{eq:Delta_as_D_new3}
\Delta_j(u;X)=\int D_{jX}(u\mid w)\,\dd H_j(w;X),
~~
D_{jX}(u\mid w):=
Q_{Y_{j}}\!\big(t_{jX}(w;u)\mid w,X\big)-Q_{Y_{j}}(u\mid w,X).
\end{equation}
Thus the gap is the average \emph{within-type percentile shift} required for each type to hit the common group cutoff.

\subsection{Treatment effect decomposition}

Let $\beta(u)$ be the slope from the population projection of the realized group quantile $Q_{Y_j}(u;X_j)$ onto $(1,\tilde X_j)$, with $\tilde X_j:=X_j-\mu_X$ and $\Sigma_{XX}:=\E[\tilde X_j^2]$:
\begin{equation}\label{eq:beta_proj_new3}
\beta(u):=\Sigma_{XX}^{-1}\E\!\left[\tilde X_j\,Q_{Y_j}(u;X_j)\right].
\end{equation}
Using the identity
\begin{equation}\label{eq:Q_split_new3}
Q_{Y_j}(u;X)=Q_{Y_j}^{\oplus}(u;X)+\Delta_j(u;X),
\end{equation}
together with the closed form for $Q_{Y_j}^{\oplus}$ in \eqref{eq:wfi_new3}, we obtain the decomposition
\begin{equation}\label{eq:beta_decomp_new3}
\beta_1(u)
=
\underbrace{\delta(u)}_{\text{Direct (within-type)}}
+
\underbrace{\Sigma_{XX}^{-1}\E\!\left[\tilde X_j\,\bar W_j(X_j)^{\T}\gamma(u)\right]}_{\text{Mean composition / sorting}}
+
\underbrace{\Sigma_{XX}^{-1}\E\!\left[\tilde X_j\,\Delta_j(u;X_j)\right]}_{\text{Re-ranking / aggregation}}
.
\end{equation}
Equation \eqref{eq:beta_decomp_new3} is the form that extends immediately to IV: in the IV case, the same terms appear with $\tilde X_j$ replaced by the instrument-induced projection of $X_j$ (and the corresponding $\Sigma_{XX}^{-1}$ replaced by the usual 2SLS matrix), exactly as in 2SLS.

Given the binary treatment, \eqref{eq:beta_decomp_new3} is equivalent to a decomposition in treatment effects. Specifically, for any square-integrable function $g(X_j)$,
\begin{equation}\label{eq:binary_proj_identity}
\Sigma_{XX}^{-1}\E\!\left[\tilde X_j\,g(X_j)\right]
=
\E\!\left[g(1)-g(0)\right],
\end{equation}
since $\Sigma_{XX}=\mu_X(1-\mu_X)$ and $\E[\tilde X_j g(X_j)]=\mu_X(1-\mu_X)\{g(1)-g(0)\}$ with $X_j$ binary. Applying \eqref{eq:binary_proj_identity} yields
\begin{equation}\label{eq:TE_beta_new3}
\beta_1(u)
=
\underbrace{\delta(u)}_{\text{Direct (within-type)}}
+
\underbrace{\E\!\left[\bar W_j(1)-\bar W_j(0)\right]^{\T}\gamma(u)}_{\text{Mean composition / sorting}}
+
\underbrace{\E\!\left[\Delta_j(u;1)-\Delta_j(u;0)\right]}_{\text{Re-ranking / aggregation (TE on the gap)}}
.
\end{equation}

This decomposition separates three economically distinct channels by which treatment changes a group's realized wage quantile $Q_{Y_j}(u;X)$.
\begin{enumerate}
\item \textbf{Direct (within-type) effect $\delta(u)$.} This is the partial-equilibrium shift in the $u$-th \emph{conditional} quantile for a fixed type. For example, holding skill group fixed (high-skill vs.\ low-skill), $\delta(u)$ is the average change in the $u$-th wage quantile within each skill group when the group is treated.

\item \textbf{Mean composition / sorting $\E[\bar W_j(1)-\bar W_j(0)]^{\T}\gamma(u)$.} Treatment can change who is in the group, through, for example, entry/exit or sorting, so the treated group may become more high-skill on average. This term prices that workforce shift using $\gamma(u)$, i.e., the ``return to type'' at percentile $u$.

\item \textbf{Re-ranking / aggregation $\E[\Delta_j(u;1)-\Delta_j(u;0)]$.} Even if the group's average type were unchanged, the group's $u$-quantile depends on \emph{which within-type ranks} are marginal at the group cutoff. If high-skill wages have a thicker upper tail, the group's 90th percentile may correspond to a lower within-high-skill percentile than within-low-skill. Treatment can change this mapping (the $t_{jX}(w;u)$ functions) by changing dispersion/tails and/or changing the mix of types, which moves realized quantiles beyond what $\delta(u)$ captures.
\end{enumerate}

The re-ranking effect can be further decomposed. Abbreviating $t_{jX}(w;u)$ by $t_X$, it is simply the integral of \eqref{eq:indiv_q} over $H_j(\cdot;X)$ across all types $w$,
\begin{equation}\label{eq:Delta_decomp_binary_new4}
\begin{aligned}
\Delta_j(u;X)
&=
\E_{H_j(\cdot;X)}\!\Big[
w^{\T}\big(\gamma(t_{jX}(w;u))-\gamma(u)\big)
\Big]
+
X\,\E_{H_j(\cdot;X)}\!\Big[
\delta(t_{jX}(w;u))-\delta(u)
\Big]
\\
&\qquad\qquad
+
\E_{H_j(\cdot;X)}\!\Big[
\alpha_j(t_{jX}(w;u))-\alpha_j(u)
\Big].
\end{aligned}
\end{equation}

The gap $\Delta_j(u;X)$ is the average \emph{within-type percentile shift} required for each observable type to reach the common group cutoff $Q_{Y_j}(u;X)$.
If high-skill and low-skill workers have differently shaped conditional wage distributions (e.g., if the distribution of high-skilled workers has a thicker upper tail), then the workers who are marginal at the group's $u$-th quantile are not at the same within-type quantile across types: $t_{jX}(w;u)\neq u$ in general.
Equation \eqref{eq:Delta_decomp_binary_new4} makes clear that re-ranking affects the realized group quantile only through \emph{shape}, i.e., through how $\gamma(\cdot)$, $\delta(\cdot)$, and $\alpha_j(\cdot)$ vary with the quantile index.

Treatment can affect this rearrangement because $X$ changes the mixture distribution itself: it can shift the type distribution $H_j(\cdot;X)$ (entry/exit/sorting) and it can change the mapping $w\mapsto t_{jX}(w;u)$ (which within-type quantiles are pivotal at the group cutoff).
Thus, even holding fixed the within-type effect $\delta(u)$, the same non-commutativity mechanism that generates $\Delta_j(u;X)$ in any state will generally produce different gaps under $X=1$ and $X=0$.

The decomposition further shows that the re-ranking channel is first-order only to the extent that there are shape effects: it is amplified when (i) returns to type vary over the distribution ($u\mapsto\gamma(u)$ is not flat), (ii) treatment effects vary over quantiles ($u\mapsto\delta(u)$ is not flat), and/or (iii) the latent group quantile function $\alpha_j(u)$ is curved.

\subsection{IV interpretation} \label{sec:iv_decomposition}

When we relax the exogeneity assumption in favor of the instrument exogeneity assumption $E\left[Z_j \alpha_j(u)\right]=0$, the IV estimand admits a similar decomposition,
\begin{equation}\label{eq:beta-iv-decomp}
\beta_1^{\mathrm{IV}}(u)
=
\delta(u)
+
\Pi_{Z\to X}\E\!\left[\tilde Z_j\,\bar W_j(X_j)^{\T}\gamma(u)\right]
+
\Pi_{Z\to X}\E\!\left[\tilde Z_j\,\Delta_j(u;X_j)\right]
\end{equation}
where $\Pi_{Z\to X}:=(\Sigma_{XZ}\Sigma_{ZZ}^{-1}\Sigma_{ZX})^{-1}\Sigma_{XZ}\Sigma_{ZZ}^{-1}$ is the usual population 2SLS operator.

For the group-level IV regression model in \eqref{eq:linear_model}, we can write,
\begin{equation}\label{eq:eta_def}
\eta_j(u):=\alpha_j(u)+r^{\text{mean}}_j(u)+r^{\text{shape}}_j(u).
\end{equation}
where
\begin{align}
r^{\text{mean}}_j(u)
&:=
\bar W_j(X_j)^{\T}\gamma(u)
-
\Big(\Pi_{Z\to X}\E[\tilde Z_j\,\bar W_j(X_j)^{\T}\gamma(u)]\Big)X_j,
\label{eq:r_mean_def}
\\
r^{\text{shape}}_j(u)
&:=
\Delta_j(u;X_j)
-
\Big(\Pi_{Z\to X}\E[\tilde Z_j\,\Delta_j(u;X_j)]\Big)X_j,
\label{eq:r_shape_def}
\end{align}
the parts of the composition and re-ranking terms that are not predicted by the instrument.
Since these are projections, in the just-identified case, these residuals satisfy
$\E[\tilde Z_j r^{\text{mean}}_j(u)]=\E[\tilde Z_j r^{\text{shape}}_j(u)]=0$ by construction.

As a result, $\E[\tilde Z_j\eta_j(u)]=\E[\tilde Z_j\alpha_j(u)]$, so the IV moment condition
$\E[\tilde Z_j\eta_j(u)]=0$ is equivalent to,
\begin{equation}\label{eq:iv_exogeneity_alpha}
\E[\tilde Z_j\alpha_j(u)]=0.
\end{equation}
In the overidentified case, the same equivalence holds for the population 2SLS
normal equations,
\[
\Sigma_{XZ}\Sigma_{ZZ}^{-1}\E[\tilde Z_j\eta_j(u)]
=
\Sigma_{XZ}\Sigma_{ZZ}^{-1}\E[\tilde Z_j\alpha_j(u)].
\]
In other words, the instrument may induce changes in workforce composition $\bar W_j(X_j)$ and in re-ranking $\Delta_j(u;X_j)$---the equilibrium channels captured by \eqref{eq:beta-iv-decomp}---and validity hinges only on the standard restriction that $Z_j$ is orthogonal to the latent group component $\alpha_j(u)$. Moreover, any group-level instrument valid for the individual-level model in Eq. \eqref{eq:indiv_q} is also valid for the group-level model in Eq. \eqref{eq:linear_model}, and vice versa.

\section{Computational benchmarks}\label{sec:computation}
Both \CLP and \MP estimate the coefficient function $\beta_1(u)$ by running a separate regression for each quantile level $u$. \MP solve $n \times Q$ within-group quantile regressions, where $Q$ is the number of points in the quantile grid, in the first stage and then perform a second-stage GMM estimation. \CLP similarly estimate within-group quantile regressions and then run $Q$ independent 2SLS regressions on the group-level intercept. When there are no individual-level covariates, our approach avoids quantile regression entirely: it computes sample quantile functions directly by sorting within each group, then solves the 2SLS normal equations for all $Q$ quantile levels simultaneously in a single matrix operation. The key source of the speedup is this vectorization across the quantile grid---our second stage is one matrix solve rather than $Q$ separate regressions.

Table~\ref{tab:timing} reports median computation times across 20 replications for 19 quantile levels, with no individual-level covariates. At $n = 100$ groups and $N = 200$ individuals per group, \MP takes 5.4 seconds, \CLP takes 41 milliseconds, and \est (including the sample quantile computation) takes 7 milliseconds---a speedup of roughly $800\times$ over \MP and $6\times$ over \CLP. The \MP computation time grows with both $n$ and $N$ because it solves $n \times Q$ quantile regression problems on $N$ observations each; \CLP grows only with $n$ (its $Q$ regressions are on $n$ group-level observations); and our method is nearly insensitive to $N$ beyond the initial sort. At $n = 500$, $N = 1{,}000$ (500K total observations), \MP takes 22.7 seconds while \est takes 65 milliseconds, a $349\times$ speedup.

This speed difference matters for bootstrap inference. The multiplier bootstrap in Section~\ref{sec:asymptotic_distribution} requires $B$ evaluations of the estimator ($B = 2{,}000$ in our empirical application). With \est this takes seconds; repeating the \MP or \CLP estimation $B$ times would take minutes to hours.

\begin{table}[htbp]
\centering
\caption{Computational cost of \est relative to existing estimators.}
\label{tab:timing}
\small
\begin{tabular}{rr r rrr rr}
\toprule
$n$ & $N$ & Total obs & \MP & \CLP & 2SLS & \est & Speedup \\
\midrule
50 & 50 & 2,500 & 1,739 & 24 & 2 & 3 & $696\times$ \\
50 & 1,000 & 50,000 & 2,241 & 23 & 4 & 5 & $448\times$ \\
100 & 200 & 20,000 & 5,399 & 41 & 5 & 7 & $771\times$ \\
100 & 1,000 & 100,000 & 7,752 & 36 & 10 & 11 & $705\times$ \\
200 & 200 & 40,000 & 4,157 & 27 & 7 & 10 & $416\times$ \\
200 & 1,000 & 200,000 & 5,747 & 25 & 15 & 18 & $319\times$ \\
500 & 200 & 100,000 & 8,895 & 26 & 19 & 25 & $356\times$ \\
500 & 1,000 & 500,000 & 22,668 & 34 & 60 & 65 & $349\times$ \\
\bottomrule
\end{tabular}
\floatfoot{\footnotesize \textit{Notes:} Median computation time (milliseconds) for 19 quantile levels, no individual-level covariates; median of 20 replications. \MP: within-group quantile regression + second-stage 2SLS. \CLP: quantile-by-quantile 2SLS. 2SLS and \est: sample quantiles + vectorized matrix 2SLS (with PAVA projection for \est). Speedup column: \MP time / \est time. All timings are single-threaded on an Apple M1 Pro with 16\,GB RAM. The \est column includes sample quantile computation and the PAVA projection; the 2SLS column includes sample quantiles but not the projection.}
\end{table}

\clearpage
\section{Proofs}\label{sec:proofs_appendix}

\begin{proof}[Proof of Proposition \ref{prop:frechet_equiv}]
Using the quantile representation of the 2-Wasserstein distance, $W_2^2(\mu,\nu) = \int_0^1 (Q_\mu(u) - Q_\nu(u))^2\, du$, the IV-weighted Fr\'echet functional at candidate $w$ is
\begin{align} \label{eq:iv-weighted-frechet}
\E\!\left[s(Z,x)\, W_2^2(Y, w)\right]
&= \E\!\left[s(Z,x) \int_{(0,1)} \big(Q_Y(u) - Q_w(u)\big)^2\, du\right].
\end{align}
By the assumed finiteness of the IV-weighted Fr\'echet functional,
\[
E\!\left[|s(Z,x)|\,\|Q_Y\|_{L^2(0,1)}^2\right]<\infty,
\]
and since $Q_w\in L^2(0,1)$, Fubini's theorem allows us to exchange expectation
and integration:
\begin{align*}
\eqref{eq:iv-weighted-frechet} &= \int_{(0,1)} \E\!\left[s(Z,x)\big(Q_Y(u)^2 - 2 Q_Y(u) Q_w(u) + Q_w(u)^2\big)\right] du \\
&= \int_{(0,1)} \left\{\E[s(Z,x) Q_Y(u)^2] - 2 Q_w(u)\, \psi_x(u) + Q_w(u)^2\right\} du,
\end{align*}
where we used $\E[s(Z,x)] = 1$ and $\psi_x(u) = \E[s(Z,x) Q_Y(u)]$. Assumption~\ref{ass:full_rank} ensures that $s(Z,x)$ is well-defined, and
$E[s(Z,x)]=1$ because $E[\tilde Z]=0$. Completing the square,
\begin{align*}
&= \underbrace{\int_{(0,1)} \left\{\E[s(Z,x) Q_Y(u)^2] - \psi_x(u)^2\right\} du}_{=:\, C(x),\;\text{does not depend on }w} + \int_{(0,1)} \big(Q_w(u) - \psi_x(u)\big)^2\, du.
\end{align*}
The first term $C(x)$ does not depend on $w$, so the minimizer over $w \in \calP$ minimizes $\|Q_w - \psi_x\|_{L^2}^2$ over $Q_w \in \mathcal{Q}$, which is $\Pi_{\mathcal{Q}}(\psi_x)$ by definition of the $L^2$ projection. If $\psi_x$ is itself a valid quantile function, then $\Pi_{\mathcal{Q}}(\psi_x) = \psi_x$ and it is the a.e.-unique minimizer in $L^2$.
\end{proof}

\begin{proof}[Proof of Lemma \ref{lem:identification_FIVR}]

By Assumption~\ref{ass:instrument_exogeneity},
\[
\E[\tilde Z_j\eta_j(u)]=0.
\]
Together with Assumption~\ref{ass:full_rank}, this gives
\[
\psi_x(u)
=
\E[s(Z_j,x)Q_{Y_j}(u)]
=
\beta_0(u)\E[s(Z_j,x)]
+
\beta_1(u)^{\!\T}\E[s(Z_j,x)(X_j-\mu_X)]
+
\E[s(Z_j,x)\eta_j(u)].
\]
We evaluate each term in turn.
\begin{enumerate}
\item Since \( \E[\tilde{Z}_j] = 0 \), we have \( \E[s(Z_j, x)] = 1 \).

 \item For the second term,
  \[
  \E[s(Z_j, x) (X_j - \mu_X)] = (x - \mu_X)^{\!\T} (\Sigma_{ZX}^{\!\T} \Sigma_{ZZ}^{-1} \Sigma_{ZX})^{-1} \Sigma_{ZX}^{\!\T} \Sigma_{ZZ}^{-1} \E[\tilde{Z}_j (X_j - \mu_X)] = x - \mu_X,
  \]
  as \( \E[\tilde{Z}_j (X_j - \mu_X)] = \Sigma_{ZX} \) and \( (\Sigma_{ZX}^{\!\T} \Sigma_{ZZ}^{-1} \Sigma_{ZX})^{-1} \Sigma_{ZX}^{\!\T} \Sigma_{ZZ}^{-1} \Sigma_{ZX} = I_p \).
\item For the third term, \( \E[s(Z_j, x) \eta_j(u)] = \E[\eta_j(u)] + (x - \mu_X)^{\!\T} (\dots) \E[\tilde{Z}_j \eta_j(u)] = 0 \), since \( \E[\eta_j(u)] = 0 \) and \( \E[\tilde{Z}_j \eta_j(u)] = 0 \).
\end{enumerate}
As a result, 
\[
\psi_x(u) = \beta_0(u) + \beta_1(u)^{\!\T} (x - \mu_X) = q(x, u).
\]
Since \( q(x, u) \) is non-decreasing in \( u \) by construction, $\psi_x(\cdot) = q(x, \cdot)$ is a valid quantile function for all $x \in \mathcal{X}$.
\end{proof}

\begin{proof}[Proof of Theorem~\ref{thm:pw_ols_improvement}]
Let $b = (b_0, b_1)$ be any reference coefficients such that $q_b(X_j, \cdot) := b_0(\cdot) + b_1(\cdot)^{\!\T}(X_j - \hat{\mu}_X) \in \mathcal{Q}$ for all $j = 1, \ldots, n$.

\medskip
Since $q_b(X_j, \cdot) \in \mathcal{Q}$ for each $j$, Lemma~\ref{lemma:improvement} with $p = 2$ gives
\[
\int_0^1 \big|\hat{Q}(X_j, u) - q_b(X_j, u)\big|^2\, \dd u \;\leq\; \int_0^1 \big|\hat{\psi}_{X_j}(u) - q_b(X_j, u)\big|^2\, \dd u.
\]

Averaging over $j = 1, \ldots, n$ and exchanging the order of summation and integration:
\begin{equation}\label{eq:avg_improvement_pf}
\int_0^1 \frac{1}{n}\sum_{j=1}^n \big|\hat{Q}(X_j, u) - q_b(X_j, u)\big|^2\, \dd u \;\leq\; \int_0^1 \frac{1}{n}\sum_{j=1}^n \big|\hat{\psi}_{X_j}(u) - q_b(X_j, u)\big|^2\, \dd u.
\end{equation}

Fix $u \in [0,1]$. Let $\hat{\mathbf{X}}$ denote the $n \times (p+1)$ design matrix with $j$-th row $(1,\; (X_j - \hat{\mu}_X)^{\!\T})$. The \est coefficients are defined by $\hat{\beta}^{\est}(u) = (\hat{\mathbf{X}}^{\!\T}\hat{\mathbf{X}})^{-1}\hat{\mathbf{X}}^{\!\T} \hat{Q}(u)$, where $\hat{Q}(u) = (\hat{Q}(X_1,u),\ldots,\hat{Q}(X_n,u))^{\!\T}$. Write the OLS fitted values as $\hat{q}^{\est}(X_j, u) = \hat{\beta}_0^{\est}(u) + \hat{\beta}_1^{\est}(u)^{\!\T}(X_j - \hat{\mu}_X)$ and the residuals as $e_j(u) = \hat{Q}(X_j, u) - \hat{q}^{\est}(X_j, u)$.

Both $\hat{q}^{\est}(X_j, u)$ and $q_b(X_j, u)$ are linear in $(1, (X_j - \hat{\mu}_X)^{\!\T})$, so their difference $\hat{q}^{\est}(X_j, u) - q_b(X_j, u)$ lies in the column space of $\hat{\mathbf{X}}$, while $e_j(u)$ lies in its orthogonal complement by the OLS normal equations. The Pythagorean theorem therefore gives
\begin{equation}\label{eq:pythag_pf}
\frac{1}{n}\sum_{j=1}^n \big|\hat{Q}(X_j, u) - q_b(X_j, u)\big|^2 = \frac{1}{n}\sum_{j=1}^n \big|\hat{q}^{\est}(X_j, u) - q_b(X_j, u)\big|^2 + \frac{1}{n}\sum_{j=1}^n e_j(u)^2.
\end{equation}

Since $\frac{1}{n}\sum_{j=1}^n (X_j - \hat{\mu}_X) = 0$ by construction,  $\frac{1}{n}\hat{\mathbf{X}}^{\!\T}\hat{\mathbf{X}}$ is block-diagonal,
\[
\frac{1}{n}\hat{\mathbf{X}}^{\!\T}\hat{\mathbf{X}} = \begin{pmatrix} 1 & 0^{\!\T} \\ 0 & \hat{\Sigma}_{XX} \end{pmatrix}.
\]
Writing $\delta_0(u) = \hat{\beta}_0^{\est}(u) - b_0(u)$ and $\delta_1(u) = \hat{\beta}_1^{\est}(u) - b_1(u)$, the cross term vanishes:
\begin{align*}
\frac{1}{n}\sum_{j=1}^n \big|\hat{q}^{\est}(X_j, u) - q_b(X_j, u)\big|^2
&= \frac{1}{n}\sum_{j=1}^n \big|\delta_0(u) + \delta_1(u)^{\!\T}(X_j - \hat{\mu}_X)\big|^2 \\
&= \delta_0(u)^2 + 2\,\delta_0(u)\,\delta_1(u)^{\!\T}\underbrace{\frac{1}{n}\sum_{j=1}^n(X_j - \hat{\mu}_X)}_{=\,0} + \delta_1(u)^{\!\T}\hat{\Sigma}_{XX}\,\delta_1(u) \\
&= \delta_0(u)^2 + \delta_1(u)^{\!\T}\hat{\Sigma}_{XX}\,\delta_1(u).
\end{align*}

Since $\frac{1}{n}\sum_j e_j(u)^2 \geq 0$, combining~\eqref{eq:pythag_pf} and the above gives, for each $u$,
\begin{equation}\label{eq:lhs_lower_pf}
\big|\hat{\beta}_0^{\est}(u) - b_0(u)\big|^2 + \big(\hat{\beta}_1^{\est}(u) - b_1(u)\big)^{\!\T}\hat{\Sigma}_{XX}\big(\hat{\beta}_1^{\est}(u) - b_1(u)\big) \;\leq\; \frac{1}{n}\sum_{j=1}^n \big|\hat{Q}(X_j, u) - q_b(X_j, u)\big|^2.
\end{equation}

Since $\hat{\psi}_{X_j}(u) = \tilde{\beta}_0(u) + \tilde{\beta}_1(u)^{\!\T}(X_j - \hat{\mu}_X)$ is exactly linear in $\hat{\mathbf{X}}$, regressing $(\hat{\psi}_{X_1}(u), \ldots, \hat{\psi}_{X_n}(u))^{\!\T}$ on $\hat{\mathbf{X}}$ recovers $(\tilde{\beta}_0(u), \tilde{\beta}_1(u))$ with zero residuals. The same block-diagonal expansion therefore gives
\[
\frac{1}{n}\sum_{j=1}^n \big|\hat{\psi}_{X_j}(u) - q_b(X_j, u)\big|^2 = \big|\tilde{\beta}_0(u) - b_0(u)\big|^2 + \big(\tilde{\beta}_1(u) - b_1(u)\big)^{\!\T}\hat{\Sigma}_{XX}\big(\tilde{\beta}_1(u) - b_1(u)\big).
\]

To finish, integrate~\eqref{eq:lhs_lower_pf} over $u \in [0,1]$ and apply~\eqref{eq:avg_improvement_pf} to obtain,
\[
\|\hat{\beta}_0^{\est} - b_0\|_{L^2}^2 + \|\hat{\beta}_1^{\est} - b_1\|_{\hat{\Sigma}_{XX}, L^2}^2 \;\leq\; \int_0^1 \frac{1}{n}\sum_{j=1}^n \big|\hat{\psi}_{X_j}(u) - q_b(X_j, u)\big|^2\, \dd u = \|\tilde{\beta}_0 - b_0\|_{L^2}^2 + \|\tilde{\beta}_1 - b_1\|_{\hat{\Sigma}_{XX}, L^2}^2,
\]
where the equality integrates the previous equation over $u$, which gives Eq.~\eqref{eq:pw_improvement}.
\end{proof}

\begin{proof}[Proof of Proposition \ref{prop:coord_improvement}]
By FWL, for each $u$ the $k$-th slope coefficient from regressing \[(\hat{Q}(X_1,u),\ldots,\hat{Q}(X_n,u))^{\!\T}\] on $(1_n,C)$ is
\[
\hat\beta_{1,k}^{\est}(u)
=\frac{1}{n\hat v_k}\sum_{j=1}^n r_{jk}\,\hat{Q}(X_j,u).
\]
Since $\hat\psi_{X_j}(u)=\tilde\beta_0(u)+(X_j-\hat\mu_X)^{\!\T}\tilde\beta_1(u)$ is exactly linear in $(1_n,C)$, the same identity gives $\tilde\beta_{1,k}(u)=\frac{1}{n\hat v_k}\sum_j r_{jk}\,\hat\psi_{X_j}(u)$, so
\begin{equation}\label{eq:Delta_k}
\Delta_k(u):=\hat\beta_{1,k}^{\est}(u)-\tilde\beta_{1,k}(u)=\frac{1}{n\hat v_k}\sum_{j=1}^n r_{jk}\,D_{X_j}(u).
\end{equation}

Fix $j$ with $r_{jk}\neq 0$. Since $\hat{Q}(X_j,\cdot)=\Pi_{\mathcal Q}(\hat\psi_{X_j})$ is the closest element of $\mathcal Q$ to $\hat\psi_{X_j}$ in $L^2$, it satisfies the first-order condition: for every $h\in\mathcal Q$,
\[
\langle D_{X_j},\; h-\hat{Q}(X_j,\cdot)\rangle_{L^2}\ge 0,
\]
where remember that $D_{X_j}=\hat{Q}(X_j,\cdot)-\hat\psi_{X_j}$ is the projection correction. Since $q_b(X_j,\cdot)\in\mathcal Q$ by assumption, taking $h=q_b(X_j,\cdot)$:
\[
\langle D_{X_j},\; q_b(X_j,\cdot)-\hat{Q}(X_j,\cdot)\rangle_{L^2}\ge 0.
\]
Substituting $q_b(X_j,\cdot)-\hat{Q}(X_j,\cdot)=-e_{jk}-r_{jk}(\tilde\beta_{1,k}-b_{1,k})-D_{X_j}$ from~\eqref{eq:decomp_identity}:
\begin{equation}\label{eq:vi_unconditional}
r_{jk}\,\langle D_{X_j},\,\tilde\beta_{1,k}-b_{1,k}\rangle_{L^2}
\;\le\;
-\|D_{X_j}\|_{L^2}^2 - \langle D_{X_j},\,e_{jk}\rangle_{L^2}.
\end{equation}
Summing over $j$ with $r_{jk}\neq 0$, dividing by $n\hat v_k$, and noting that $\tilde\beta_{1,k}-b_{1,k}$ does not depend on $j$:
\[
\Big\langle \frac{1}{n\hat v_k}\sum_{j:\,r_{jk}\neq 0} r_{jk}\,D_{X_j},\;\tilde\beta_{1,k}-b_{1,k}\Big\rangle_{L^2}
\;\le\;
-\frac{1}{n\hat v_k}\sum_{j:\,r_{jk}\neq 0}\|D_{X_j}\|_{L^2}^2
-\frac{1}{n\hat v_k}\sum_{j:\,r_{jk}\neq 0}\langle D_{X_j},\,e_{jk}\rangle_{L^2}.
\]
By~\eqref{eq:Delta_k}, the left-hand side equals $\langle\Delta_k,\,\tilde\beta_{1,k}-b_{1,k}\rangle_{L^2}$, giving~\eqref{eq:crossterm_general}.
\begin{equation}\label{eq:crossterm_general}
\langle\tilde\beta_{1,k}-b_{1,k},\;\Delta_k\rangle_{L^2}
\;\le\;
-\frac{1}{n\hat v_k}\sum_{j\in J_k}\|D_{X_j}\|_{L^2}^2
-\frac{1}{n\hat v_k}\sum_{j\in J_k}\langle D_{X_j},\,e_{jk}\rangle_{L^2}.
\end{equation}
By Cauchy--Schwarz applied to~\eqref{eq:Delta_k} and the definition of $\hat v_k :=  \frac1n \sum_{i=1}^n r_{jk}^2$:
\begin{equation}\label{eq:norm_bound}
\|\Delta_k\|_{L^2}^2\;\le\;\frac{1}{n\hat v_k}\sum_{j\in J_k}\|D_{X_j}\|_{L^2}^2.
\end{equation}
Since $\hat\beta_{1,k}^{\est} - b_{1,k} = (\tilde\beta_{1,k} - b_{1,k}) + \Delta_k$, expanding the squared norm gives
\[
\|\hat\beta_{1,k}^{\est}-b_{1,k}\|_{L^2}^2
= \|\tilde\beta_{1,k}-b_{1,k}\|_{L^2}^2
+ 2\langle\tilde\beta_{1,k}-b_{1,k},\,\Delta_k\rangle_{L^2}
+ \|\Delta_k\|_{L^2}^2.
\]
Substituting the bounds~\eqref{eq:crossterm_general} and~\eqref{eq:norm_bound} on the cross-term and the squared correction yields~\eqref{eq:coord_bound_general}.
\end{proof}

\begin{proof}[Proof of Corollary \ref{cor:coord_sufficient}]
Under~\eqref{eq:Ak_feas}, the point $q_b(X_j,\cdot)+e_{jk}(\cdot)=\hat\psi_{X_j}(\cdot)+r_{jk}(b_{1,k}(\cdot)-\tilde\beta_{1,k}(\cdot))\in\mathcal Q$ (by~\eqref{eq:decomp_identity}) provides a tighter feasible point in the variational inequality, yielding
\[
r_{jk}\langle D_{X_j},\,\tilde\beta_{1,k}-b_{1,k}\rangle_{L^2}
\;\le\;
-\|D_{X_j}\|_{L^2}^2
\]
without the $\langle D_{X_j}, e_{jk}\rangle$ term. The rest follows as in the proof of Proposition~\ref{prop:coord_improvement}.
\end{proof}

\begin{lemma}[Local smoothness of weights]\label{lem:smooth_g_from_fullrank}
Let $\theta=(\mu_X,\mu_Z,\Sigma_{ZX},\Sigma_{ZZ})$ and
\[
g(\theta,Z)\;=\;1+(x-\mu_X)^{\!\top}
\Big(\Sigma_{ZX}^{\!\top}\Sigma_{ZZ}^{-1}\Sigma_{ZX}\Big)^{-1}
\Sigma_{ZX}^{\!\top}\Sigma_{ZZ}^{-1}(Z-\mu_Z).
\]
If Assumption~\ref{ass:full_rank} holds at $\theta_0$, then there exists an open neighborhood $\mathcal N$ of $\theta_0$ such that:
\begin{enumerate}[(i)]
\item $\Sigma_{ZZ}(\theta)$ and $A(\theta):=\Sigma_{ZX}(\theta)^{\!\top}\Sigma_{ZZ}(\theta)^{-1}\Sigma_{ZX}(\theta)$ are invertible for all $\theta\in\mathcal N$;
\item for each fixed $Z$, the map $\theta\mapsto g(\theta,Z)$ is $C^2$ on $\mathcal N$;
\item for any compact $\mathcal K\subset\mathcal N$ there exist constants $C_0,C_1<\infty$ such that
\[
\sup_{\theta\in\mathcal K}\big\|\nabla^2_{\theta\theta} g(\theta,Z)\big\|
\;\le\; C_0+C_1\|Z\|\qquad\text{for all $Z$.}
\]
\end{enumerate}
\end{lemma}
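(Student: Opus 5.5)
The plan rests on two facts: matrix inversion is real-analytic on the open set of invertible matrices, and $g$ is affine in $Z$. Parts (i) and (ii) follow from general smoothness and openness arguments. Part (iii) follows from tracking which derivatives can carry a factor of $Z$.

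For (i), I would note that the determinant is a polynomial in the entries, hence continuous in $\theta$. By Assumption~\ref{ass:full_rank}, $\det\Sigma_{ZZ}(\theta_0)\neq 0$. Full column rank of $\Sigma_{ZX}(\theta_0)$ together with positive definiteness of $\Sigma_{ZZ}(\theta_0)$ makes $A(\theta_0)$ positive definite, so $\det A(\theta_0)>0$. On the open set where $\Sigma_{ZZ}$ is invertible, $A(\theta)$ is a continuous, indeed rational, function of the entries. Hence $\theta\mapsto(\det\Sigma_{ZZ}(\theta),\det A(\theta))$ is continuous near $\theta_0$. The preimage of $\{(a,b):a\neq0,\,b\neq0\}$ then contains an open neighborhood $\mathcal N$ of $\theta_0$. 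I would also shrink $\mathcal N$ so that its closure stays inside this set, which keeps the determinants bounded away from zero on compacta.

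For (ii), I would write $g(\theta,Z)=1+(x-\mu_X)^{\!\top}M(\theta)(Z-\mu_Z)$ with $M(\theta):=A(\theta)^{-1}\Sigma_{ZX}^{\!\top}\Sigma_{ZZ}^{-1}$. By Cramer's rule, every entry of $\Sigma_{ZZ}^{-1}$ and $A^{-1}$ is a ratio of polynomials in the entries of $\theta$, and the denominators are nonvanishing on $\mathcal N$. So $M$ is $C^\infty$ (real-analytic) on $\mathcal N$. For fixed $Z$, $g$ is then a polynomial in $(\mu_X,\mu_Z)$ composed with products of entries of $M$, and is therefore $C^\infty$, in particular $C^2$.

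For (iii), I would compute the second derivatives block by block. $g$ is linear in $\mu_Z$, so $\nabla^2_{\mu_Z\mu_Z}g=0$. Likewise $g$ is linear in $\mu_X$, so $\nabla^2_{\mu_X\mu_X}g=0$. The mixed block $\nabla^2_{\mu_X\mu_Z}g=-M(\theta)^{\!\top}$ does not involve $Z$. Mixed derivatives between $(\mu_X,\mu_Z)$ and the matrix parameters involve first derivatives of $M$, multiplied by $(Z-\mu_Z)$ or $(x-\mu_X)$. The $(\Sigma,\Sigma)$ block equals $(x-\mu_X)^{\!\top}\nabla^2 M(\theta)(Z-\mu_Z)$. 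Each block is therefore bounded by products of $\|x-\mu_X\|$, of $\|M\|$, $\|\nabla M\|$ and $\|\nabla^2 M\|$, and of $1+\|Z\|+\|\mu_Z\|$. Since $M$ is $C^2$ on $\mathcal N$, these quantities are continuous on the compact $\mathcal K$ and hence bounded, as are $\|\mu_X\|$ and $\|\mu_Z\|$. This gives $\sup_{\theta\in\mathcal K}\|\nabla^2 g\|\le C_0+C_1\|Z\|$, with constants depending on $x$ and $\mathcal K$ but not on $Z$.

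The only mildly delicate point is ensuring these bounds hold uniformly in $x$, if that is needed later. For fixed $x$ as stated, it is immediate. If uniformity is required, Assumption~\ref{asspt:bounded_support} bounds $\|x-\mu_X\|$ by $B$ (with small adjustments for $\mu_X$ varying in $\mathcal K$), and the same argument goes through. Otherwise the proof is routine, and I expect no real obstacle beyond bookkeeping the derivatives of the matrix inverse, $\mathrm{d}(A^{-1})=-A^{-1}(\mathrm{d}A)A^{-1}$.
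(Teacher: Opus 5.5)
Your proof is correct and follows essentially the same route as the paper: openness of invertibility by continuity (you use determinants, the paper uses eigenvalues), smoothness of matrix inversion for (ii), and for (iii) the fact that $g$ is affine in $Z$ with $C^2$ coefficients bounded on compacta---the paper simply writes $g(\theta,Z)=a(\theta)+b(\theta)^{\top}Z$ and bounds $\nabla^2 a$ and $\nabla^2 b_l$ on $\mathcal K$ instead of computing the Hessian block by block. One immaterial slip: both $\mu_X$ and $\mu_Z$ enter $g$ with a minus sign, so the mixed block is $\nabla^2_{\mu_X\mu_Z}g=+M(\theta)$ (up to transposition), not $-M(\theta)^{\top}$, and this does not affect the bound.
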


\begin{proof}
Since $\Sigma_{ZZ}(\theta_0)$ is p.d.\ and $A(\theta_0)$ is p.d.\ by Assumption~\ref{ass:full_rank}, and eigenvalues depend continuously on matrix entries, there is an open neighborhood $\mathcal N$ of $\theta_0$ on which both matrices remain invertible, proving (i). 

The maps $\theta\mapsto \Sigma_{ZZ}(\theta)$ and $\theta\mapsto \Sigma_{ZX}(\theta)$ are affine in the components of $\theta$. Matrix addition and multiplication preserve smoothness of the entries, and inversion is smooth on the set of nonsingular square matrices. Therefore $\theta\mapsto g(\theta,Z)$ is a composition of smooth maps on $\mathcal N$, hence $C^2$, proving (ii).

Finally, write $g(\theta,Z)=a(\theta)+b(\theta)^{\!\top}Z$. By (ii), $a$ and $b$ are $C^2$ on $\mathcal N$, so on any compact $\mathcal K\subset\mathcal N$ the quantities
\[
M_0:=\sup_{\theta\in\mathcal K}\big\|\nabla^2 a(\theta)\big\|,
\qquad
M_1:=\sup_{k}\sup_{\theta\in\mathcal K}\big\|\nabla^2 b_k(\theta)\big\|
\]
are finite. Since $\nabla^2 g(\theta,Z)=\nabla^2 a(\theta)+\sum_l Z_l\,\nabla^2 b_l(\theta)$, we have
\[
\big\|\nabla^2 g(\theta,Z)\big\|
\le M_0+M_1\sum_l |Z_l|
\le M_0+\sqrt{d}\,M_1\,\|Z\|,
\]
where $d=\dim(Z)$. Setting $C_0=M_0$ and $C_1=\sqrt{d}\,M_1$ yields (iii).
\end{proof}

\begin{lemma}[VC-subgraph for the weighted quantile class]\label{lem:VCproduct}
Let $\mathcal{Q}=\{y\mapsto Q_y(u): u\in[a,b]\}$ on $\mathcal{Y}$ be VC-subgraph and let $s_x:\mathcal{Z}\to\mathbb{R}$ be a fixed measurable function $z\mapsto s(z,x)$. For a given $x$, define the class on $\mathcal{Y}\times\mathcal{Z}$,
\[
\mathcal{S}\;=\;\big\{(y,z)\mapsto s(z,x)\,Q_y(u):\ u\in[a,b]\big\}.
\]
Then $\mathcal{S}$ is VC-subgraph. As a result, $\mathcal{S}$ satisfies the uniform entropy bound (2.5.1) of \citet{vaart1996weak} with envelope
$H(y,z)=|s(z,x)|\,\sup_{u\in[a,b]}|Q_y(u)|$.
\end{lemma}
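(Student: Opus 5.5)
The plan is to use the standard characterization of VC-subgraph classes through their subgraphs, and to split on the sign of the fixed factor $s(z,x)$. Write $\mathcal{Z}^+=\{z: s(z,x)>0\}$, $\mathcal{Z}^-=\{z: s(z,x)<0\}$ and $\mathcal{Z}^0=\{z:s(z,x)=0\}$. These are measurable sets that do not depend on $u$, because $x$ is held fixed. For $f_u(y,z)=s(z,x)Q_y(u)$ and $t\in\R$, the subgraph event $\{t<f_u(y,z)\}$ decomposes as follows:
\[
\big(\mathcal{Z}^+\cap\{t/s(z,x)<Q_y(u)\}\big)\cup\big(\mathcal{Z}^-\cap\{t/s(z,x)>Q_y(u)\}\big)\cup\big(\mathcal{Z}^0\cap\{t<0\}\big).
\]

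The key step is to view $(y,z,t)\mapsto (y,\,t/s(z,x))$ as a fixed map into $\mathcal{Y}\times\R$. Under this map, the first piece is the preimage of the subgraph of $y\mapsto Q_y(u)$, intersected with a fixed set. The second piece is the preimage of the complement of the closed subgraph, again intersected with a fixed set. The third piece does not depend on $u$ at all.

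We then invoke the permanence properties in \citet[Lemma 2.6.17 and Lemma 2.6.18]{vaart1996weak}. Taking preimages under a fixed map preserves the VC property. Passing from open to closed subgraphs and taking complements preserves the VC property. Intersecting with a fixed set does not increase the VC index. Finally, a class whose members are unions of sets drawn from finitely many VC classes, all indexed by the same $u$, is VC, since its shatter coefficient is at most the product of the coefficients of the component classes. A cleaner alternative is to apply Lemma 2.6.18(vi) directly: the product of a VC-subgraph class with a single fixed function is VC-subgraph. If that lemma applies verbatim, I will cite it and use the sign-splitting argument only as a justification.

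For the entropy statement, note that $|f_u(y,z)|\le |s(z,x)|\sup_{u\in[a,b]}|Q_y(u)|=H(y,z)$ for all $u$, so $H$ is an envelope. Theorem 2.6.7 of \citet{vaart1996weak} then gives polynomial uniform covering numbers $\sup_R N(\epsilon\|H\|_{R,2},\mathcal S,L_2(R))\le C(1/\epsilon)^{c}$, and this yields the uniform entropy condition (2.5.1).

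The main obstacle is the bookkeeping in the sign split, in particular handling the closed-versus-open subgraph for the negative part. I expect to resolve this by noting that the sets $\{t\le g\}$ form a VC class whenever the sets $\{t<g\}$ do. Measurability of $H$ follows from monotonicity of quantile functions: the supremum over $[a,b]$ equals $\max(|Q_y(a)|,|Q_y(b)|)$, up to left-limits, and so it is attained on countably many $u$.
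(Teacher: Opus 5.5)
Your argument is correct, and it differs from the paper's only in how much of it is cited versus written out. The paper uses two lemmas from \citet{vaart1996weak}. It first lifts $\mathcal{Q}$ to $\mathcal{Y}\times\mathcal{Z}$ by composing with the fixed projection $(y,z)\mapsto y$, using Lemma 2.6.18(vii). It then multiplies by the fixed function $(y,z)\mapsto s(z,x)$, using Lemma 2.6.18(vi).

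Your sign split over $\{s>0\}$, $\{s<0\}$ and $\{s=0\}$ is exactly how van der Vaart and Wellner prove part (vi). Because you take preimages under $(y,z,t)\mapsto(y,t/s(z,x))$ directly on $\mathcal{Y}\times\mathcal{Z}\times\R$, the lifting step is absorbed into that same fixed map, so you never need (vii) separately. The trade-off is length against self-containedness. Your route makes the complement and open-versus-closed-subgraph bookkeeping explicit; the perturbation fact that $\{t\le g\}$ is VC iff $\{t<g\}$ is VC is standard. The paper's route is just shorter.

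One caveat on your ``cleaner alternative'': (vi) cannot be applied verbatim, because $s_x$ lives on $\mathcal{Z}$ while $\mathcal{Q}$ lives on $\mathcal{Y}$. You need the lift first, which is precisely the extra step the paper takes.

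Your measurability remark for $H$ is something the paper does not address. The hedge ``up to left-limits'' is unnecessary, though: for nondecreasing $Q_y$ one has $\sup_{u\in[a,b]}|Q_y(u)|=\max\{|Q_y(a)|,|Q_y(b)|\}$ exactly. Measurability of $H$ then follows from measurability of $y\mapsto Q_y(a)$ and $y\mapsto Q_y(b)$.
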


\begin{proof}
Define $\widetilde{\mathcal{Q}}=\{(y,z)\mapsto Q_y(u): u\in[a,b]\}$ on $\mathcal{Y}\times\mathcal{Z}$ and the fixed function $\tilde s(y,z)=s(z,x)$. As shown in \citet[Lemma A-3]{van2025regression}, $\mathcal{Q}$ is VC-subgraph on $\mathcal{Y}$. Hence, the lifted class $\widetilde{\mathcal{Q}}$ is VC-subgraph on $\mathcal{Y}\times\mathcal{Z}$. To see this, note that the composition of $\widetilde{\mathcal{Q}}$ and the fixed projection $(y,z)\mapsto y$ preserves the VC-subgraph property by \citet[Lemma~2.6.18(vii)]{vaart1996weak}. Further, by \citet[Lemma~2.6.18(vi)]{vaart1996weak}, the product of $\widetilde{\mathcal{Q}}$ and the fixed function $(z,y) \to \tilde{s}(y, z)$ is VC-subgraph. VC-subgraph classes satisfy the uniform entropy bound in \citet[Eq. 2.6.7]{vaart1996weak}  with respect to the $L_2(P)$-metric for every $P$. The stated envelope follows directly.
\end{proof}

\begin{proof}[Proof of Theorem~\ref{thm:convergence_unprojected}]
Let
\[
\hat\psi_x(u)=\frac{1}{n}\sum_{j=1}^n \hat s_j(x)\,Q_{Y_j}(u),
\qquad
\psi_x(u)=\E\!\big[s(Z,x)Q_Y(u)\big],
\]
where $\hat s_j(x)=g(\hat\theta,Z_j)$ and $s(Z,x)=g(\theta_0,Z)$. 

Write
\[
\sqrt{n}\big(\hat\psi_x(u)-\psi_x(u)\big)
=\underbrace{\frac{1}{\sqrt{n}}\sum_{j=1}^n \Big(s(Z_j,x)Q_{Y_j}(u)-\psi_x(u)\Big)}_{(A)}
+\underbrace{\frac{1}{\sqrt{n}}\sum_{j=1}^n \big(\hat s_j(x)-s(Z_j,x)\big)Q_{Y_j}(u)}_{(B)}.
\]

\textit{Main term  (A).}
Consider $\mathcal{F}=\{(Y,Z)\mapsto s(Z,x)Q_Y(u)-\psi_x(u):u\in[a,b]\}$. 
Recall from Lemma \ref{lem:VCproduct} that
\[
\mathcal{F}_Y
\;=\;
\big\{\,y\mapsto Q_y(u):\ u\in[a,b]\,\big\}
\]
is a VC–subgraph class on $\mathcal{Y}$ with envelope
$F_Y(y)=\sup_{u\in[a,b]}|Q_y(u)|$.  
For each fixed $x$, write $s_x(z)=s(z,x)$ and define the class on $\mathcal{Y}\times\mathcal{Z}$,
\[
\mathcal{S}
\;=\;
\big\{(y,z)\mapsto s_x(z)\,Q_y(u):\ u\in[a,b]\big\}.
\]
By Lemma~\ref{lem:VCproduct},
$\mathcal{S}$ is VC–subgraph with envelope
\[
H(y,z) = |s_x(z)|\,F_Y(y) \ \lesssim\ (1+\|z\|)\,\sup_{u\in[a,b]}|Q_y(u)|.
\]
By Assumption~\ref{asspt:finite_moments}, $\E[\|Z\|^{4}]<\infty$. Since $Q_Y(\cdot)$ is nondecreasing on $[a,b]$, $\sup_{u \in [a,b]} |Q_Y(u)| = \max\{|Q_Y(a)|, |Q_Y(b)|\}$, so $\E[\sup_{u \in [a,b]} |Q_Y(u)|^4] \leq \E[|Q_Y(a)|^4] + \E[|Q_Y(b)|^4] < \infty$ by Assumption~\ref{asspt:finite_moments}. Therefore
\[
P H^2
\ \le\ 
\big(\E[ (1+\|Z\|)^4 ]\big)^{1/2}
\big(\E[ (\sup_{u}|Q_Y(u)|)^4 ]\big)^{1/2}
<\infty.
\]
Therefore, by Theorem~2.5.2 of \citet{vaart1996weak}, $\mathcal{S}$ is $P$–Donsker and
\[
(A)\ \leadsto\ \mathbb{G}_1\quad\text{in }\ell^\infty([a,b]),
\]
a tight mean-zero Gaussian process with covariance 
$\Cov(s(Z,x)Q_Y(u),\,s(Z,x)Q_Y(u'))$.

\textit{Empirical weights term (B).}
By Lemma~\ref{lem:smooth_g_from_fullrank}, a second-order expansion gives
\[
\hat s_j(x)-s(Z_j,x)
=\nabla_\theta s(\theta_0,Z_j)^\top(\hat\theta-\theta_0)+R_j,
\quad
|R_j|\ \le\ C(1+\|Z_j\|)\,\|\hat\theta-\theta_0\|^2.
\]
Hence, uniformly in $u$,
\[
\Big|\frac{1}{\sqrt{n}}\sum_{j=1}^n R_j\,Q_{Y_j}(u)\Big|
\ \le\ \sqrt{n}\,\|\hat\theta-\theta_0\|^2\;\frac{1}{n}\sum_{j=1}^n (1+\|Z_j\|)\sup_{u\in[a,b]}|Q_{Y_j}(u)|
\ =\ o_p(1),
\]
by the LLN and $\sqrt{n}\|\hat\theta-\theta_0\|^2=o_p(1)$. Therefore
\[
(B)\ =\ \Big[\mathbb{P}_n f_u\Big]\ \sqrt{n}(\hat\theta-\theta_0)\ +\ o_p(1),
\qquad
f_u(Y,Z):=Q_Y(u)\,\nabla_\theta s(\theta_0,Z)^\top.
\]
By Lemma~\ref{lem:VCproduct}, the class $\{f_u:u\in[a,b]\}$ is VC-subgraph with envelope
$F(Y,Z)\lesssim (1+\|Z\|)\sup_{u\in[a,b]}|Q_Y(u)|\in L^1$, hence
$\sup_{u\in[a,b]}|\mathbb{P}_n f_u-\partial_\theta\psi_x(u)^\top|\to_p 0$.

\textit{Joint convergence and covariance.}
Let
\[
m(W)
=
\Big(
X-\mu_X,\ 
Z-\mu_Z,\ 
\operatorname{vec}\{(Z-\mu_Z)(X-\mu_X)^\top-\Sigma_{ZX}\},\ 
\operatorname{vec}\{(Z-\mu_Z)(Z-\mu_Z)^\top-\Sigma_{ZZ}\}
\Big)
\]
and $\mathcal{G}_\theta=\{m_k:k=1,\dots,d\}$ with $d=\dim\theta$.
Since $\mathcal{G}_\theta$ is finite and square-integrable, it is $P$--Donsker.
We have shown $\mathcal{S}$ is $P$--Donsker. Let $H$ and $G(W):=\max_k |m_k(W)|$ be envelopes for $\mathcal{S}$
and $\mathcal{G}_\theta$, respectively. Under Assumption~\ref{asspt:finite_moments}, $PH^2<\infty$ and $PG^2<\infty$,
so the union $\mathcal{J}=\mathcal{S}\cup\mathcal{G}_\theta$ has square-integrable envelope $J:=H+G$
and hence $\|P\|_{\mathcal{J}}<\infty$.
By \citet[Example~2.10.7]{vaart1996weak}, $\mathcal{J}$ is $P$--Donsker and therefore
\[
\Big(\ (A),\ \sqrt{n}(\hat\theta-\theta_0)\ \Big)\ \leadsto\ \big(\mathbb{G}_1,\ \mathbb{Z}_\theta\big)
\quad\text{in } \ell^\infty([a,b])\times\R^d,
\]
with $\mathbb{G}_1$ as above and $\mathbb{Z}_\theta\sim N(0,\Sigma)$.
The mapping $(h,z)\mapsto h(\cdot)+\partial_\theta\psi_x(\cdot)^\top z$
is continuous $\ell^\infty([a,b])\times\R^d\to\ell^\infty([a,b])$ and
$\sup_u|\mathbb{P}_n f_u-\partial_\theta\psi_x(u)^\top|\to_p0$,
so by the continuous mapping theorem,
\[
\sqrt{n}(\hat\psi_x-\psi_x)\ =\ (A)+(B)\ \leadsto\
\mathbb{G}_1+\partial_\theta\psi_x(\cdot)^\top\,\mathbb{Z}_\theta
\quad\text{in } \ell^\infty([a,b]).
\]
The covariance kernel follows from the joint Gaussian limit and this linear mapping.
\end{proof}

\begin{lemma}[Average Lipschitz increments imply $L^2$ continuity]\label{lem:QY_L2_cont}
Suppose Assumptions~\ref{asspt:finite_moments} and
\ref{asspt:average_lipschitz} hold. Then, for every $u\in[a,b]$,
\[
Q_Y(u_n)\to Q_Y(u)\quad\text{a.s.}
\qquad\text{whenever }u_n\to u.
\]
Moreover,
\[
E\!\left[|Q_Y(u_n)-Q_Y(u)|^4\right]\to0.
\]
Consequently, for any random vector $H$ with $E\|H\|^4<\infty$,
\[
E\!\left[\|H\|^2\,|Q_Y(u_n)-Q_Y(u)|^2\right]\to0.
\]
\end{lemma}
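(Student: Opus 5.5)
The plan is to first extract an \emph{unweighted} average Lipschitz bound from Assumption~\ref{asspt:average_lipschitz}. Then I use monotonicity of each $Q_Y$ to rule out jumps at any fixed $u$ almost surely, and finally upgrade pointwise convergence to moment convergence by dominated convergence.

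\textbf{Step 1: removing the IV weight.} The weights $s(Z,x)$ can be negative, so Assumption~\ref{asspt:average_lipschitz} does not directly bound $E[Q_Y(d)-Q_Y(c)]$. However, $s(Z,x)$ is affine in $x$. Hence, for $[c,d]\subset[a,b]$,
\[
E\!\left[s(Z,x)\{Q_Y(d)-Q_Y(c)\}\right]=E[Q_Y(d)-Q_Y(c)]+(x-\mu_X)^{\!\T}v_{c,d}
\]
for a fixed vector $v_{c,d}$, which is finite by Assumption~\ref{asspt:finite_moments}. The bound $\le K(d-c)$ holds for every $x\in\mathcal X$. Integrating it over $x$ with respect to the distribution of $X$, and using $E[X-\mu_X]=0$, gives
\[
0\le E[Q_Y(d)-Q_Y(c)]\le K(d-c).
\]
The lower bound $0$ holds because every $Q_Y$ is nondecreasing.

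\textbf{Step 2: almost sure continuity at a fixed $u$.} Fix $u\in[a,b]$. Since $Q_Y$ is monotone, the one-sided limits $Q_Y(u\pm)$ exist (one-sided only at the endpoints). Define the jump
\[
J:=Q_Y(u+)-Q_Y(u-)\ge 0,
\]
with the obvious one-sided modification at $a$ or $b$. Because $Q_Y$ is monotone, $J\le Q_Y(u+h)-Q_Y(u-h)$ for every small $h>0$. Step 1 then gives $E[J]\le 2Kh$. Letting $h\downarrow 0$ yields $E[J]=0$, and since $J\ge0$ we get $J=0$ a.s. So outside a single null set, $Q_Y$ is continuous at $u$. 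On that event, $Q_Y(u_n)\to Q_Y(u)$ for \emph{every} sequence $u_n\to u$ in $[a,b]$, which gives the a.s.\ claim.

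\textbf{Step 3: moment convergence.} By monotonicity, for all $u_n,u\in[a,b]$,
\[
|Q_Y(u_n)-Q_Y(u)|\le 2\max\{|Q_Y(a)|,|Q_Y(b)|\}.
\]
The fourth power of the right-hand side is integrable by Assumption~\ref{asspt:finite_moments}. Dominated convergence together with Step 2 then gives $E|Q_Y(u_n)-Q_Y(u)|^4\to0$.

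For the final claim, apply Cauchy--Schwarz:
\[
E\!\left[\|H\|^2|Q_Y(u_n)-Q_Y(u)|^2\right]\le \big(E\|H\|^4\big)^{1/2}\big(E|Q_Y(u_n)-Q_Y(u)|^4\big)^{1/2}\to 0.
\]

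\textbf{Main obstacle.} The only delicate point is Step 1, namely that the assumption is stated with possibly negative weights. Averaging over $x$ exploits the affine structure of $s(\cdot,x)$ and resolves this without requiring $\mu_X\in\mathcal X$.
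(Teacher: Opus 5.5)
Your proposal is correct and follows essentially the same route as the paper: monotonicity plus the average Lipschitz bound forces the jump of $Q_Y$ at a fixed $u$ to vanish a.s., dominated convergence with the envelope $2\max\{|Q_Y(a)|,|Q_Y(b)|\}$ gives the fourth-moment convergence, and Cauchy--Schwarz gives the last claim. The differences are minor and in your favor. You explicitly justify dropping the possibly negative IV weight $s(Z,x)$ by averaging the affine-in-$x$ bound over the law of $X$, whereas the paper applies Assumption~\ref{asspt:average_lipschitz} as if $s\equiv 1$, which tacitly requires $x=\mu_X$ to be admissible. You also bound the two-sided jump directly by monotonicity, rather than bounding only the right jump via Fatou and appealing to left-continuity of $Q_Y$.
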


\begin{proof}
Fix $u\in[a,b)$. For $h>0$ such that $u+h\in[a,b]$, monotonicity gives
\[
0\le Q_Y(u+h)-Q_Y(u).
\]
By Assumption~\ref{asspt:average_lipschitz},
\[
0\le E[Q_Y(u+h)-Q_Y(u)]\le Kh\to0.
\]
Let
\[
\Delta_Y^+(u):=Q_Y(u+)-Q_Y(u)
=
\lim_{h\downarrow0}\{Q_Y(u+h)-Q_Y(u)\}.
\]
By Fatou's lemma,
\[
0\le E[\Delta_Y^+(u)]
\le
\liminf_{h\downarrow0}E[Q_Y(u+h)-Q_Y(u)]
=0.
\]
Hence $\Delta_Y^+(u)=0$ a.s. Since $Q_Y(\cdot)$ is left-continuous by
convention, $Q_Y(\cdot)$ is a.s. continuous at the fixed point $u$. At $u=b$,
left-continuity alone gives continuity from within $[a,b]$. Therefore
$Q_Y(u_n)\to Q_Y(u)$ a.s. for every deterministic sequence $u_n\to u$.

Furthermore,
\[
|Q_Y(u_n)-Q_Y(u)|^4
\le
16\sup_{t\in[a,b]}|Q_Y(t)|^4
=
16\max\{|Q_Y(a)|,|Q_Y(b)|\}^4,
\]
and the right-hand side is integrable by
Assumption~\ref{asspt:finite_moments}. Dominated convergence gives
$E[|Q_Y(u_n)-Q_Y(u)|^4]\to0$.

Finally, by Cauchy--Schwarz,
\[
E\!\left[\|H\|^2|Q_Y(u_n)-Q_Y(u)|^2\right]
\le
\{E\|H\|^4\}^{1/2}
\{E|Q_Y(u_n)-Q_Y(u)|^4\}^{1/2}
\to0.
\]
\end{proof}

 \begin{lemma}[Continuity of the covariance kernel]\label{lem:Gamma-cont-strong}
Under Assumptions~\ref{ass:full_rank}--\ref{asspt:finite_moments}, and
\ref{asspt:average_lipschitz}, the covariance kernel
\[
\Gamma_x(u,u')
:=
E\!\left[\phi_x(W;u)\phi_x(W;u')\right]
\]
of the limiting process in Theorem~\ref{thm:convergence_unprojected} is jointly
continuous on $[a,b]^2$. Consequently, its intrinsic variance semimetric
\[
\rho_x(u,u')^2
:=
E\!\left[(\mathbb G_x(u)-\mathbb G_x(u'))^2\right]
=
\Gamma_x(u,u)+\Gamma_x(u',u')-2\Gamma_x(u,u')
\]
satisfies $\rho_x(u_n,u)\to0$ whenever $u_n\to u$.
\end{lemma}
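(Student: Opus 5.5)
We prove joint continuity by showing that $u\mapsto\phi_x(W;u)$ is continuous in $L^2(P)$. Bilinearity of the covariance then gives continuity of $\Gamma_x$. From the proof of Theorem~\ref{thm:convergence_unprojected}, the influence function has the form
\[
\phi_x(W;u)=s(Z,x)Q_Y(u)-\psi_x(u)+\partial_\theta\psi_x(u)^{\!\T}m(W),
\]
with the influence function of $\hat\theta$ written as (a fixed linear map applied to) $m(W)$. Here $\partial_\theta\psi_x(u)=E[Q_Y(u)\nabla_\theta s(\theta_0,Z)]$, and by Lemma~\ref{lem:smooth_g_from_fullrank} we have $\|\nabla_\theta s(\theta_0,Z)\|\lesssim 1+\|Z\|$. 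The plan is to bound $E[(\phi_x(W;u_n)-\phi_x(W;u))^2]$ by three pieces, and to show each tends to zero as $u_n\to u$.

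\emph{First piece.} Take $H=s(Z,x)$, which is affine in $Z$ and so satisfies $E|H|^4<\infty$ by Assumption~\ref{asspt:finite_moments}. Lemma~\ref{lem:QY_L2_cont} then gives $E[s(Z,x)^2|Q_Y(u_n)-Q_Y(u)|^2]\to0$.

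\emph{Second piece.} By Cauchy--Schwarz,
\[
|\psi_x(u_n)-\psi_x(u)|\le \{E[s^2]\}^{1/2}\{E|Q_Y(u_n)-Q_Y(u)|^2\}^{1/2}\to0,
\]
again by Lemma~\ref{lem:QY_L2_cont}.

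\emph{Third piece.} Apply Cauchy--Schwarz with $H=\nabla_\theta s(\theta_0,Z)$ to get $\|\partial_\theta\psi_x(u_n)-\partial_\theta\psi_x(u)\|\to0$. Multiply by $E\|m(W)\|^2<\infty$, which holds by the fourth moments of $X$ and $Z$.

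Combining the three pieces with $(a+b+c)^2\le3(a^2+b^2+c^2)$ gives $\|\phi_x(\cdot;u_n)-\phi_x(\cdot;u)\|_{L^2(P)}\to0$. Hence $u\mapsto\phi_x(W;u)$ is $L^2(P)$-continuous on $[a,b]$. Its norm is also bounded: the envelope $\sup_u|Q_Y(u)|=\max\{|Q_Y(a)|,|Q_Y(b)|\}$ has four moments.

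For joint continuity, take $(u_n,u_n')\to(u,u')$ and write
\[
\Gamma_x(u_n,u_n')-\Gamma_x(u,u')=E[(\phi_{u_n}-\phi_u)\phi_{u_n'}]+E[\phi_u(\phi_{u_n'}-\phi_{u'})].
\]
Each term is bounded by Cauchy--Schwarz and the uniform $L^2$ bound, so both vanish. The statement about the semimetric follows by direct substitution, since $\rho_x(u_n,u)^2=\|\phi_x(\cdot;u_n)-\phi_x(\cdot;u)\|_{L^2}^2$ because $\mathbb G_x$ has covariance $\Gamma_x$.

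The main obstacle is pinning down the exact form of $\phi_x$, in particular the linear term from estimating $\theta$, and checking that every piece has an integrable square. Lemma~\ref{lem:QY_L2_cont} is what removes the need for any pathwise continuity of $Q_Y$. If the appendix writes $\phi_x$ differently, for example with $\hat\theta$ reparametrized, the argument is unchanged: every term is a product of an $L^4$ random variable with either $Q_Y(u)$ or a deterministic continuous function of $u$ built from moments of $Q_Y(u)$.
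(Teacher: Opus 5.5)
Your proposal is correct and follows essentially the same route as the paper: you establish $L^2(P)$-continuity of $u\mapsto\phi_x(\cdot;u)$ by applying Lemma~\ref{lem:QY_L2_cont} to $s(Z,x)Q_Y(u)$, to $\psi_x(u)$, and, via Cauchy--Schwarz, to $\partial_\theta\psi_x(u)$, and then use the bilinear decomposition of $\Gamma_x$. There are two small differences. You bound $\sup_u\|\phi_x(\cdot;u)\|_{L^2}$ through the envelope $\max\{|Q_Y(a)|,|Q_Y(b)|\}$, whereas the paper uses continuity on the compact set $[a,b]$. Also, the bound $\|\nabla_\theta s(\theta_0,Z)\|\lesssim 1+\|Z\|$ does not come from part (iii) of Lemma~\ref{lem:smooth_g_from_fullrank}, which concerns the Hessian. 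It follows instead from $g$ being affine in $Z$, or from the explicit gradient blocks in the appendix, as the paper argues.
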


\begin{proof}
Recall from Appendix~\ref{sec:covariance_derivation} that
\[
\phi_x(W;u)
=
\bigl(s(Z,x)Q_Y(u)-\psi_x(u)\bigr)
+
\bigl(\partial_\theta\psi_x(u)\bigr)^{\T}m(W).
\]
By Lemma~\ref{lem:QY_L2_cont}, $s(Z,x)Q_Y(u_n)\to s(Z,x)Q_Y(u)$ in $L^2(P)$,
because for fixed $x$, $s(Z,x)$ is affine in $Z$ and hence has finite fourth
moment under Assumption~\ref{asspt:finite_moments}. Therefore
\[
\psi_x(u_n)=E[s(Z,x)Q_Y(u_n)]\to E[s(Z,x)Q_Y(u)]=\psi_x(u).
\]

Next, each component of $\partial_\theta\psi_x(u)$ is an expectation of the form
\[
E[Q_Y(u)h(W)]
\]
where $h(W)$ is either constant or affine in $Z$ for the gradient blocks derived
in Appendix~\ref{sec:covariance_derivation}. Hence
$\partial_\theta\psi_x(u_n)\to\partial_\theta\psi_x(u)$ by
Lemma~\ref{lem:QY_L2_cont} and Cauchy--Schwarz. Moreover, $E\|m(W)\|^2<\infty$
under Assumption~\ref{asspt:finite_moments}. It follows that
\[
\|\phi_x(\cdot;u_n)-\phi_x(\cdot;u)\|_{L^2(P)}\to0.
\]

Let $(u_n,u_n')\to(u,u')$. Then
\begin{align*}
|\Gamma_x(u_n,u_n')-\Gamma_x(u,u')|
&\le
E\!\left[|\phi_x(W;u_n)-\phi_x(W;u)|\,|\phi_x(W;u_n')|\right] \\
&\quad+
E\!\left[|\phi_x(W;u)|\,|\phi_x(W;u_n')-\phi_x(W;u')|\right].
\end{align*}
By Cauchy--Schwarz,
\begin{align*}
|\Gamma_x(u_n,u_n')-\Gamma_x(u,u')|
&\le
\|\phi_x(\cdot;u_n)-\phi_x(\cdot;u)\|_{L^2(P)}
\|\phi_x(\cdot;u_n')\|_{L^2(P)}\\
&\quad+
\|\phi_x(\cdot;u)\|_{L^2(P)}
\|\phi_x(\cdot;u_n')-\phi_x(\cdot;u')\|_{L^2(P)}.
\end{align*}
The difference terms converge to zero. By Lemma~\ref{lem:QY_L2_cont} and the preceding argument,
$u\mapsto \phi_x(\cdot;u)$ is continuous as a map from $[a,b]$ into $L^2(P)$.
Since $[a,b]$ is compact, it follows that $u\mapsto\|\phi_x(\cdot;u)\|_{L^2(P)}$ is bounded on $[a,b]$. Hence $\Gamma_x$ is jointly continuous. The statement for $\rho_x$ follows from the displayed identity and continuity of
$\Gamma_x$.
\end{proof}

\begin{corollary}\label{cor:cont-version}
Under the assumptions of Theorem~\ref{thm:convergence_unprojected} and
Assumption~\ref{asspt:average_lipschitz}, for each fixed $x\in\mathcal X$, the
centered tight Gaussian limit process $\mathbb G_x$ in
Theorem~\ref{thm:convergence_unprojected} admits a modification with a.s.\
continuous sample paths on $[a,b]$. In what follows we replace $\mathbb G_x$ by
this continuous modification, so that $\mathbb G_x\in C([a,b])$ a.s.
\end{corollary}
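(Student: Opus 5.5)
The plan is to invoke the standard characterization of Gaussian processes with continuous versions: a tight, centered Gaussian process in $\ell^\infty(T)$ has almost all sample paths uniformly continuous with respect to its intrinsic semimetric $\rho_x$, on which $T$ is totally bounded \citep[Example~1.5.10 and Section~1.5]{vaart1996weak}. Theorem~\ref{thm:convergence_unprojected} already delivers tightness of $\mathbb G_x$ in $\ell^\infty([a,b])$. By that characterization, there is a version of $\mathbb G_x$ whose paths lie in $UC([a,b],\rho_x)$, the space of $\rho_x$-uniformly continuous functions, almost surely. The remaining task is to pass from $\rho_x$-continuity to continuity with respect to the Euclidean metric on $[a,b]$.

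For this I will use Lemma~\ref{lem:Gamma-cont-strong}. It shows that $\Gamma_x$ is jointly continuous on $[a,b]^2$, so $\rho_x(u,u')^2=\Gamma_x(u,u)+\Gamma_x(u',u')-2\Gamma_x(u,u')$ is a continuous function of $(u,u')$ on the compact set $[a,b]^2$ and vanishes on the diagonal. It is therefore uniformly continuous: for every $\varepsilon>0$ there is $\delta>0$ with $|u-u'|<\delta\Rightarrow\rho_x(u,u')<\varepsilon$. Hence any function that is uniformly $\rho_x$-continuous is also uniformly continuous in the Euclidean metric. Applying this pathwise to the version from the first step gives a.s.\ continuous paths in $C([a,b])$. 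Since the version has the same finite-dimensional laws, it is the same Gaussian process in law, and one can replace $\mathbb G_x$ by it.

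The one subtle step is the tightness characterization. Tightness a priori gives continuity only with respect to \emph{some} semimetric making $[a,b]$ totally bounded. The fact that $\rho_x$ itself works for Gaussian limits is precisely the content of the cited example, so I will state it carefully rather than reprove it. I also need to note that the version can be taken to agree with $\mathbb G_x$ at each $u$ almost surely, i.e.\ that it is a genuine modification. This follows because a $\rho_x$-continuous separable version agrees with the original process on a countable $\rho_x$-dense set, and by $L^2$-continuity it agrees almost surely at each fixed $u$. With both facts in hand, no further computation is needed beyond Lemma~\ref{lem:Gamma-cont-strong}.
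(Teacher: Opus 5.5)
Your proposal is correct and follows essentially the same route as the paper. Both arguments use the characterization of tight Gaussian processes in \citet[Section~1.5]{vaart1996weak} to obtain paths that are uniformly continuous in the intrinsic semimetric $\rho_x$, and then use Lemma~\ref{lem:Gamma-cont-strong} to pass from $\rho_x$-continuity to Euclidean continuity. The differences are minor: the paper derives total boundedness of $([a,b],\rho_x)$ from the continuity of $\rho_x$ on the compact interval rather than from tightness, and it uses sequential rather than uniform continuity of $\rho_x$.
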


\begin{proof}
Let $\rho_x$ be the canonical semimetric of $\mathbb G_x$,
\[
\rho_x(u,u')^2
:=
E\!\left[(\mathbb G_x(u)-\mathbb G_x(u'))^2\right],
\qquad u,u'\in[a,b].
\]
By Lemma~\ref{lem:Gamma-cont-strong}, $\rho_x(u_n,u)\to0$ whenever
$u_n\to u$. Since $[a,b]$ is compact, this implies that $([a,b],\rho_x)$ is
totally bounded.

Moreover, $\mathbb G_x$ is a tight Borel measurable random element of
$\ell^\infty([a,b])$ and is stochastically continuous with respect to $\rho_x$.
By Addendum~1.5.8 of \citet{vaart1996weak}, $\mathbb G_x$ admits a modification
whose sample paths are a.s.\ uniformly $\rho_x$-continuous on $[a,b]$. For this
modification, if $u_n\to u$ in the usual metric, then $\rho_x(u_n,u)\to0$ and
hence $\mathbb G_x(u_n)\to\mathbb G_x(u)$. Therefore its sample paths are a.s.\
continuous on $[a,b]$.
\end{proof}

\begin{proof}[Proof of Theorem~\ref{thm:convergence_projected}]
The structure of this proof is analogous to the proof of Theorem 4 in \citet{van2025regression}. Recall from Theorem~\ref{thm:convergence_unprojected} that
\[
\sqrt{n}\,\big(\hat\psi_x-\psi_x\big)\ \leadsto\ \mathbb{G}_x
\qquad\text{in }\ell^\infty([a,b]).
\]
By Corollary~\ref{cor:cont-version}, we may (and do) replace $\mathbb G_x$ by
a modification such that $\mathbb G_x\in C([a,b])$ a.s.

All Fr\'echet objectives and $W_2$ distances are $L^2$-integrals of quantile functions. Hence changing
representatives on null sets (e.g.\ switching from left- to right-continuous versions) does not alter the
objective nor its argmin. We can therefore compute the $L^2([a,b])$ projection and then select a
right-continuous representative, to view the result as an element of $\ell^\infty([a,b])$.

As before, let
\[
\hat q(x,\cdot):=\Pi_{\mathcal Q}(\hat\psi_x),
\qquad
q(x,\cdot):=\Pi_{\mathcal Q}(\psi_x),
\]
so that $\hat q(x,\cdot)=\hat Q_{m_{\est}(x)}(\cdot)$ and $q(x,\cdot)$ is the population
target quantile function. Then, the result follows in three steps. 

\smallskip
\noindent\textit{Step 1 (Lipschitzness and uniform consistency).}
By Lemma A-6 (iii) in \citet{van2025regression}, $\Pi_{\mathcal Q}$ is $1$--Lipschitz in the uniform norm:
\[
\|\Pi_{\mathcal Q}(f)-\Pi_{\mathcal Q}(g)\|_\infty
\le \|f-g\|_\infty
\qquad\forall f,g\in\ell^\infty([a,b]).
\]
Therefore,
\[
\|\hat q(x,\cdot)-q(x,\cdot)\|_\infty
\le \|\hat\psi_x-\psi_x\|_\infty
=o_p(1).
\]

\smallskip
\noindent\textit{Step 2 (Local differentiability).}
By Assumptions~\ref{asspt:average_strictness} and
\ref{asspt:average_lipschitz}, for all $a\le c<d\le b$,
\[
\kappa(d-c)
\le
\psi_x(d)-\psi_x(c)
=
\E\!\left[s(Z,x)\{Q_Y(d)-Q_Y(c)\}\right]
\le
K(d-c).
\]
Hence $\psi_x$ is Lipschitz on $[a,b]$, therefore absolutely continuous, and the
lower secant bound implies $\psi_x'(u)\ge\kappa$ for a.e.\ $u\in[a,b]$.
Then $\psi_x\in\mathcal Q$, so $q(x,\cdot)=\Pi_{\mathcal Q}(\psi_x)=\psi_x$.
Moreover, the hypotheses of Lemma A-7 in \citet{van2025regression} are satisfied at $m = \psi_x$,
so $\Pi_{\mathcal Q}$ is Hadamard
directionally differentiable at $\psi_x$ as a mapping from $\ell^\infty([a,b])$ to $\ell^\infty([a,b])$ tangentially to $C([a,b])$, with derivative equal to the identity:
\[
D\Pi_{\mathcal Q}[\psi_x](h)=h
\qquad\forall\,h\in C([a,b]).
\]

\smallskip
\noindent\textit{Step 3 (Functional delta method).}
Since $\sqrt n(\hat\psi_x-\psi_x)\leadsto \mathbb G_x$ in $\ell^\infty([a,b])$ by Theorem \ref{thm:convergence_unprojected} and the limit satisfies
$\mathbb G_x\in C([a,b])$ a.s. by Corollary \ref{cor:cont-version}, the functional delta method for (fully) Hadamard differentiable maps
\citep[Thm.~20.8]{van2000asymptotic}, applied tangentially to $C([a,b])$, yields
\[
\sqrt{n}\,\big(\hat q(x,\cdot)-q(x,\cdot)\big)
=
\sqrt{n}\,\Big(\Pi_{\mathcal Q}(\hat\psi_x)-\Pi_{\mathcal Q}(\psi_x)\Big)
\ \leadsto\
D\Pi_{\mathcal Q}[\psi_x](\mathbb{G}_x)
=
\mathbb{G}_x
\qquad\text{in }\ell^\infty([a,b])
\]
which gives the result.
\end{proof}

\begin{proof}[Proof of Theorem~\ref{thm:convergence_beta}]

\medskip
\noindent
\textit{Step~1: Exact representation.}
The intercept estimator satisfies $\tilde\beta_0(u)=\frac{1}{n}\sum_{j=1}^n Q_{Y_j}(u)$, hence
\begin{equation}\label{eq:intercept_exact}
\tilde\beta_0(u)-\beta_0^{\mathrm{unc}}(u)
=
\frac{1}{n}\sum_{j=1}^n \bigl(Q_{Y_j}(u)-E[Q_Y(u)]\bigr).
\end{equation}
For the slope, since $\sum_j(Z_j-\hat\mu_Z)=0$ and
$(X_j-\mu_X)=(X_j-\hat\mu_X)+(\hat\mu_X-\mu_X)$,
the standard 2SLS normal equations give
\begin{equation}\label{eq:slope_exact}
\tilde\beta_1(u)-\beta_1^{\mathrm{unc}}(u)
=
\hat S_n\,\frac{1}{n}\sum_{j=1}^n (Z_j-\hat\mu_Z)\,\xi_j(u),
\end{equation}
where $\hat S_n:=(\hat\Sigma_{ZX}^{\!\T}\hat\Sigma_{ZZ}^{-1}\hat\Sigma_{ZX})^{-1}\hat\Sigma_{ZX}^{\!\T}\hat\Sigma_{ZZ}^{-1}$ and $\xi_j(u):=Q_{Y_j}(u)-\beta_0^{\mathrm{unc}}(u)-\beta_1^{\mathrm{unc}}(u)^{\!\T}(X_j-\mu_X)$.

 Writing $\tilde Z_j:=Z_j-\mu_Z$ and replacing the centering on the sample mean by centering on the population mean,
\[
\frac{1}{\sqrt{n}}\sum_{j=1}^n \bigl[(Z_j-\hat\mu_Z)-\tilde Z_j\bigr]\xi_j(u)
=
-\sqrt{n}(\hat\mu_Z-\mu_Z)\,\bar\xi_n(u),
\]
where $\bar\xi_n(u):=\frac{1}{n}\sum_j\xi_j(u)$. By the multivariate CLT (under Assumption \ref{asspt:finite_moments}), $\sqrt{n}(\hat\mu_Z-\mu_Z)=O_P(1)$. The Donsker property established in Step~2 below gives $\sup_{u\in[a,b]}|\bar\xi_n(u)|=O_P(n^{-1/2})$, so the product is $o_P(1)$ uniformly. Therefore
\begin{equation}\label{eq:slope_pop}
\sqrt{n}\bigl(\tilde\beta_1(u)-\beta_1^{\mathrm{unc}}(u)\bigr)
=
\hat S_n\,\frac{1}{\sqrt{n}}\sum_{j=1}^n \tilde Z_j\,\xi_j(u)+o_P(1)
\qquad\text{in }\ell^\infty([a,b])^{p}.
\end{equation}

\medskip
\noindent
\textit{Step~2: Functional CLT for the score.}
Define the joint score
\[
\Phi_j(u)
:=
\begin{pmatrix}
\zeta_j(u)\\[2pt]
\tilde Z_j\,\xi_j(u)
\end{pmatrix}
\in\R^{1+l},
\qquad
\zeta_j(u):=Q_{Y_j}(u)-E[Q_Y(u)].
\]
We verify that the class $\mathcal F:=\{(X,Y,Z)\mapsto \Phi(u):u\in[a,b]\}$ is $P$-Donsker componentwise. The first component class $\{\zeta(u):u\in[a,b]\}$ is $P$-Donsker by Lemma~\ref{lem:VCproduct} with square-integrable envelope $\sup_{u\in[a,b]}|Q_Y(u)|+|E[Q_Y(u)]|$ (Assumption~\ref{asspt:finite_moments}). For each coordinate $k\ge 1$ of the slope score, decompose $\tilde Z_k\,\xi(u)=\tilde Z_k\,Q_Y(u)-\tilde Z_k\,\mathbf X^{\!\T}\beta^{\mathrm{unc}}(u)$, where $\mathbf X:=(1,(X-\mu_X)^{\!\T})^{\!\T}$. The first term is $P$-Donsker by Lemma~\ref{lem:VCproduct} with square-integrable envelope;the second lies in a finite-dimensional linear span and is therefore $P$-Donsker (the envelope is square-integrable since $\sup_u\|\beta^{\text{unc}}(u)\|<\infty$: monotonicity of $Q_Y$ on $[a,b]$ gives $|Q_Y(u)|\le\max(|Q_Y(a)|,|Q_Y(b)|)$ for all $u\in[a,b]$, so $\|E[\mathbf{Z}\,Q_Y(u)]\|\le E[\|\mathbf{Z}\|\max(|Q_Y(a)|,|Q_Y(b)|)]<\infty$ uniformly in $u$ by Cauchy--Schwarz and Assumption~\ref{asspt:finite_moments}, and $\beta^{\text{unc}}(u)=\bar{S}_{\mathrm{2SLS}}\,E[\mathbf{Z}\,Q_Y(u)]$ inherits this bound). Since there are finitely many components, $\mathcal{F}$ is $P$-Donsker \citep[Example 2.10.7]{vaart1996weak}.  

To obtain the Gaussian limit for the coefficient process, it is enough to show that the transformed score $T\Phi(u)$ is mean zero, since $\sqrt n(\tilde\beta(\cdot)-\beta^{\mathrm{unc}}(\cdot))$ is asymptotically equivalent to $T n^{-1/2}\sum_{j=1}^n \Phi_j(\cdot)$. The intercept component satisfies $E[\zeta(u)]=0$ by construction. For the
slope component, we do not require $E[\tilde Z\,\xi(u)]=0$ componentwise.
Instead, it is enough that the transformed score entering the coefficient
process has mean zero. Recall that
\[
T=\mathrm{diag}(1,S_0),
\qquad
S_0=(\Sigma_{ZX}^{\!\T}\Sigma_{ZZ}^{-1}\Sigma_{ZX})^{-1}
\Sigma_{ZX}^{\!\T}\Sigma_{ZZ}^{-1}.
\]
Then
\[
T\,E[\Phi(u)]
=
\begin{pmatrix}
0\\
S_0\,E[\tilde Z\,\xi(u)]
\end{pmatrix}.
\]
By the definition of $\beta^{\mathrm{unc}}(u)$ as the population 2SLS
coefficient function,
\[
\Sigma_{ZX}^{\!\T}\Sigma_{ZZ}^{-1}E[\tilde Z\,\xi(u)]=0,
\]
and hence $S_0\,E[\tilde Z\,\xi(u)]=0$. Therefore
\[
T\,E[\Phi(u)]=0.
\]

Since $\mathcal F$ is $P$-Donsker, it follows that
\[
\frac{1}{\sqrt n}\sum_{j=1}^n\bigl(\Phi_j(\cdot)-E[\Phi(\cdot)]\bigr)
\leadsto
\mathbb G_\Phi(\cdot)
\qquad\text{in }\ell^\infty([a,b])^{1+l},
\]
where $\mathbb G_\Phi$ is a tight mean-zero Gaussian process with covariance
kernel
\[
E\!\left[\bigl(\Phi(u)-E[\Phi(u)]\bigr)\bigl(\Phi(u')-E[\Phi(u')]\bigr)^{\!\T}\right].
\]
Multiplying by the constant matrix $T$ and using $T\,E[\Phi(u)]=0$, we obtain
\[
T\,\frac{1}{\sqrt n}\sum_{j=1}^n \Phi_j(\cdot)
\leadsto
T\,\mathbb G_\Phi(\cdot)
\qquad\text{in }\ell^\infty([a,b])^{p+1}.
\]

\medskip
\noindent
\textit{Step~3: Conclusion.} We now combine the asymptotic linear representation with the centered Gaussian limit for the transformed score process.
By the law of large numbers and Assumption~\ref{ass:full_rank},
\[
\hat S_n\toP S_0:=(\Sigma_{ZX}^{\!\T}\Sigma_{ZZ}^{-1}\Sigma_{ZX})^{-1}
\Sigma_{ZX}^{\!\T}\Sigma_{ZZ}^{-1}.
\]
Define $T:=\mathrm{diag}(1,S_0)\in\R^{(p+1)\times(1+l)}$. Combining
\eqref{eq:intercept_exact}--\eqref{eq:slope_pop} with Slutsky's theorem and the
result of Step~2,
\[
\sqrt{n}\bigl(\tilde\beta(\cdot)-\beta^{\mathrm{unc}}(\cdot)\bigr)
=
\mathrm{diag}(1,\hat S_n)\,
\frac{1}{\sqrt{n}}\sum_{j=1}^n \Phi_j(\cdot)
+o_P(1)
\;\leadsto\;
T\,\mathbb G_\Phi(\cdot)
=:\mathbb G_\beta(\cdot)
\quad\text{in }\ell^\infty([a,b])^{p+1}.
\]
Since $T\,E[\Phi(u)]=0$, the covariance kernel of $\mathbb G_\beta$ is
\[
\Omega(u,u')
=
T\,E\!\left[\Phi(u)\Phi(u')^{\!\T}\right]T^{\!\T}.
\]
In particular, the slope--slope block is
\[
S_0\,E\!\left[\tilde Z\tilde Z^{\!\T}\xi(u)\xi(u')\right]S_0^{\!\T},
\]
the standard heteroskedasticity-robust 2SLS sandwich form.

\end{proof}

\begin{lemma}[Continuity of the coefficient covariance kernel]\label{lem:Omega-cont}
Under Assumption~\ref{ass:full_rank}--\ref{asspt:finite_moments}, and
Assumption~\ref{asspt:average_lipschitz}, the covariance kernel
\[
\Omega(u,u')
=
T\,E\!\left[\Phi(u)\Phi(u')^{\T}\right]T^{\T}
\]
of the limiting process in Theorem~\ref{thm:convergence_beta} is jointly
continuous on $[a,b]^2$.
\end{lemma}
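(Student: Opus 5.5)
The plan is to mirror the proof of Lemma~\ref{lem:Gamma-cont-strong}. We show that the map $u\mapsto\Phi(u)$ is continuous from $[a,b]$ into $L^2(P)^{1+l}$ and that it is uniformly bounded in that norm. Joint continuity of $\Omega$ then follows from a Cauchy--Schwarz bound on differences of products. Since $T$ is a fixed matrix, it suffices to show that each entry of $E[\Phi(u)\Phi(u')^{\T}]$ is jointly continuous. For any sequence $(u_n,u_n')\to(u,u')$ we will use
\[
\bigl|E[\Phi_k(u_n)\Phi_m(u_n')]-E[\Phi_k(u)\Phi_m(u')]\bigr|
\le \|\Phi_k(u_n)-\Phi_k(u)\|_{L^2}\|\Phi_m(u_n')\|_{L^2}+\|\Phi_k(u)\|_{L^2}\|\Phi_m(u_n')-\Phi_m(u')\|_{L^2}.
\]

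First we handle the intercept component $\zeta(u)=Q_Y(u)-E[Q_Y(u)]$. Lemma~\ref{lem:QY_L2_cont}, applied with $H=1$, gives $Q_Y(u_n)\to Q_Y(u)$ in $L^4$, and hence in $L^2$ and in $L^1$. Consequently $E[Q_Y(u_n)]\to E[Q_Y(u)]$, and $\zeta(u_n)\to\zeta(u)$ in $L^2$.

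Next we handle the slope components $\tilde Z\xi(u)$, using the decomposition
\[
\tilde Z\xi(u)=\tilde Z Q_Y(u)-\tilde Z\mathbf X^{\T}\beta^{\mathrm{unc}}(u).
\]
For the first piece, Lemma~\ref{lem:QY_L2_cont} with $H=\tilde Z$ (which has a finite fourth moment by Assumption~\ref{asspt:finite_moments}) gives $E[\|\tilde Z\|^2|Q_Y(u_n)-Q_Y(u)|^2]\to0$. For the second piece, the factor $\tilde Z\mathbf X^{\T}$ is square integrable, because $E\|\tilde Z\|^2\|\mathbf X\|^2\le (E\|\tilde Z\|^4E\|\mathbf X\|^4)^{1/2}<\infty$. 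It therefore suffices to show that $u\mapsto\beta^{\mathrm{unc}}(u)$ is continuous. We write $\beta^{\mathrm{unc}}(u)$ as a fixed matrix times the moment vector $(E[Q_Y(u)],\,E[\tilde Z Q_Y(u)])$, namely $\beta_0^{\mathrm{unc}}(u)=E[Q_Y(u)]$ and $\beta_1^{\mathrm{unc}}(u)=S_0E[\tilde ZQ_Y(u)]$. Both moments are continuous in $u$ by Lemma~\ref{lem:QY_L2_cont} and Cauchy--Schwarz. Combining the two pieces yields $\|\Phi(u_n)-\Phi(u)\|_{L^2}\to0$.

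It remains to prove uniform boundedness, $\sup_{u\in[a,b]}\|\Phi(u)\|_{L^2}<\infty$. This follows either from continuity of $u\mapsto\Phi(u)$ on the compact set $[a,b]$, or directly from the envelope $\max(|Q_Y(a)|,|Q_Y(b)|)$ together with the bound on $\sup_u\|\beta^{\mathrm{unc}}(u)\|$ already established in the proof of Theorem~\ref{thm:convergence_beta}. Inserting these facts into the displayed inequality proves joint continuity.

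The only mildly delicate step is the second slope piece, where we must verify that $\beta^{\mathrm{unc}}$ is continuous rather than merely bounded. Writing $\beta^{\mathrm{unc}}$ explicitly as a linear function of the two moments above reduces this step to Lemma~\ref{lem:QY_L2_cont}.
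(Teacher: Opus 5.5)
Your proposal is correct and follows essentially the same route as the paper. Both arguments show continuity of $u\mapsto\Phi(u)$ into $L^2(P)$ via Lemma~\ref{lem:QY_L2_cont}, with continuity of $\beta^{\mathrm{unc}}$ obtained from its explicit moment representation, and then apply the Cauchy--Schwarz bound on product differences, using boundedness of $\|\Phi(u)\|_{L^2(P)}$ on compact $[a,b]$ and constancy of $T$. The only extra in yours is the alternative envelope-based argument for uniform boundedness.
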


\begin{proof}
We first show that $u\mapsto\Phi(u)$ is continuous from $[a,b]$ into $L^2(P)$.
The first component follows from Lemma~\ref{lem:QY_L2_cont}:
\[
Q_Y(u_n)-E[Q_Y(u_n)]
\to
Q_Y(u)-E[Q_Y(u)]
\quad\text{in }L^2(P).
\]
For the slope component, write
\[
\xi(u_n)-\xi(u)
=
\{Q_Y(u_n)-Q_Y(u)\}
-
\mathbf X^{\T}\{\beta^{\mathrm{unc}}(u_n)-\beta^{\mathrm{unc}}(u)\}.
\]
Since
\[
\beta^{\mathrm{unc}}(u)
=
\begin{pmatrix}
E[Q_Y(u)]\\
S_0E[\tilde ZQ_Y(u)]
\end{pmatrix},
\]
Lemma~\ref{lem:QY_L2_cont} implies
$\beta^{\mathrm{unc}}(u_n)\to\beta^{\mathrm{unc}}(u)$. Hence, using
Assumption~\ref{asspt:finite_moments},
\[
E\!\left[\|\tilde Z\|^2|\xi(u_n)-\xi(u)|^2\right]\to0.
\]
Therefore
\[
\|\Phi(u_n)-\Phi(u)\|_{L^2(P)}\to0.
\]

Now let $(u_n,u_n')\to(u,u')$. Since
\[
\Phi(u_n)\Phi(u_n')^{\T}-\Phi(u)\Phi(u')^{\T}
=
\bigl(\Phi(u_n)-\Phi(u)\bigr)\Phi(u_n')^{\T}
+
\Phi(u)\bigl(\Phi(u_n')-\Phi(u')\bigr)^{\T},
\]
Cauchy--Schwarz gives
\begin{align*}
&\left\|E\!\left[
\Phi(u_n)\Phi(u_n')^{\T}-\Phi(u)\Phi(u')^{\T}
\right]\right\| \\
&\qquad\le
\|\Phi(u_n)-\Phi(u)\|_{L^2(P)}\,
\|\Phi(u_n')\|_{L^2(P)}
+
\|\Phi(u)\|_{L^2(P)}\,
\|\Phi(u_n')-\Phi(u')\|_{L^2(P)}.
\end{align*}
The difference terms converge to zero, and
$u\mapsto\|\Phi(u)\|_{L^2(P)}$ is bounded on compact $[a,b]$. Therefore
\[
E[\Phi(u_n)\Phi(u_n')^{\T}]
\to
E[\Phi(u)\Phi(u')^{\T}].
\]
Since $T$ is constant, $\Omega(u,u')$ is jointly continuous.
\end{proof}

\begin{corollary}\label{cor:cont-version-beta}
Under the assumptions of Theorem~\ref{thm:convergence_beta} and
Assumption~\ref{asspt:average_lipschitz}, the Gaussian limit
$\mathbb{G}_\beta$ admits a modification with a.s.\ continuous sample paths on
$[a,b]$.
\end{corollary}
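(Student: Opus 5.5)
The plan mirrors Corollary~\ref{cor:cont-version}, with Lemma~\ref{lem:Omega-cont} playing the role of Lemma~\ref{lem:Gamma-cont-strong}. Work coordinatewise: $\mathbb G_\beta$ takes values in $\ell^\infty([a,b])^{p+1}$, and a vector-valued process has a continuous modification if each of its finitely many coordinates does. So fix a coordinate $k\in\{0,\ldots,p\}$ and consider the scalar process $\mathbb G_{\beta,k}$.

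Its intrinsic semimetric is
\[
\rho_k(u,u')^2:=\Omega_{kk}(u,u)+\Omega_{kk}(u',u')-2\,\Omega_{kk}(u,u').
\]
By Lemma~\ref{lem:Omega-cont}, $\Omega$ is jointly continuous on $[a,b]^2$. Hence $\rho_k(u_n,u)\to0$ whenever $u_n\to u$ in the Euclidean metric. Compactness of $[a,b]$ then gives that $([a,b],\rho_k)$ is totally bounded: the identity map from $([a,b],|\cdot|)$ to $([a,b],\rho_k)$ is continuous, and the image of a compact set is compact.

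Next, Theorem~\ref{thm:convergence_beta} says $\mathbb G_\beta$ is a tight Borel measurable Gaussian element of $\ell^\infty([a,b])^{p+1}$. So each coordinate is tight in $\ell^\infty([a,b])$, and it is $L^2$-continuous with respect to $\rho_k$ by construction. Apply Addendum~1.5.8 of \citet{vaart1996weak}, exactly as in Corollary~\ref{cor:cont-version}. It yields a version of $\mathbb G_{\beta,k}$ whose sample paths are a.s.\ uniformly $\rho_k$-continuous. Strictly, tightness gives a semimetric making paths uniformly continuous; for Gaussian processes this semimetric can be taken to be the standard-deviation semimetric $\rho_k$, per Example~1.5.10 there.

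Finally, transfer continuity to the Euclidean metric. If $u_n\to u$, then $\rho_k(u_n,u)\to0$, so $\mathbb G_{\beta,k}(u_n)\to\mathbb G_{\beta,k}(u)$ on the probability-one event where paths are $\rho_k$-continuous. Intersect the finitely many probability-one events over $k$. This gives a modification of $\mathbb G_\beta$ with a.s.\ continuous sample paths in $C([a,b])^{p+1}$, and the modification has the same finite-dimensional laws as $\mathbb G_\beta$.

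The only delicate point is the precise form of the tightness-to-continuity statement. We need that tightness of a Gaussian process implies uniform continuity of paths with respect to its own standard-deviation semimetric $\rho_k$, not merely with respect to some unspecified semimetric. I would cite the Gaussian case of the van der Vaart--Wellner addendum for this. Everything else follows directly from Lemma~\ref{lem:Omega-cont}.
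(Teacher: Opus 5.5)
Your proposal is correct and takes the same route as the paper: continuity of $\Omega$ from Lemma~\ref{lem:Omega-cont}, combined with Addendum~1.5.8 of \citet{vaart1996weak}, in parallel with Corollary~\ref{cor:cont-version}. You also invoke the tightness of $\mathbb G_\beta$ from Theorem~\ref{thm:convergence_beta} explicitly, and the Gaussian case (Example~1.5.10) to identify the semimetric with $\rho_k$; this makes precise a step the paper's one-line proof leaves implicit, since continuity of the covariance kernel alone does not guarantee continuous sample paths.
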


\begin{proof}
By Lemma~\ref{lem:Omega-cont}, the covariance kernel $\Omega(u,u')$ is jointly
continuous on $[a,b]^2$. Since $[a,b]$ is compact and $\mathbb G_\beta$ is a
mean-zero Gaussian process, Addendum~1.5.8 of \citet{vaart1996weak} implies
that $\mathbb G_\beta$ admits a version with a.s.\ continuous sample paths on
$[a,b]$.
\end{proof}

\begin{proof}[Proof of Theorem~\ref{thm:pw_ols_clt}]

By Theorem~\ref{thm:convergence_beta} and
Corollary~\ref{cor:cont-version-beta},
\[
\sqrt{n}(\tilde\beta(\cdot)-\beta^{\mathrm{unc}}(\cdot))
\leadsto
\mathbb G_\beta(\cdot)
\]
in $\ell^\infty([a,b])^{p+1}$, and the limit process admits a version with
a.s.\ continuous sample paths on $[a,b]$. Under Assumption~\ref{asspt:average_strictness}, $\psi_x\in\mathcal Q$ for every $x\in\mathcal X$, so the population projection is inactive and $\beta^{\est}=\beta^{\mathrm{unc}}$. It therefore suffices to show that
\begin{equation}\label{eq:Delta_goal}
\sqrt{n}\,\|\Delta_n\|_\infty=o_P(1),
\qquad
\Delta_n:=\hat\beta^{\est}-\tilde\beta,
\end{equation}
since the result then follows from Slutsky's theorem.

The projection correction $D_x(u):=\Pi_{\mathcal Q}(\hat\psi_x)(u)-\hat\psi_x(u)$ enters through the OLS decomposition,
\[
\Delta_{0,n}(u)=\frac{1}{n}\sum_{j=1}^n D_{X_j}(u),
\qquad
\Delta_{1,n}(u)=\hat\Sigma_{XX}^{-1}\frac{1}{n}\sum_{j=1}^n(X_j-\hat\mu_X)\,D_{X_j}(u).
\]
Under the bounded support assumption \ref{asspt:bounded_support}, $\sup_{x\in\mathcal X}\|x-\mu_X\|\le B$ and thus $\|X_j-\hat\mu_X\|\le 2B$ for all $j$, so both components can be controlled once we establish
\begin{equation}\label{eq:unif_Dx}
\sup_{x\in\mathcal X}\sqrt{n}\,\|D_x\|_\infty\;\toP\;0.
\end{equation}
Indeed, \eqref{eq:unif_Dx} immediately gives $\sqrt{n}\|\Delta_{0,n}\|_\infty\to_P 0$, and for the slope \[\sqrt{n}\|\Delta_{1,n}\|_\infty\le 2B\,\|\hat\Sigma_{XX}^{-1}\|_{\mathrm{op}}\sup_x\sqrt{n}\|D_x\|_\infty\to_P 0\] since $\hat\Sigma_{XX}^{-1}=O_P(1)$.

We now prove~\eqref{eq:unif_Dx}. The estimation error $e_x(u):=\hat\psi_x(u)-\psi_x(u)$ is affine in $x$. Writing $d_n := \tilde\beta_1-\beta_1^{\mathrm{unc}} \in \R^p$ with components $d_{n,1},\ldots,d_{n,p}$, and $a_n := \tilde\beta_0-\beta_0^{\mathrm{unc}} + \tilde\beta_1^{\!\top}(\mu_X-\hat\mu_X)$, we have
\[
e_x = a_n + \sum_{r=1}^p d_{n,r}\,(x_r-\mu_{X,r}).
\]
Set $H_n^x:=\sqrt{n}\,e_x$ and denote its $p+1$ coefficient functions by $H_{0,n}:=\sqrt{n}\,a_n$ and $H_{r,n}:=\sqrt{n}\,d_{n,r}$ for $r=1,\ldots,p$, so that $H_n^x=H_{0,n}+\sum_{r=1}^p H_{r,n}\,(x_r-\mu_{X,r})$. We claim that $(H_{0,n},H_{1,n},\ldots,H_{p,n})$ is jointly asymptotically tight in $C([a,b])^{p+1}$. For each $r=1,\ldots,p$, $\sqrt{n}\,d_{n,r}$ is a coordinate of $\sqrt{n}(\tilde\beta_1-\beta_1^{\mathrm{unc}})$, which converges weakly in $\ell^\infty([a,b])^p$ to a limit with a.s.\ continuous paths (Corollary~\ref{cor:cont-version-beta}), and is therefore asymptotically tight in $C([a,b])^p$. For the intercept coefficient, split $\tilde\beta_1 = \beta_1^{\mathrm{unc}} + d_n$ to obtain
\[
\sqrt{n}\,a_n=\sqrt{n}(\tilde\beta_0-\beta_0^{\mathrm{unc}})-\beta_1^{\mathrm{unc}}{}^{\!\top}\sqrt{n}(\hat\mu_X-\mu_X)+d_n^{\!\top}\sqrt{n}(\mu_X-\hat\mu_X);
\]
the first term is asymptotically tight in $C([a,b])$ by Theorem~\ref{thm:convergence_beta} and Corollary~\ref{cor:cont-version-beta} and the second is $O_P(1)$ times the continuous function $\beta_1^{\mathrm{unc}}(\cdot)$. For the third term, $\sup_{u\in[a,b]}\|d_n(u)\|=O_P(n^{-1/2})$ by Theorem~\ref{thm:convergence_beta} and $\sqrt{n}(\hat\mu_X-\mu_X)=O_P(1)$ by the multivariate CLT under Assumption~\ref{asspt:finite_moments}, so $\sup_{u\in[a,b]}|d_n(u)^{\!\T}\sqrt{n}(\mu_X-\hat\mu_X)|\le \sup_u\|d_n(u)\|\cdot\|\sqrt{n}(\hat\mu_X-\mu_X)\|=o_P(1)$.

Fix $\varepsilon>0$. By asymptotic tightness in $C([a,b])^{p+1}$, there exist a compact $K_\varepsilon\subset C([a,b])^{p+1}$ and an integer $N_\varepsilon$ such that $P((\sqrt{n}\,a_n,\sqrt{n}\,d_{n,1},\ldots,\sqrt{n}\,d_{n,p})\in K_\varepsilon)\ge 1-\varepsilon$ for all $n\ge N_\varepsilon$. Since $K_\varepsilon$ is compact in $C([a,b])^{p+1}$, it is totally bounded and can be covered by finitely many sup-norm $\varepsilon$-balls. Approximating each ball center by a nearby $C^1$ function (using density of $C^1$ in $C([a,b])$) and setting $\Lambda_\varepsilon$ to be the maximum Lipschitz constant over these finitely many approximants, we obtain random functions $\ell_{0,n},\dots,\ell_{p,n}\in C^1([a,b])$ with $\mathrm{Lip}(\ell_{r,n})\le\Lambda_\varepsilon$ (namely, the nearest $C^1$ center to the realization) such that the event
\[
\mathcal{E}_n := \Big\{\max_r\|H_{r,n}-\ell_{r,n}\|_\infty\le\varepsilon\Big\}
\]
satisfies $P(\mathcal{E}_n)\ge 1-\varepsilon$ for all $n\ge N_\varepsilon$. Since $H_n^x=H_{0,n}+\sum_r H_{r,n}\,(x_r-\mu_{X,r})$, define the corresponding smooth proxy $L_n^x:=\ell_{0,n}+\sum_r\ell_{r,n}(x_r-\mu_{X,r})$, so that $H_n^x-L_n^x=(H_{0,n}-\ell_{0,n})+\sum_r(H_{r,n}-\ell_{r,n})(x_r-\mu_{X,r})$. On $\mathcal{E}_n$, for every $x\in\mathcal X$, the triangle inequality gives
\[
\|H_n^x-L_n^x\|_\infty
\le\|H_{0,n}-\ell_{0,n}\|_\infty+\sum_r\|H_{r,n}-\ell_{r,n}\|_\infty\,|x_r-\mu_{X,r}|
\le\varepsilon(1+\|x-\mu_X\|_1)
\le C_B\varepsilon,
\]
where $C_B:=1+\sqrt{p}\,B$ by Cauchy--Schwarz and Assumption~\ref{asspt:bounded_support}. Similarly, $\mathrm{Lip}(L_n^x)\le\Lambda_\varepsilon(1+\|x-\mu_X\|_1)\le\Lambda_\varepsilon C_B$.

For $n$ large enough that $n^{-1/2}\Lambda_\varepsilon C_B\le\kappa/2$, the function $\psi_x+n^{-1/2}L_n^x$ has secant slopes at least $\kappa-n^{-1/2}\Lambda_\varepsilon C_B\ge\kappa/2>0$ on $[a,b]$ (using Assumption~\ref{asspt:average_strictness}), hence lies in $\mathcal Q$. By the $\|\cdot\|_\infty$-contraction of $\Pi_{\mathcal Q}$ \citep[Lemma A-6 in][]{van2025regression}:
\begin{align*}
\sqrt{n}\,\|D_x\|_\infty
&=\|\Pi_{\mathcal Q}(\psi_x+n^{-1/2}H_n^x)-(\psi_x+n^{-1/2}H_n^x)\|_\infty\cdot\sqrt{n}\\
&\le\|\Pi_{\mathcal Q}(\psi_x+n^{-1/2}H_n^x)-\Pi_{\mathcal Q}(\psi_x+n^{-1/2}L_n^x)\|_\infty\cdot\sqrt{n}+\|H_n^x-L_n^x\|_\infty\\
&\le 2\|H_n^x-L_n^x\|_\infty
\le 2C_B\varepsilon.
\end{align*}
Since the bound holds simultaneously for all $x\in\mathcal X$ on $\mathcal{E}_n$, we have $P(\sup_{x\in\mathcal X}\sqrt{n}\,\|D_x\|_\infty\le 2C_B\varepsilon)\ge 1-\varepsilon$ for all $n$ large enough. Moreover, since $\varepsilon>0$ is arbitrary, \eqref{eq:unif_Dx} follows.

\end{proof}

\begin{proof}[Proof of Theorem~\ref{thm:bootstrap}]
The proof of Theorem~\ref{thm:convergence_beta} establishes that the score class $\mathcal{F}:=\{\Phi(\cdot,u):u\in[a,b]\}$ is $P$-Donsker with square-integrable envelope. By the multiplier CLT for Donsker classes \citep[Theorem~2.9.6]{vaart1996weak}, the centered process $n^{-1/2}\sum_j\omega_j(\Phi_j(\cdot)-\bar\Phi_n(\cdot))\rightsquigarrow_{\Prob}\mathbb{G}_\Phi(\cdot)$, where $\bar\Phi_n:=n^{-1}\sum_j\Phi_j$. The difference between centered and uncentered processes is $\bar\omega_n\cdot\sqrt{n}\,\bar\Phi_n(\cdot)$, where $\bar\omega_n:=n^{-1}\sum_j\omega_j=O_{P^X}(n^{-1/2})$ and $\sup_u\|\bar\Phi_n(u)\|=O_P(1)$ by the Donsker property, so the difference is $o_{P^X}(1)$ uniformly. Therefore the uncentered infeasible process satisfies $T\,n^{-1/2}\sum_j\omega_j\Phi_j(\cdot)\rightsquigarrow_{\Prob}\mathbb{G}_\beta(\cdot)$.

It remains to replace $(\Phi_j, T)$ by $(\hat\Phi_j, \hat T)$. Since $\|\hat T-T\|=o_p(1)$, it suffices to show $R_n(\cdot):=n^{-1/2}\sum_j\omega_j\Delta_j(\cdot)=o_{P^x}(1)$ in $\ell^\infty([a,b])^{1+l}$, where $\Delta_j:=\hat\Phi_j-\Phi_j$. We decompose $\Delta_j$ and bound each term.

For the intercept component, $\Delta_{j,0}(u)=E[Q_Y(u)]-\bar Q_n(u)$ does not depend on $j$, so $n^{-1/2}\sum_j\omega_j\Delta_{j,0}(u)=\sqrt{n}\,(E[Q_Y(u)]-\bar Q_n(u))\,\bar\omega_n$. Since $\sup_u|E[Q_Y(u)]-\bar Q_n(u)|=O_p(n^{-1/2})$ by the Donsker property and $\bar\omega_n=O_{P^x}(n^{-1/2})$, this is $o_{P^x}(1)$ uniformly.

For the slope components ($k=1,\ldots,l$), remember the definition $\tilde{Z}_{jk} = Z_{jk} - \mu_{Z,k}$. Then write,
\begin{align*}
\Delta_{j,k}(u)
& =(Z_{jk}-\bar Z_{n,k})\hat\xi_j(u)-\tilde Z_{jk}\,\xi_j(u) \\
&=\underbrace{(\mu_{Z,k}-\bar Z_{n,k})\,\xi_j(u)}_{(a)}
+\underbrace{\tilde Z_{jk}\big(\hat\xi_j(u)-\xi_j(u)\big)}_{(b)}
+\underbrace{(\mu_{Z,k}-\bar Z_{n,k})\big(\hat\xi_j(u)-\xi_j(u)\big)}_{(c)}.
\end{align*}

For~$(a)$: $n^{-1/2}\sum_j\omega_j(\mu_{Z,k}-\bar Z_{n,k})\xi_j(u)=(\mu_{Z,k}-\bar Z_{n,k})\cdot n^{-1/2}\sum_j\omega_j\xi_j(u)$. The first factor is $o_p(1)$; the second is $O_{P^X}(1)$ uniformly in $u$ by the multiplier CLT applied to the class $\{\xi(\cdot,u):u\in[a,b]\}$, which is $P$-Donsker as shown in the proof of Theorem~\ref{thm:convergence_beta} (Step~2). The product is $o_{P^x}(1)$ uniformly.

For~$(b)$: expand $\hat\xi_j(u)-\xi_j(u)=-\mathbf{X}_j^{\!\T}(\tilde\beta(u)-\beta^{\mathrm{unc}}(u))+(\hat\mu_X-\mu_X)^{\!\T}\tilde\beta_1(u)$. Then
\begin{align*}
& \frac{1}{\sqrt n}\sum_j\omega_j\tilde Z_{jk}\big(\hat\xi_j(u)-\xi_j(u)\big)\\
& =-\big(\tilde\beta(u)-\beta^{\mathrm{unc}}(u)\big)^{\!\T}\frac{1}{\sqrt n}\sum_j\omega_j\tilde Z_{jk}\mathbf{X}_j
+(\hat\mu_X-\mu_X)^{\!\T}\tilde\beta_1(u)\frac{1}{\sqrt n}\sum_j\omega_j\tilde Z_{jk}.
\end{align*}
In the first term, $\sup_u\|\tilde\beta(u)-\beta^{\mathrm{unc}}(u)\|=O_p(n^{-1/2})$ by Theorem~\ref{thm:convergence_beta}, while $n^{-1/2}\sum_j\omega_j\tilde Z_{jk}\mathbf{X}_j=O_{P^x}(1)$ (a finite-dimensional multiplier sum with finite second moments by Assumption~\ref{asspt:finite_moments}). The product is $o_{P^x}(1)$ uniformly. The second term is $o_p(1)\cdot O_p(1)\cdot O_{P^x}(1)=o_{P^x}(1)$ by the same reasoning.

For~$(c)$: the product of the $o_p(1)$ and $O_p(n^{-1/2})$ factors from~$(a)$ and~$(b)$ gives a contribution of smaller order.

Combining all terms yields $\sup_u\|R_n(u)\|=o_{P^x}(1)$ in probability, completing the proof.
\end{proof}

\begin{proof}[Proof of Corollary~\ref{cor:bootstrap_projected}]
Conditionally on the data, the projected bootstrap coefficients are
$\hat\beta^{\est,*}(u)=(\hat{\mathbf{X}}^{\!\T}\hat{\mathbf{X}})^{-1}\hat{\mathbf{X}}^{\!\T}\Pi_{\mathcal Q}(\hat\psi^*_{X_j})(u)$,
where $\hat\psi^*_{X_j}(u)=\tilde\beta^*_0(u)+\tilde\beta^{*\!\T}_1(u)(X_j-\hat\mu_X)$ and $\tilde\beta^*$ are the bootstrap unconstrained coefficients. The bootstrap process is $\hat{\mathbb{G}}_\beta^{\est,*}(u)=\sqrt{n}(\hat\beta^{\est,*}(u)-\hat\beta^{\est}(u))$.

The argument proceeds exactly as in the proof of Theorem~\ref{thm:pw_ols_clt}, with the bootstrap perturbation playing the role of the estimation error. Specifically, define the bootstrap analogue of the estimation error $e_x^*(u):=\hat\psi^*_x(u)-\hat\psi_x(u)$, which is affine in $x$ with coefficient functions $\tilde\beta^*(u)-\tilde\beta(u)$. The bootstrap projection correction is $D_x^*(u):=\Pi_{\mathcal Q}(\hat\psi^*_x)(u)-\hat\psi^*_x(u)$, and the difference between projected and unprojected bootstrap coefficients is
\[
\hat\beta^{\est,*}(u)-\tilde\beta^*(u)=(\hat{\mathbf{X}}^{\!\T}\hat{\mathbf{X}})^{-1}\hat{\mathbf{X}}^{\!\T} D^*_{X_j}(u).
\]
It suffices to show $\sup_{x\in\mathcal X}\sqrt{n}\,\|D^*_x\|_\infty=o_{P^x}(1)$ in probability. By Theorem~\ref{thm:bootstrap}, $\sqrt{n}(\tilde\beta^*(\cdot)-\tilde\beta(\cdot))\rightsquigarrow_{\Prob}\mathbb{G}_\beta(\cdot)$ in $\ell^\infty([a,b])^{p+1}$, which has a.s.\ continuous paths by Corollary~\ref{cor:cont-version-beta}. The bootstrap coefficient functions are therefore conditionally asymptotically tight in $C([a,b])^{p+1}$. Write $\hat\psi_x^*=\psi_x+n^{-1/2}(H_n^x+H_n^{*,x})$, where $H_n^x:=\sqrt{n}(\hat\psi_x-\psi_x)$ is the scaled estimation error and $H_n^{*,x}:=\sqrt{n}(\hat\psi_x^*-\hat\psi_x)$ is the scaled bootstrap perturbation. Denote the coefficient functions of $H_n^x$ by $(H_{0,n},\ldots,H_{p,n})$ as in the proof of Theorem~\ref{thm:pw_ols_clt}, and those of $H_n^{*,x}$ by $(H_{0,n}^*,\ldots,H_{p,n}^*)$. By Theorem~\ref{thm:convergence_beta} and Corollary~\ref{cor:cont-version-beta}, $(H_{0,n},\ldots,H_{p,n})$ is asymptotically tight in $C([a,b])^{p+1}$. By Theorem~\ref{thm:bootstrap}, $(H_{0,n}^*,\ldots,H_{p,n}^*)$ is conditionally asymptotically tight in $C([a,b])^{p+1}$ on a data event $\mathcal{A}_n$ with $P(\mathcal{A}_n)\to 1$.

Fix $\varepsilon>0$. By the same covering argument as in the proof of Theorem~\ref{thm:pw_ols_clt}, on a data event of probability $\geq 1-\varepsilon$ for $n$ large, there exist $C^1$ proxies $\ell_{r,n}$ for $H_{r,n}$ with $\max_r\|H_{r,n}-\ell_{r,n}\|_\infty\le\varepsilon$ and $\mathrm{Lip}(\ell_{r,n})\le\Lambda_\varepsilon$. Conditionally on such data, there likewise exist $C^1$ proxies $\ell_{r,n}^*$ for $H_{r,n}^*$ with $\max_r\|H_{r,n}^*-\ell_{r,n}^*\|_\infty\le\varepsilon$ and $\mathrm{Lip}(\ell_{r,n}^*)\le\Lambda_\varepsilon^*$ with conditional probability $\geq 1-\varepsilon$. Define $L_n^x:=\ell_{0,n}+\sum_r\ell_{r,n}(x_r-\mu_{X,r})$ and $L_n^{*,x}:=\ell_{0,n}^*+\sum_r\ell_{r,n}^*(x_r-\mu_{X,r})$, so that $\mathrm{Lip}(L_n^x+L_n^{*,x})\le(\Lambda_\varepsilon+\Lambda_\varepsilon^*)C_B$. For $n$ large enough that $n^{-1/2}(\Lambda_\varepsilon+\Lambda_\varepsilon^*)C_B\le\kappa/2$, Assumption~\ref{asspt:average_strictness} ensures that $\psi_x+n^{-1/2}(L_n^x+L_n^{*,x})$ has secant slopes $\geq\kappa/2>0$ on $[a,b]$ for all $x\in\mathcal X$, hence lies in $\mathcal Q$. By the $\|\cdot\|_\infty$-contraction of $\Pi_{\mathcal Q}$ \citep[Lemma A-6 (iii) in][]{van2025regression}:
\begin{align*}
\sqrt{n}\,\|D_x^*\|_\infty
&=\sqrt{n}\,\|\Pi_{\mathcal Q}(\hat\psi_x^*)-\hat\psi_x^*\|_\infty\\
&\le 2\|(H_n^x+H_n^{*,x})-(L_n^x+L_n^{*,x})\|_\infty\\
&\le 2(\|H_n^x-L_n^x\|_\infty+\|H_n^{*,x}-L_n^{*,x}\|_\infty)
\le 4C_B\varepsilon,
\end{align*}
uniformly over $x\in\mathcal X$. Since $\varepsilon$ is arbitrary, $\sup_x\sqrt{n}\|D_x^*\|_\infty=o_{P^X}(1)$ in probability.

It follows that $\hat{\mathbb{G}}_\beta^{\est,*}$ and the unprojected bootstrap process $\hat{\mathbb{G}}_\beta^*$ have the same conditional weak limit $\mathbb{G}_\beta$. Combined with Theorem~\ref{thm:bootstrap}, this gives $\hat{\mathbb{G}}_\beta^{\est,*}(\cdot)\rightsquigarrow_{\Prob}\mathbb{G}_\beta(\cdot)$ in $\ell^\infty([a,b])^{p+1}$.
\end{proof}

\begin{proof}[Proof of Proposition~\ref{prop:empirical_quantiles}]
For each $j\le n$, define the first-stage quantile estimation error
\[
\hat\Delta_j(u):=\widehat Q_{Y_j}(u)-Q_{Y_j}(u),
\qquad
u\in[a,b],
\]
and recall
\[
R_n=\max_{1\le j\le n}\|\hat\Delta_j\|_\infty.
\]
By Assumption~\ref{asspt:empirical_quantiles}, $R_n=o_p(n^{-1/2})$.

\medskip
\noindent\emph{Step 1: unconstrained coefficients.}
Because the feasible unprojected estimator is just the sample 2SLS coefficient vector with $\widehat Q_{Y_j}(u)$ in place of $Q_{Y_j}(u)$ at each $u$, we have
\[
\bar{\tilde\beta}_0(u)-\tilde\beta_0(u)
=
\frac{1}{n}\sum_{j=1}^n \hat\Delta_j(u),
\]
and
\[
\bar{\tilde\beta}_1(u)-\tilde\beta_1(u)
=
\hat S_{\mathrm{2SLS}}
\left(
\frac{1}{n}\sum_{j=1}^n (Z_j-\hat\mu_Z)\,\hat\Delta_j(u)
\right),
\]
where
\[
\hat S_{\mathrm{2SLS}}
:=
(\hat\Sigma_{ZX}^{\!\T}\hat\Sigma_{ZZ}^{-1}\hat\Sigma_{ZX})^{-1}
\hat\Sigma_{ZX}^{\!\T}\hat\Sigma_{ZZ}^{-1}.
\]
Therefore,
\[
\sup_{u\in[a,b]}\bigl|\bar{\tilde\beta}_0(u)-\tilde\beta_0(u)\bigr|
\le
\frac{1}{n}\sum_{j=1}^n \|\hat\Delta_j\|_\infty
\le R_n,
\]
and
\begin{align}
\sup_{u\in[a,b]}
\bigl\|
\bar{\tilde\beta}_1(u)-\tilde\beta_1(u)
\bigr\|
&\le
\|\hat S_{\mathrm{2SLS}}\|
\cdot
\frac{1}{n}\sum_{j=1}^n \|Z_j-\hat\mu_Z\|\,\|\hat\Delta_j\|_\infty \nonumber\\
&\le
\|\hat S_{\mathrm{2SLS}}\|
\left(
\frac{1}{n}\sum_{j=1}^n \|Z_j-\hat\mu_Z\|
\right)R_n.
\label{eq:empq_slope_bound}
\end{align}
Under Assumption~\ref{ass:full_rank}, the population matrix
\[
S_{\mathrm{2SLS}}
:=
(\Sigma_{ZX}^{\!\T}\Sigma_{ZZ}^{-1}\Sigma_{ZX})^{-1}
\Sigma_{ZX}^{\!\T}\Sigma_{ZZ}^{-1}
\]
is well-defined. Since $\hat\Sigma_{ZZ}\toP \Sigma_{ZZ}$ and $\hat\Sigma_{ZX}\toP \Sigma_{ZX}$ by the law of large numbers, it follows that $\hat S_{\mathrm{2SLS}}\toP S_{\mathrm{2SLS}}$, hence $\|\hat S_{\mathrm{2SLS}}\|=O_p(1)$. Moreover, Assumption~\ref{asspt:finite_moments} implies $E\|Z\|<\infty$, so
\[
\frac{1}{n}\sum_{j=1}^n \|Z_j-\hat\mu_Z\| = O_p(1).
\]
Combining this with \eqref{eq:empq_slope_bound} and $R_n=o_p(n^{-1/2})$ yields
\[
\sup_{u\in[a,b]}
\bigl\|
\bar{\tilde\beta}_1(u)-\tilde\beta_1(u)
\bigr\|
=
o_p(n^{-1/2}).
\]
Together with the intercept bound, this proves part~(ii):
\[
\|\bar{\tilde\beta}-\tilde\beta\|_{\ell^\infty([a,b])^{p+1}}
=
o_p(n^{-1/2}).
\]

\medskip
\noindent\emph{Step 2: IV-weighted quantile curves at a fixed $x$.}
For each fixed $x\in\R^p$,
\[
\bar\psi_x(u)-\hat\psi_x(u)
=
\bigl(\bar{\tilde\beta}_0(u)-\tilde\beta_0(u)\bigr)
+
\bigl(\bar{\tilde\beta}_1(u)-\tilde\beta_1(u)\bigr)^{\!\T}(x-\hat\mu_X).
\]
Hence
\begin{align*}
\|\bar\psi_x-\hat\psi_x\|_\infty
&\le
\sup_{u\in[a,b]}
\bigl|\bar{\tilde\beta}_0(u)-\tilde\beta_0(u)\bigr|
+
\|x-\hat\mu_X\|
\sup_{u\in[a,b]}
\bigl\|
\bar{\tilde\beta}_1(u)-\tilde\beta_1(u)
\bigr\| \\
&=
o_p(n^{-1/2}),
\end{align*}
because $\hat\mu_X\toP \mu_X$ and therefore $\|x-\hat\mu_X\|=O_p(1)$ for fixed $x$.

For the projected curves, Lemma A-6 (iii) in \citet{van2025regression} gives the contraction bound
\[
\|\Pi_{\mathcal Q}(\bar\psi_x)-\Pi_{\mathcal Q}(\hat\psi_x)\|_\infty
\le
\|\bar\psi_x-\hat\psi_x\|_\infty
=
o_p(n^{-1/2}).
\]
This proves part~(i).

\medskip
\noindent\emph{Step 3: projected coefficient functions.}
Let
\[
d_j(u)
:=
\Pi_{\mathcal Q}(\bar\psi_{X_j})(u)-\Pi_{\mathcal Q}(\hat\psi_{X_j})(u),
\qquad
\hat{\mathbf X}_j
:=
\begin{pmatrix}
1\\
X_j-\hat\mu_X
\end{pmatrix}.
\]
Then
\[
\bar\beta^{\est}(u)-\hat\beta^{\est}(u)
=
\left(\frac{\hat{\mathbf X}^{\!\T}\hat{\mathbf X}}{n}\right)^{-1}
\left(
\frac{1}{n}\sum_{j=1}^n \hat{\mathbf X}_j\, d_j(u)
\right).
\]
Taking norms and suprema,
\begin{equation}
\sup_{u\in[a,b]}
\|\bar\beta^{\est}(u)-\hat\beta^{\est}(u)\|
\le
\left\|
\left(\frac{\hat{\mathbf X}^{\!\T}\hat{\mathbf X}}{n}\right)^{-1}
\right\|
\cdot
\sup_{u\in[a,b]}
\left\|
\frac{1}{n}\sum_{j=1}^n \hat{\mathbf X}_j\, d_j(u)
\right\|.
\label{eq:empq_projcoef_1}
\end{equation}
By the contraction property of $\Pi_{\mathcal Q}$,
\[
|d_j(u)|
\le
\|\bar\psi_{X_j}-\hat\psi_{X_j}\|_\infty
\le
\sup_{v\in[a,b]}
\bigl|\bar{\tilde\beta}_0(v)-\tilde\beta_0(v)\bigr|
+
\|X_j-\hat\mu_X\|
\sup_{v\in[a,b]}
\bigl\|
\bar{\tilde\beta}_1(v)-\tilde\beta_1(v)
\bigr\|.
\]
Therefore,
\begin{align*}
\sup_{u\in[a,b]}
\left\|
\frac{1}{n}\sum_{j=1}^n \hat{\mathbf X}_j\, d_j(u)
\right\|
&\le
\left(
\frac{1}{n}\sum_{j=1}^n \|\hat{\mathbf X}_j\|
\right)
\sup_{v\in[a,b]}
\bigl|\bar{\tilde\beta}_0(v)-\tilde\beta_0(v)\bigr| \\
&\quad +
\left(
\frac{1}{n}\sum_{j=1}^n \|\hat{\mathbf X}_j\|\,\|X_j-\hat\mu_X\|
\right)
\sup_{v\in[a,b]}
\bigl\|
\bar{\tilde\beta}_1(v)-\tilde\beta_1(v)
\bigr\|.
\end{align*}
Assumption~\ref{asspt:finite_moments} implies $E\|X\|^2<\infty$, hence
\[
\frac{1}{n}\sum_{j=1}^n \|\hat{\mathbf X}_j\| = O_p(1),
\qquad
\frac{1}{n}\sum_{j=1}^n \|\hat{\mathbf X}_j\|\,\|X_j-\hat\mu_X\| = O_p(1).
\]
Further,
\[
\frac{\hat{\mathbf X}^{\!\T}\hat{\mathbf X}}{n}
=
\begin{pmatrix}
1 & 0^{\!\T}\\
0 & \hat\Sigma_{XX}
\end{pmatrix},
\]
and $\hat\Sigma_{XX}^{-1}=O_p(1)$ as shown in the proof of Theorem~\ref{thm:pw_ols_clt}.

Combining the preceding display with \eqref{eq:empq_projcoef_1} and part~(ii), we obtain
\[
\|\bar\beta^{\est}-\hat\beta^{\est}\|_{\ell^\infty([a,b])^{p+1}}
=
o_p(n^{-1/2}),
\]
which proves part~(iii).
\end{proof}

\begin{proof}[Proof of Corollary~\ref{cor:empirical_quantiles}]
Each feasible estimator decomposes as its infeasible counterpart plus a remainder that is $o_p(1)$ in the relevant $\ell^\infty$ norm after scaling by $\sqrt{n}$, by Proposition~\ref{prop:empirical_quantiles}. Specifically:
\begin{alignat*}{3}
&\text{(i)}&\quad \sqrt n\bigl(\bar\psi_x-\psi_x\bigr)
&=
\sqrt n\bigl(\hat\psi_x-\psi_x\bigr)
+
\underbrace{\sqrt n\bigl(\bar\psi_x-\hat\psi_x\bigr)}_{o_p(1)}, \\
&\text{(ii)}&\quad \sqrt n\bigl(\Pi_{\mathcal Q}(\bar\psi_x)-\psi_x\bigr)
&=
\sqrt n\bigl(\Pi_{\mathcal Q}(\hat\psi_x)-\psi_x\bigr)
+
\underbrace{\sqrt n\bigl(\Pi_{\mathcal Q}(\bar\psi_x)-\Pi_{\mathcal Q}(\hat\psi_x)\bigr)}_{o_p(1)}, \\
&\text{(iii)}&\quad \sqrt n\bigl(\bar{\tilde\beta}-\beta^{\mathrm{unc}}\bigr)
&=
\sqrt n\bigl(\tilde\beta-\beta^{\mathrm{unc}}\bigr)
+
\underbrace{\sqrt n\bigl(\bar{\tilde\beta}-\tilde\beta\bigr)}_{o_p(1)}, \\
&\text{(iv)}&\quad \sqrt n\bigl(\bar\beta^{\est}-\beta^{\est}\bigr)
&=
\sqrt n\bigl(\hat\beta^{\est}-\beta^{\est}\bigr)
+
\underbrace{\sqrt n\bigl(\bar\beta^{\est}-\hat\beta^{\est}\bigr)}_{o_p(1)}.
\end{alignat*}
In each line, the first term converges weakly to the corresponding Gaussian limit by Theorems~\ref{thm:convergence_unprojected}--\ref{thm:pw_ols_clt}. The result follows by Slutsky's lemma.
\end{proof}

\section{Derivation of the asymptotic covariance expression}
\label{sec:covariance_derivation}

This subsection derives the influence function of $\hat\psi_x(u)$ and the
corresponding asymptotic covariance kernel appearing in
Theorem~\ref{thm:convergence_unprojected}. Under correct specification, the
general expression simplifies to a closed form, which is useful for comparison
with \CLP.

\paragraph{Notation.}
To simplify these derivations, we first introduce some additional notation. Let $\widetilde Z := Z - \mu_Z$, $\widetilde X := X - \mu_X$, and fix the evaluation point $x\in\mathbb R^p$ with $\tilde x := x-\mu_X$.
Define the population moments
\[
M := \Sigma_{ZZ}^{-1},\qquad S := \Sigma_{ZX},\qquad A := S^{\T} M S.
\]
The IV weight used in our estimator can be written as,
\[
s(Z,x)\;=\; 1 + \tilde x^{\T} A^{-1} S^{\T} M\,\widetilde Z.
\]
Collect the nuisance parameters into $\theta=(\mu_X,\mu_Z, \Sigma_{ZX}, \Sigma_{ZZ})$, and write the target functional as
\[
\psi_x(u;\theta)\;=\; \E\!\big[\, s_\theta(Z,x)\,Q_Y(u)\,\big].
\]
We first derive the general asymptotic linear representation of $\hat\psi_x(u)$.
We then show that, under correct specification of the linear model, the
influence function simplifies to a closed form that coincides with the
covariance formula in \CLP.

The vector for the sample moments is
\[
m(W)\;=\;\Big(\;\widetilde X,\;\widetilde Z,\; \mathrm{vec}\big(\widetilde Z\widetilde X^{\T}-\Sigma_{ZX}\big) \;, \mathrm{vec}\big(\widetilde Z\widetilde Z^{\T}-\Sigma_{ZZ}\big)\;\Big).
\]

\paragraph{Influence Function.}

Recall that the unprojected estimator is, 
$\hat\psi_x(u)=n^{-1}\sum_{j=1}^n \hat s_j(x)\,Q_{Y_j}(u)$. A first-order expansion of $\hat s_j=g(\hat\theta,Z_j)$ about $\theta_0$ gives
\[
\sqrt{n}\big(\hat\psi_x(u)-\psi_x(u)\big)
= \frac{1}{\sqrt{n}}\sum_{j=1}^n\Big(s_j\,Q_{Y_j}(u)-\psi_x(u)\Big)
\;+\;\Big(\partial_\theta\psi_x(u)\Big)^{\T} \frac{1}{\sqrt{n}}\sum_{j=1}^n m(W_j)
\;+\;o_p(1),
\]
where
\[
\partial_\theta\psi_x(u) \;=\; \E\!\big[\,Q_Y(u)\,\nabla_\theta s(Z,x)\,\big]\in\R^{\dim(\theta)}.
\]

Thus the influence function of $\hat\psi_x(u)$ is
\[
\phi_x(W;u)
:=
\bigl(s(Z,x)Q_Y(u)-\psi_x(u)\bigr)
+
\bigl(\partial_\theta\psi_x(u)\bigr)^{\T}m(W).
\]

To simplify this expression, we now derive the gradient terms involved in the term $\partial_\theta\psi_x(u)$.

Remember the following standard formulas for matrix differentials,
\[
dM = -\,M\,(d\Sigma_{ZZ})\,M,\qquad
dA = (dS)^{\T}MS + S^{\T}M\,(dS) + S^{\T}(dM)S,\qquad
d(A^{-1}) = -\,A^{-1}(dA)\,A^{-1}.
\]
Write $s=1+\tilde x^{\T}A^{-1}S^{\T}M\,\widetilde Z$. Its total differential is
\begin{align*}
ds &= (d\tilde x)^{\T}A^{-1}S^{\T}M\widetilde Z
\;+\;\tilde x^{\T} d(A^{-1}) S^{\T}M\widetilde Z
\;+\;\tilde x^{\T} A^{-1} (dS)^{\T}M\widetilde Z
\;+\;\tilde x^{\T} A^{-1} S^{\T} (dM)\widetilde Z
\;+\;\tilde x^{\T} A^{-1} S^{\T} M (d\widetilde Z).
\end{align*}
Then, we derive expressions for the gradients for each block of $\theta$ by rewriting $ds=\mathrm{tr}\big((\nabla s)^{\T} d(\cdot)\big)$.

\begin{enumerate}
\item \textit{Gradient w.r.t.\ $\mu_X$.}
When varying $\mu_X$, we only have $d\tilde x = -\,d\mu_X$,
\[
ds = -\,(d\mu_X)^{\T} A^{-1}S^{\T}M\,\widetilde Z
\;=\; \mathrm{tr}\!\big(\,(-A^{-1}S^{\T}M\widetilde Z)\,(d\mu_X)^{\T}\big)
\;=\; \mathrm{tr}\!\big((\nabla_{\mu_X}s)^{\T} d\mu_X\big),
\]
hence
\[
\boxed{\;\nabla_{\mu_X} s \;=\; -\,A^{-1}S^{\T}M\,\widetilde Z.\;}
\]

\item \textit{Gradient w.r.t.\ $\mu_Z$.}
We only have $d\widetilde Z=-\,d\mu_Z$,
\[
ds \;=\; -\,\tilde x^{\T}A^{-1}S^{\T}M\,(d\mu_Z)
\;=\; \mathrm{tr}\!\big(\,(-M S A^{-1}\tilde x)\,(d\mu_Z)^{\T}\big)
\;=\; \mathrm{tr}\!\big((\nabla_{\mu_Z}s)^{\T} d\mu_Z\big),
\]
hence
\[
\boxed{\;\nabla_{\mu_Z} s \;=\; -\,M S A^{-1}\,\tilde x.\;}
\]

\item \textit{Gradient w.r.t.\ $S=\Sigma_{ZX}$.}
Collect the $dS$ and $(dS)^{\T}$ terms. Using $d(A^{-1})=-A^{-1}(dA)A^{-1}$ and $dA$ above,
\begin{align*}
ds
&= \tilde x^{\T} A^{-1} (dS)^{\T}M\widetilde Z
\;-\;\tilde x^{\T} A^{-1}\Big[(dS)^{\T}MS+S^{\T}M(dS)\Big]A^{-1} S^{\T}M\widetilde Z
\;+\; \text{terms in }dM,\,d\widetilde Z,\,d\tilde x.
\end{align*}
Drop the terms not involving $dS$. Rewrite each contribution as a trace against $dS$,
\begin{align*}
\tilde x^{\T} A^{-1} (dS)^{\T}M\widetilde Z
&= \mathrm{tr}\!\big(M\widetilde Z\,\tilde x^{\T}A^{-1}\,dS\big),\\
-\tilde x^{\T} A^{-1} (dS)^{\T}MS\,A^{-1} S^{\T}M\widetilde Z
&= -\,\mathrm{tr}\!\big(MS A^{-1} S^{\T}M\widetilde Z\,\tilde x^{\T}A^{-1}\,dS\big),\\
-\tilde x^{\T} A^{-1} S^{\T}M(dS)\,A^{-1} S^{\T}M\widetilde Z
&= -\,\mathrm{tr}\!\big((\tilde x^{\T}A^{-1} S^{\T}M\widetilde Z)\, A^{-1} S^{\T}M\,dS\big) \\
&= -\,\mathrm{tr}\!\big(M S A^{-1}\tilde x\,\widetilde Z^{\T} M S A^{-1}\,dS\big),
\end{align*}
where in the last equality we used cyclicity of trace and transpose identities.
Therefore
\[
ds
= \mathrm{tr}\!\Big(
\big[\,M\widetilde Z\,\tilde x^{\T}A^{-1}
\;-\;M S A^{-1} S^{\T}M\widetilde Z\,\tilde x^{\T}A^{-1}
\;-\;M S A^{-1}\tilde x\,\widetilde Z^{\T} M S A^{-1}\,\big]^{\T} dS\Big),
\]
so
\[
\boxed{\;\nabla_{S} s
= M\widetilde Z\,\tilde x^{\T}A^{-1}
- M S A^{-1}\Big(S^{\T}M\widetilde Z\,\tilde x^{\T}A^{-1}+\tilde x\,\widetilde Z^{\T} M S A^{-1}\Big).\;}
\]

 \item \textit{Gradient w.r.t.\ $\Sigma_{ZZ}$}.
Here $dM=-M(d\Sigma_{ZZ})M$ enters both the $S^{\T}(dM)\widetilde Z$ and the $dA$ term inside $d(A^{-1})$. After some algebra (collect $d\Sigma_{ZZ}$ and use $d(A^{-1})$), we get,
\begin{align*}
ds
&= \tilde x^{\T}A^{-1}S^{\T}(dM)\widetilde Z
\;-\;\tilde x^{\T}A^{-1}S^{\T}(dM)S\,A^{-1} S^{\T}M\widetilde Z\\
&= -\,\mathrm{tr}\!\Big( \big[M S A^{-1} S^{\T}M\widetilde Z - M\widetilde Z\big]\,(M S A^{-1}\tilde x)^{\T}\, d\Sigma_{ZZ}\Big),
\end{align*}
hence
\[
\boxed{\;\nabla_{\Sigma_{ZZ}} s
= \big(M S A^{-1} S^{\T}M\widetilde Z - M\widetilde Z\big)\,(M S A^{-1}\tilde x)^{\T}.\;}
\]
\end{enumerate}

\subsubsection{Simplification under correct specification}

Assume now that the linear model is correctly specified,
\[
Q_Y(u) \;=\; \beta_0(u) + \beta_1(u)^{\T}\widetilde X + \eta(u),
\qquad \E[\eta(u)]=0,\quad \E[\widetilde Z\,\eta(u)]=0,
\]
and denote $\beta_0=\beta_0(u)$, $\beta_1=\beta_1(u)$ for brevity.

Then, using $\E[\widetilde Z]=0$, $\E[\eta]=0$, and $\E[\widetilde Z\,Q_Y]=S\beta_1$, we can derive the corresponding expressions for the gradients of $\psi_x$,

\begin{enumerate}
\item \[
\partial_{\mu_X}\psi_x
= \E\big[Q_Y(-A^{-1}S^{\T}M\widetilde Z)\big]
= -A^{-1}S^{\T}M\,(S\beta_1) = -\,\beta_1.
\]

\item \[
\partial_{\mu_Z}\psi_x
= \E\big[Q_Y(-M S A^{-1}\tilde x)\big]
= -\,\beta_0\,M S A^{-1}\tilde x.
\]

\item \begin{align*}
\partial_{S}\psi_x
&= \E\Big[Q_Y\Big\{M\widetilde Z\,\tilde x^{\T}A^{-1}
- M S A^{-1}\big(S^{\T}M\widetilde Z\,\tilde x^{\T}A^{-1}+\tilde x\,\widetilde Z^{\T} M S A^{-1}\big)\Big\}\Big]\\
&= M(S\beta_1)\,\tilde x^{\T}A^{-1}
- M S A^{-1}\Big((S^{\T}MS)\beta_1\,\tilde x^{\T}A^{-1}+\tilde x\,\beta_1^{\T}S^{\T}MS A^{-1}\Big)\\
&= -\,M S A^{-1}\tilde x\,\beta_1^{\T}.
\end{align*}

\item \begin{align*}
\partial_{\Sigma_{ZZ}}\psi_x
& = \E\!\big[Q_Y\big(M S A^{-1} S^{\T}M\widetilde Z - M\widetilde Z\big)\big]\,(M S A^{-1}\tilde x)^{\T} \\
& = \big(M S A^{-1}S^{\T}M(S\beta_1)-M(S\beta_1)\big)(M S A^{-1}\tilde x)^{\T}
= 0.
\end{align*}
\end{enumerate}

\paragraph{Cancellation against estimation error term $sQ_Y-\psi_x$.}
Then
\[
sQ_Y-\psi_x
= \beta_0\,(s-1) \;+\; \beta_1^{\T}\big(s\,\widetilde X-\tilde x\big)\;+\;s\,\eta,
\]
and note $s-1=\tilde x^{\T}A^{-1}S^{\T}M\widetilde Z$, and $s\,\widetilde X=\widetilde X+(\tilde x^{\T}A^{-1}S^{\T}M\widetilde Z)\,\widetilde X$. Thus
\[
sQ_Y-\psi_x
= \underbrace{\beta_0\,\tilde x^{\T}A^{-1}S^{\T}M\widetilde Z}_{(\mathrm{A})}
\;+\;\underbrace{\beta_1^{\T}(\widetilde X-\tilde x)}_{(\mathrm{B})}
\;+\;\underbrace{\beta_1^{\T}\widetilde X\;\tilde x^{\T}A^{-1}S^{\T}M\widetilde Z}_{(\mathrm{C})}
\;+\;\underbrace{s\,\eta}_{(\mathrm{D})}.
\]

Now multiply the derivative blocks by the corresponding components of $m(W)$:
\begin{enumerate}
\item $(\partial_{\mu_Z}\psi_x)^{\T}\widetilde Z = -\,\beta_0\,\tilde x^{\T}A^{-1}S^{\T}M\widetilde Z$ cancels $(\mathrm{A})$.
\item $(\partial_{\mu_X}\psi_x)^{\T}\widetilde X = -\,\beta_1^{\T}\widetilde X$ cancels the $+\,\beta_1^{\T}\widetilde X$ part of $(\mathrm{B})$.
\item Frobenius inner product with the $S$-score:
\[
\big\langle \partial_{S}\psi_x,\; \widetilde Z\widetilde X^{\T}-S\big\rangle_F
= \mathrm{tr}\!\Big(\big[-M S A^{-1}\tilde x\,\beta_1^{\T}\big]^{\T}(\widetilde Z\widetilde X^{\T}-S)\Big)
= -\,\beta_1^{\T}\widetilde X\;\tilde x^{\T}A^{-1}S^{\T}M\widetilde Z
\;+\;\beta_1^{\T}\tilde x,
\]
which cancels $(\mathrm{C})$ and the remaining $-\,\beta_1^{\T}\tilde x$ part of $(\mathrm{B})$.
\item The $\Sigma_{ZZ}$-block is $0$ and contributes nothing.
\end{enumerate}
Therefore \emph{all deterministic} $\beta$-terms cancel, and only the noise $\eta(u)$ remains,
\[
\phi_x(W; u)
\;=\; \big(sQ_Y-\psi_x\big) + \big(\partial_\theta\psi_x\big)^{\T} m(W)
\;=\; s(Z,x)\,\eta(u).
\]
\paragraph{Covariance kernel.}
As a result of the above derivations, the asymptotic covariance kernel of $\hat\psi_x(\cdot)$ is
\[
\Gamma_x(u,u') \;=\; \E\!\big[\,s(Z,x)^2\,\eta(u)\,\eta(u')\,\big].
\]
To keep intercept handling transparent, in this paragraph let $X$ and $Z$ denote the \emph{centered} regressors and instruments, and introduce the augmented stacks
\[
\mathbf{X}:=\begin{bmatrix}1\\ X\end{bmatrix},\qquad
\mathbf{Z}:=\begin{bmatrix}1\\ Z\end{bmatrix},\qquad
\mathbf{x}:=\begin{bmatrix}1\\ \tilde{x}\end{bmatrix}.
\]
Define the (population) 2SLS operator
\[
S_{\mathrm{2SLS}}
\;:=\;
\big(\Sigma_{\mathbf{X}\mathbf{Z}}\;\Sigma_{\mathbf{Z}\mathbf{Z}}^{-1}\;\Sigma_{\mathbf{Z}\mathbf{X}}\big)^{-1}\;
\Sigma_{\mathbf{X}\mathbf{Z}}\;\Sigma_{\mathbf{Z}\mathbf{Z}}^{-1},
\qquad
\Sigma_{\mathbf{X}\mathbf{Z}}:=\E[\mathbf{X}\mathbf{Z}^\top],\ \ 
\Sigma_{\mathbf{Z}\mathbf{Z}}:=\E[\mathbf{Z}\mathbf{Z}^\top]\ (\succ0).
\]
By the Riesz representer identity,
\[
s(Z,x)\;=\;\mathbf{x}^{\T}S_{\mathrm{2SLS}}\,\mathbf{Z}.
\]
Let $a:=S_{\mathrm{2SLS}}^{\T}\mathbf{x}$. Then $s(Z,x)=a^{\T}\mathbf{Z}$ and
\[
\Gamma_x(u,u')
\;=\; \E\!\big[(a^{\T}\mathbf{Z})^2\,\eta(u)\eta(u')\big]
\;=\; a^{\T}\,\underbrace{\E\!\big[\mathbf{Z}\mathbf{Z}^{\T}\eta(u)\eta(u')\big]}_{=:J(u,u')}\,a
\;=\; \mathbf{x}^{\T} S_{\mathrm{2SLS}}\,J(u,u')\,S_{\mathrm{2SLS}}^{\T} \mathbf{x},
\]
which is exactly the $S\,J(u,u')\,S^{\T}$ covariance in \CLP, pushed through $\mathbf{x}$. In the just-identified case
($\dim\mathbf{Z}=\dim\mathbf{X}$ and $\Sigma_{\mathbf{Z}\mathbf{X}}$ invertible),
\[
S_{\mathrm{2SLS}}=(\Sigma_{\mathbf{Z}\mathbf{X}})^{-1}
\quad\Longrightarrow\quad
\Gamma_x(u,u') \;=\; \mathbf{x}^{\T}(\Sigma_{\mathbf{Z}\mathbf{X}})^{-1}\,J(u,u')\,(\Sigma_{\mathbf{Z}\mathbf{X}}^{\T})^{-1}\mathbf{x}.
\]
\paragraph{Misspecification.}
Under misspecification, let $\beta^{\mathrm{unc}}(u)$ denote the population 2SLS
coefficient functions and recall the definition of the pseudo-residual
\[
\xi(u):=Q_Y(u)-\mathbf X^{\T}\beta^{\mathrm{unc}}(u).
\]
Then $\hat\psi_x(u)$ continues to admit the linear representation
\[
\sqrt n\bigl(\hat\psi_x(u)-\psi_x(u)\bigr)
=
\frac{1}{\sqrt n}\sum_{j=1}^n \phi_x(W_j;u)+o_p(1),
\]
with
\[
\phi_x(W;u)
=
\bigl(s(Z,x)Q_Y(u)-\psi_x(u)\bigr)
+
\bigl(\partial_\theta \psi_x(u)\bigr)^{\T}m(W).
\]
In the overidentified case, the derivative correction term does not in general
collapse to $s(Z,x)\xi(u)$. Accordingly, the asymptotic covariance kernel is
\[
\Gamma_x(u,u')
=
E\!\left[\phi_x(W;u)\phi_x(W;u')\right].
\]
When the model is correctly specified, this expression reduces to
$E[s(Z,x)^2\eta(u)\eta(u')]$ as shown above.

\end{document}